\newcolumntype{x}[1]{>{\centering\arraybackslash}p{#1}}
\newtheorem{thm}{Theorem}
\newtheorem*{thm*}{Theorem}
\newtheorem*{prop*}{Proposition}
\newtheorem{lemma}[thm]{Lemma}
\newtheorem*{lemma*}{Lemma}
\newtheorem{lemma_app}{Lemma}
\newtheorem*{cor*}{Corollary}
\newtheorem{cj}[thm]{Conjecture}
\newtheorem*{cj*}{Conjecture}
\newtheorem*{Def*}{Definition}
\newtheorem{definition_app}{Definition}
\newtheorem{remark}{Remark}
\def\thmhead@plain#1#2#3{%
  \thmname{#1}\thmnumber{\@ifnotempty{#1}{ }\@upn{#2}}%
  \thmnote{ {\the\thm@notefont#3}}}
\let\thmhead\thmhead@plain
\theoremstyle{definition}
\newtheorem{ex}[thm]{Example}
\newenvironment{manualthm}[1]{%
  \manualthminner \it
}{\endmanualthminner}
\newcommand{\bb}{\begin{equation}\begin{aligned}\hspace{0pt}}
\newcommand{\bbb}{\begin{equation*}\begin{aligned}}
\newcommand{\ee}{\end{aligned}\end{equation}}
\newcommand{\eee}{\end{aligned}\end{equation*}}
\newcommand*{\coloneqq}{\mathrel{\vcenter{\baselineskip0.5ex \lineskiplimit0pt \hbox{\scriptsize.}\hbox{\scriptsize.}}} =}
\newcommand{\ketbra}[1]{\ket{#1}\!\!\bra{#1}}
\newcommand{\ketbraa}[2]{\ket{#1}\!\!\bra{#2}}
\newcommand{\ketbrasub}[1]{\ket{#1}\!\bra{#1}}
\newcommand{\ketbraasub}[2]{\ket{#1}\!\bra{#2}}
\newcommand{\R}{\mathds{R}}
\newcommand{\N}{\mathds{N}}
\DeclareMathOperator{\Tr}{Tr}
\DeclareMathAlphabet{\pazocal}{OMS}{zplm}{m}{n}
\DeclareMathOperator{\Id}{Id}
\newcommand{\HH}{\pazocal{H}}
\newcommand{\lsmatrix}{\left(\begin{smallmatrix}}
\newcommand{\rsmatrix}{\end{smallmatrix}\right)}
\newcommand*\rel@kern[1]{\kern#1\dimexpr\macc@kerna}
\newcommand*\widebar[1]{%
  \begingroup
  \def\mathaccent##1##2{%
    \rel@kern{0.8}%
    \overline{\rel@kern{-0.8}\macc@nucleus\rel@kern{0.2}}%
    \rel@kern{-0.2}%
  }%
  \macc@depth\@ne
  \let\math@bgroup\@empty \let\math@egroup\macc@set@skewchar
  \mathsurround\z@ \frozen@everymath{\mathgroup\macc@group\relax}%
  \macc@set@skewchar\relax
  \let\mathaccentV\macc@nested@a
  \macc@nested@a\relax111{#1}%
  \endgroup
}
\tikzset{meter/.append style={draw, inner sep=10, rectangle, font=\vphantom{A}, minimum width=30, line width=.8, path picture={\draw[black] ([shift={(.1,.3)}]path picture bounding box.south west) to[bend left=50] ([shift={(-.1,.3)}]path picture bounding box.south east);\draw[black,-latex] ([shift={(0,.1)}]path picture bounding box.south) -- ([shift={(.3,-.1)}]path picture bounding box.north);}}}
\tikzset{roundnode/.append style={circle, draw=black, fill=gray!20, thick, minimum size=10mm}}
\tikzset{squarenode/.style={rectangle, draw=black, fill=none, thick, minimum size=10mm}}
\definecolor{Blues5seq1}{RGB}{239,243,255}
\definecolor{Blues5seq2}{RGB}{189,215,231}
\definecolor{Blues5seq3}{RGB}{107,174,214}
\definecolor{Blues5seq4}{RGB}{49,130,189}
\definecolor{Blues5seq5}{RGB}{8,81,156}
\definecolor{Greens5seq1}{RGB}{237,248,233}
\definecolor{Greens5seq2}{RGB}{186,228,179}
\definecolor{Greens5seq3}{RGB}{116,196,118}
\definecolor{Greens5seq4}{RGB}{49,163,84}
\definecolor{Greens5seq5}{RGB}{0,109,44}
\definecolor{Reds5seq1}{RGB}{254,229,217}
\definecolor{Reds5seq2}{RGB}{252,174,145}
\definecolor{Reds5seq3}{RGB}{251,106,74}
\definecolor{Reds5seq4}{RGB}{222,45,38}
\definecolor{Reds5seq5}{RGB}{165,15,21}
\newtheorem{definition}{Definition}
\pgfplotsset{width=10cm,compat=1.9}
\begin{document}
\title{{Quantum optical communication in the presence of strong attenuation noise}}

\author{Francesco Anna Mele}
\email{francesco.mele@sns.it}
\affiliation{NEST, Scuola Normale Superiore and Istituto Nanoscienze, Consiglio Nazionale delle Ricerche, Piazza dei Cavalieri 7, IT-56126 Pisa, Italy}

\author{Ludovico Lami}
\email{ludovico.lami@gmail.com}
\affiliation{Institut f\"{u}r Theoretische Physik und IQST, Universit\"{a}t Ulm, Albert-Einstein-Allee 11, D-89069 Ulm, Germany}

\author{Vittorio Giovannetti}
\email{vittorio.giovannetti@sns.it}
\affiliation{NEST, Scuola Normale Superiore and Istituto Nanoscienze, Consiglio Nazionale delle Ricerche, Piazza dei Cavalieri 7, IT-56126 Pisa, Italy}

\begin{abstract}

Is quantum communication possible over an optical fibre with transmissivity $\lambda\leq 1/2$? The answer is well known to be negative if the environment with which the incoming signal interacts is initialised in a thermal state. However, in [PRL 125:110504, 2020] the quantum capacity was found to be always bounded away from zero for all $\lambda>0$, a phenomenon dubbed {`die-hard quantum communication' (D-HQCOM)}, provided that the initial environment state can be chosen appropriately --- depending on $\lambda$. Here we show an even stronger version of {D-HQCOM} in the context of entanglement-assisted classical communication: entanglement assistance and control of the environment enable communication with performance at least equal to that of the ideal case of absence of noise, even if $\lambda>0$ is arbitrarily small. These two phenomena of {D-HQCOM} have technological potential provided that we are able to control the environment. How can we achieve this? Our second main result answers this question. Here we provide a fully consistent protocol to activate the phenomena of {D-HQCOM} without directly accessing the environment state. This is done by sending over the channel `trigger signals', i.e.~signals which do not encode information, prior to the actual communication, with the goal of modifying the environment in an advantageous way. This is possible thanks to the memory effects which arise when the sender feeds signals separated by a sufficiently short temporal interval. Our results may offer a concrete scheme to communicate across arbitrarily long optical fibres, without using quantum repeaters. As a by-product of our analysis, we derive a simple Kraus representation of the thermal attenuator exploiting the associated Lindblad master equation.
\end{abstract}


\maketitle

\section{Introduction} 

Transmitting qubits over long distances is crucial for building a global \emph{quantum internet}, which will have an extraordinary impact on science and technology~\cite{quantum_internet_Wehner,quantum_internet_kimble}. Examples of applications of a global quantum internet are the possibility for any two parties on Earth to have unconditionally secure communication, shared entanglement, and clock synchronisation~\cite{Clock_qinternet}; the exploitation of distributed quantum computing~\cite{Distributed_QC}; the improvement of telescope observations~\cite{Telescope_qinternet}; and the possibility to access remote quantum computers in a private way~\cite{secure_access_qinternet}.   

Optical fibres have not been able to transmit qubits over long distances, even with the help of quantum repeaters~\cite{repeaters,Munro2015,quantum_repeaters_linearopt}.
It is generally maintained that this impossibility comes not only from technological limitations, but also from more profound theoretical reasons, most notably having to do with the facts that (a)~the quantum capacity of the thermal attenuator vanishes for transmissivities lower than $1/2$~\cite{Caruso_weak}; (b)~the two-way quantum capacity of the thermal attenuator vanishes for sufficiently low transmissivities~\cite{PLOB}.  

These theoretical conclusions rest on the approximation that optical fibres are memoryless~\cite{memory-review} --- i.e.\ that the noise affecting the transmission acts identically and independently on each signal. However, in~\cite{Banaszek-Memory,Ball-Memory} it has been experimentally observed that optical fibres do not always behave in a way that complies with this assumption. Hence, in principle, non-memoryless optical fibres may allow the transmission of qubits over long distances. In this paper, we prove that this is indeed the case.

In particular, we show that it is in principle possible to transmit qubits across arbitrarily long optical fibres without using repeaters. We do this by considering a more realistic model of optical fibres, where we relax the memoryless approximation and we take into account memory effects.

This paper builds on the framework of quantum communication theory. 
The goal of quantum communication theory is to transfer information reliably and efficiently from a sender Alice to a receiver Bob across a fixed noisy communication line~\cite{NC,MARK,HOLEVO-CHANNELS-2}. Since any signal that Alice feeds into the communication line can be corrupted by the noise, Alice and Bob need to apply suitable protocols in order to communicate reliably. The typical protocol is composed of three phases:
\begin{itemize}
    \item Encoding part: Alice encodes the (classical or quantum) information she wishes to send to Bob into a quantum state of a large number of systems, each of which is then sent across the communication line. 
    \item Noise part: The noise affects the signal travelling from Alice to Bob.
    \item Decoding part: {Bob applies a decoding procedure on the received (corrupted) signals in order to recover the information Alice wished to send.} 
\end{itemize}
In technological implementations, these protocols often employ continuous variable quantum systems~\cite{BUCCO,Ralph1999,Braunstein-review}, e.g.\ photonic signals propagating across optical fibres.

The transmissivity of an optical fibre is defined as the fraction of input energy that reaches the output of the fibre. Typically, the transmissivity $\lambda$ of an optical fibre decreases exponentially with its length $L$:
\begin{equation}
    \lambda=10^{-\gamma \frac{L}{\SI{10}{\km}}}\,,
    \label{lambda_decay}
\end{equation}
where typically $\gamma\simeq 0.2$. Nowadays the absolute record is $\gamma\simeq 0.14$~\cite{Tamura2018, Li2020}.

A memoryless optical fibre {with} 
transmissivity $\lambda$ is usually schematised as a thermal attenuator $\Phi_{\lambda,\tau_\nu}$.  A thermal attenuator $\Phi_{\lambda,\tau_\nu}$ is a bosonic quantum channel that acts by mixing the input state with an environment initialised in a thermal state $\tau_\nu$ through a beam splitter {(BS)} of transmissivity $\lambda\in[0,1]$, where $\nu\ge0$ denotes the mean photon number of the environment. 

The quantum capacity is a figure of merit that measures the ability of a given quantum channel in transmitting reliably qubits. It turns out that the quantum capacity of the thermal attenuator $Q(\Phi_{\lambda,\tau_\nu})$ vanishes for all $\lambda\le 1/2$ and $\nu\ge0$~\cite{Caruso_weak}. Going back to~\eqref{lambda_decay}, this is seen to mean that it is impossible to transmit qubits across memoryless optical fibres which are longer than \SI{15}{\km} or at most \SI{21.5}{\km}.

If Alice and Bob have the possibility to communicate classically, then the relevant figure of merit for transmitting qubits reliably is the so-called two-way assisted quantum capacity $Q_2(\cdot)$~\cite{PLOB}. It has been shown that $Q_2(\Phi_{\lambda,\tau_\nu})$ drops to zero for $\lambda$ sufficiently small if $\nu>0$, otherwise if $\nu=0$ it tends to zero for $\lambda$ tending to zero~\cite{PLOB}. This means that, even if there is assistance by unlimited two-way classical communication, qubits can not be transmitted by sufficiently long optical fibres which are memoryless. However, we show that this is not the case if one exploit memory effects in optical fibres.

The memoryless approximation holds when the temporal interval between uses of the channel is sufficiently long, so that the channel environment can return to its initial thermal state $\tau_\nu$ between subsequent uses~\cite{memory-review,Dynamical-Model}. In this case, the quantum states of each input signal are affected by the same quantum channel $\Phi_{\lambda,\tau_\nu}$. On the contrary, if the time intervals are sufficiently short, the environment has not enough time to reset into the initial thermal state $\tau_\nu$. Therefore, in this case, any signal interacts with an environment state $\sigma$ which depends on the signals that have been sent previously, i.e.\ there are memory effects. This leads to the definition of \emph{general attenuator} $\Phi_{\lambda,\sigma}$. The latter acts as a thermal attenuator, but the environment is initialised in an arbitrary state $\sigma$ (not necessarily thermal).  

In~\cite{die-hard} a curious phenomenon, dubbed `die-hard quantum communication' {(D-HQCOM)}, was uncovered: for all $\lambda>0$ it is possible to find a suitable environment state $\sigma=\sigma(\lambda)$ such that the quantum capacity of the general attenuator $\Phi_{\lambda,\sigma(\lambda)}$ is larger than a positive constant $c>0$, in formula $Q(\Phi_{\lambda,\sigma(\lambda)})>c$. In this paper, we 
endow the phenomenon of {D-HQCOM} with a technological relevance, and we do so by exploiting memory effects in optical fibres. To this end, we 
design a 
`noise attenuation protocol' that effectively turns an optical fibre with transmissivity $\lambda$ and arbitrary (typically thermal) environment state into a general attenuator $\Phi_{\lambda,\sigma(\lambda)}$, where --- crucially --- $\sigma(\lambda)$ activates {D-HQCOM}~\cite{die-hard}. 
This leads to our main result: within our model and under our assumptions, it is possible for \emph{optical fibres to transmit qubits at a constant rate over arbitrarily long distances}, i.e.\ for arbitrarily low transmissivities.


The basic idea of the noise attenuation protocol is that right before sending the actual information-carrying signals Alice will flash a trigger signal into the fibre. This serves to alter the effective environment state of the channel, turning it from $\tau_\nu$ into some $\sigma(\lambda)$. The net effect is that the information-carrying signal is affected not by the channel $\Phi_{\lambda, \tau_\nu}$, but rather by $\Phi_{\lambda,\sigma(\lambda)}$.

Our second main result concerns a generalisation of the phenomenon of {D-HQCOM} to the context of entanglement-assisted communication. We show that for arbitrarily low non-zero values of $\lambda$, there exists a suitable state $\sigma$ such that the entanglement-assistance allows $\Phi_{\lambda,\sigma}$ to reliably transfer: 
\begin{itemize}
    \item qubits with performance of the same order of that achievable by the identity channel (i.e.~the noiseless channel) in unassisted communication;
    \item bits with even better performance than that achievable by the identity channel in unassisted communication. In other words, \emph{entanglement assistance and the ability to control the environment state allows one to completely neutralise the effect of the noise in classical communication}. 
\end{itemize}
Once again, these results are not purely theoretical: on the contrary, we show that the noise attenuation protocol allows one to obtain the environment states $\sigma$ which activates these enhanced entanglement-assisted performances.

The present paper motivates the theoretical and experimental study of quantum communication 
{over} non-memoryless communication channels, 
{laying the stress on the fact} that memory effects can constitute a resource that improves communication performance. Recently, it has been shown that memory effects can also improve the performance of a quantum relay~\cite{pirandola_relay}.

The present paper also serves as the companion to the paper~\cite{Die-Hard-2-PRL}, by providing  proofs of the results stated in~\cite{Die-Hard-2-PRL} and additional developments.

The structure of the paper is as follows. In Sec.~\ref{sec_notation} we review definitions and preliminary results relevant for the rest of the paper. In Sec.~\ref{sec_Cea} we discuss the performance of entanglement-assisted communication when the environment state can be chosen by Alice and Bob. {In Sec.~\ref{sec_Cea} we also derive a simple Kraus representation of the thermal attenuator (see Theorem~\ref{TeoKraus}), which can allow one to obtain simple expressions for the action of the thermal attenuator on generic operators. This is done by solving the Lindblad master equation associated with the thermal attenuator (see Sec.~\ref{sec_Cea} for details).} 
In Sec.~\ref{sec_control}, first we introduce a realistic model of quantum communication that takes into account memory effects, and secondly we present the noise attenuation protocol and discuss its implications.

\section{Notation and preliminaries}\label{sec_notation}
A positive semi-definite trace class operators acting on a Hilbert space $\HH$ is called a density operator, or a quantum state, if it has unit trace. The set of density operators on $\HH$ will be denoted by $\mathfrak{S}(\HH)$. 
The trace norm of a bounded linear operator $\Theta$ is defined by $$\|\Theta\|_1\coloneqq \Tr\sqrt{\Theta^\dagger\Theta}\,.$$
Given $\rho_1,\rho_2\in\mathfrak{S}(\HH)$, the quantity $\frac{1}{2}\|\rho_1-\rho_2\|_1$ is called the trace distance between $\rho_1$ and $\rho_2$. It has an operational meaning that captures the concept of distinguishability of $\rho_1$ and $\rho_2$~\cite{HELSTROM,Holevo1976}.
Another way to measure how close two states $\rho_1$ and $\rho_2$ are is provided by the fidelity. This is defined by
\begin{equation}
    {F\left(\rho_1,\rho_2\right)\coloneqq \|\sqrt{\rho_1}\sqrt{\rho_2}\|_1\,.}
\end{equation}
For any $\rho_1$ and $\rho_2$, the following inequality holds~\cite{Fuchs1999}:
\begin{equation}\label{fidel_trace_ineq}
	1-F(\rho_1,\rho_2)\le \frac{1}{2}\|\rho_1-\rho_2\|_1 \le \sqrt{1-F^2(\rho_1,\rho_2)}\,.
\end{equation}
The quantum carrier we deal with is a single-mode of electromagnetic radiation with definite frequency and polarization, which is associated with the Hilbert space $\HH_S\coloneqq L^2(\mathbb{R})$. The quantum mechanical theory describing the quantum carrier is the same as the quantum harmonic oscillator.  A {BS} of transmissivity $\lambda\in[0,1]$ acting on two single-mode systems $S_1$ and $S_2$ is represented by the operator
\begin{equation}
	U_{\lambda}^{(S_1 S_2)}\coloneqq\exp\left[\arccos\sqrt{\lambda}\left(a_1^\dagger a_2-a_1a_2^\dagger\right)\right]\,,
\end{equation}
where $a_1$ and $a_2$ are the \emph{annihilation} operators on $S_1$ and $S_2$, respectively.
Let us consider a single-mode system $\HH_E\coloneqq L^2(\mathbb{R})$, which we call \emph{environment}, whose annihilation operator is denoted by $b$. Fixed $\lambda\in[0,1]$ and $\sigma\in\mathfrak{S}(\HH_E)$, a \emph{general attenuator} $\Phi_{\lambda,\sigma}:\mathfrak{S}(\HH_S)\mapsto\mathfrak{S}(\HH_S)$ is a quantum channel defined by 
	\begin{equation}\label{def_genatt}
	\Phi_{\lambda,\sigma}(\rho)\coloneqq\Tr_E\left[U_\lambda^{(SE)}  \rho\otimes\sigma  \left(U_\lambda^{(SE)}\right)^\dagger\right]\,.
	\end{equation}
If $\lambda=0$ the channel is completely noisy, since $\Phi_{0,\sigma}(\rho)\equiv \sigma$. If $\lambda=1$ the channel is noiseless, since $\Phi_{1,\sigma}(\rho)\equiv \Id(\rho)\equiv\rho$. The \emph{thermal state} $\tau_\nu$ with mean photon number $\nu\ge0$ is defined as
\begin{equation}\label{tau}
	\tau_{\nu}\coloneqq\frac{1}{\nu+1}\sum_{n=0}^\infty \left( \frac{\nu}{\nu+1}\right)^n \ketbra{n}\,,
\end{equation}
where $\ket{n}\coloneqq (n!)^{-1/2} (a^\dag)^n \ket{0}$ is the $n$-th \emph{Fock state} and $\ket{0}$ is the \emph{vacuum}. Thermal states are important since they maximise the entropy among all states with a fixed mean photon number, as established by Lemma~\ref{maxthermstate}~\cite{max_entropy_therm}. 
\begin{lemma}\label{maxthermstate}
	For all $\nu>0$ it holds that
	\bb
		&\max\left\{S(\rho)\text{ : }\rho\in\mathfrak{S}(\HH_S^{\otimes n}) \text{, }\Tr\left[\rho\sum_{i=1}^n a_i^\dagger a_i\right]\le \nu\right\} \\&\quad= S\left(\tau_{\nu/n}^{\otimes n}\right)=ng\left(\frac{\nu}{n}\right)\,.
	\ee
	where 
	\begin{equation}\label{bosonicent}
	g(\nu)\coloneqq (\nu+1)\log_2(\nu+1) - \nu\log_2 \nu 
	\end{equation}
	is a monotonically increasing function called the bosonic entropy.
\end{lemma}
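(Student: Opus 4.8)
The plan is to prove the bound via the non-negativity of the quantum relative entropy (Klein's inequality), using the candidate optimiser $\tau_{\nu/n}^{\otimes n}$ itself as the reference state. First I would record the two facts that make $\tau_{\nu/n}^{\otimes n}$ a legitimate candidate: it obeys the energy constraint with equality, since $\Tr\big[\tau_{\nu/n}^{\otimes n}\sum_{i=1}^n a_i^\dagger a_i\big]=n\cdot(\nu/n)=\nu$, and its entropy is $S(\tau_{\nu/n}^{\otimes n})=n\,S(\tau_{\nu/n})=n\,g(\nu/n)$. The last equality follows from a direct computation: since by \eqref{tau} the operator $\tau_\mu$ is diagonal in the Fock basis, one has $\log_2\tau_\mu=-\log_2(\mu+1)\,\id+\log_2\!\big(\tfrac{\mu}{\mu+1}\big)\,a^\dagger a$, whence $-\Tr[\tau_\mu\log_2\tau_\mu]$ collapses to $(\mu+1)\log_2(\mu+1)-\mu\log_2\mu=g(\mu)$. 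This already shows the maximum is at least $n\,g(\nu/n)$.

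For the matching upper bound, let $\rho\in\mathfrak{S}(\HH_S^{\otimes n})$ be any state whose total number operator $N\coloneqq\sum_{i=1}^n a_i^\dagger a_i$ satisfies $\Tr[\rho N]\le\nu$, and set $\mu\coloneqq\nu/n$. The key observation is that $-\Tr\big[\rho\log_2\tau_\mu^{\otimes n}\big]$ is an affine, strictly increasing function of the mean energy: summing the explicit form of $\log_2\tau_\mu$ additively over the $n$ modes gives
\begin{equation}
-\Tr\big[\rho\log_2\tau_\mu^{\otimes n}\big]=n\log_2(\mu+1)-\log_2\!\Big(\tfrac{\mu}{\mu+1}\Big)\Tr[\rho N]\,,
\end{equation}
and the coefficient $-\log_2(\mu/(\mu+1))>0$ because $\mu/(\mu+1)<1$. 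In particular this quantity is finite (the constraint guarantees $\Tr[\rho N]<\infty$), and evaluating it at the constraint boundary $\Tr[\rho N]=\nu$ reproduces exactly $n\,g(\nu/n)$, by the same algebraic identity already verified for the optimiser.

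It then remains to convert the finiteness of this cross term into a bound on $S(\rho)$. I would invoke the non-negativity of the quantum relative entropy $D(\rho\|\sigma)\coloneqq\Tr[\rho(\log_2\rho-\log_2\sigma)]\ge 0$ with $\sigma=\tau_\mu^{\otimes n}$. Since $\Tr[\rho\log_2\tau_\mu^{\otimes n}]$ is finite, the inequality $D\ge 0$ rearranges to $S(\rho)=-\Tr[\rho\log_2\rho]\le-\Tr[\rho\log_2\tau_\mu^{\otimes n}]\le n\,g(\nu/n)$, where the final step uses the monotonicity in $\Tr[\rho N]$ together with $\Tr[\rho N]\le\nu$. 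Combined with the achievability established above, this yields the claimed equality.

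The main obstacle is not algebraic but analytic: in the infinite-dimensional setting $S(\rho)$ can a priori be $+\infty$ and $\log_2\tau_\mu$ is an unbounded operator, so the formal manipulation $D=-S(\rho)-\Tr[\rho\log_2\sigma]$ requires care. The clean way around this is to observe that the argument is one-sided. I only need that $D(\rho\|\tau_\mu^{\otimes n})\ge 0$, which holds for \emph{all} density operators (Klein's inequality, together with lower semicontinuity of the relative entropy), and that the subtracted term is finite; these two facts already force $-\Tr[\rho\log_2\rho]$ to be finite and bounded as claimed, so that $S(\rho)<\infty$ and the estimate both follow without assuming finiteness of $S(\rho)$ in advance. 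If one prefers to sidestep infinite-dimensional relative-entropy technicalities entirely, the same bound can be recovered by truncating each mode to the span of the first $d$ Fock states, applying the elementary finite-dimensional Gibbs variational principle, and letting $d\to\infty$ using lower semicontinuity of the von Neumann entropy.
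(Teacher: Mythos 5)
Your proof is correct. The paper does not actually prove this lemma --- it states it as a known result and cites the literature for it --- and your argument is precisely the standard one (the Gibbs variational principle: non-negativity of $D(\rho\|\tau_{\nu/n}^{\otimes n})$ plus the observation that $-\Tr[\rho\log_2\tau_{\nu/n}^{\otimes n}]$ is affine and increasing in the mean energy), with the infinite-dimensional subtleties about possibly infinite entropy and the unbounded operator $\log_2\tau_\mu$ handled appropriately via the one-sided reading of Klein's inequality.
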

The \emph{thermal attenuator} $\Phi_{\lambda,\tau_\nu}$, i.e.\ the general attenuator with a thermal environment state, is a well-studied example of general attenuator. The thermal attenuator with $\nu=0$, i.e.~$\Phi_{\lambda,\ketbrasub{0}}$, is called \emph{pure loss channel}. 

Every quantum channel $\Phi:\mathfrak{S}(\HH_S)\mapsto\mathfrak{S}(\HH_S)$ can be written in \emph{Stinespring representation}~\cite{Stinespring}, i.e.\ there exist an `environmental' Hilbert space $\HH_{E'}$, a state $\ket{0}_{E'}\in\HH_{E'}$, and a unitary operator $U^{(SE')}$ on $\HH_S\otimes\HH_{E'}$ such that for all $\rho\in\mathfrak{S}(\HH_S)$ it holds that
\begin{equation}
    \Phi(\rho)=\Tr_{E'}\left[U^{(SE')}\rho\otimes\ketbra{0}_{E'}\left( U^{(SE')} \right)^\dagger\right]\,.
\end{equation}
Hence, the output of a quantum channel that acts on a system $S$ can be always regarded as the state of $S$ after the interaction between $S$ and a fixed pure state of the environment $E'$. We can consider the state of $E'$ after the interaction. This defines the so-called \emph{complementary channel} of $\Phi$:
\begin{equation}
    \tilde{\Phi}(\rho)\coloneqq\Tr_{S}\left[U^{(SE')}\rho\otimes\ketbra{0}_{E'}\left( U^{(SE')} \right)^\dagger\right]\,.
\end{equation}
The Stinespring representation and the associated complementary channel are uniquely determined up to an isometry on the environmental system.

Every quantum channel $\Phi$ can also be written in \emph{physical representation}. The latter is a generalisation of the Stinespring representation where the environment is not necessarily in a pure state. In this case, the environment state after its interaction with the system leads to the definition of the so-called \emph{weak complementary channel} of $\Phi$~\cite{Caruso_weak}. For example, consider the case of the general attenuator $\Phi_{\lambda,\sigma}$.~\eqref{def_genatt} provides a physical representation of $\Phi_{\lambda,\sigma}$. Hence, the map $\tilde{\Phi}_{\lambda,\sigma}^{\text{wc}}:\mathfrak{S}(\HH_S)\mapsto\mathfrak{S}(\HH_E)$, defined as 
\begin{equation}
	\tilde{\Phi}_{\lambda,\sigma}^{\text{wc}}(\rho)\coloneqq\Tr_{S} \left[U_\lambda^{(SE)}  \rho\otimes\sigma  \left(U_\lambda^{(SE)}\right)^\dagger\right]\,,
\end{equation}
is a weak complementary channel of $\Phi_{\lambda,\sigma}$. The state $\tilde{\Phi}_{\lambda,\sigma}^{\text{wc}}(\rho)$ is the environment state after the interaction between the system initialised in $\rho$ and the environment initialised in $\sigma$ trough the {BS} unitary $U_\lambda^{(SE)}$.

The \emph{classical capacity} $C(\Phi)$ (resp.\ \emph{quantum capacity} $Q(\Phi)$) of a quantum channel $\Phi$ is the maximum number of bits (resp.\ qubits) that can be reliably transferred through $\Phi$ per use of $\Phi$~\cite{Bennett-EA, Bennett2002}. In addition, if Alice and Bob can exploit an unlimited number of pre-shared entangled states in the design of their communication protocols, then the relevant figure of merit is the \emph{entanglement-assisted} classical (resp.\ quantum) capacity $C_{\text{ea}}(\Phi)$ (resp.\ $Q_{\text{ea}}(\Phi)$)~\cite{Bennett-EA,Bennett2002}. Since, in practise, a protocol can not consume infinite energy, the mean total input photon number per use of $\Phi$  has to be upper bounded by a parameter $N$. This leads to the notion of \emph{energy-constrained} (EC) capacities $C(\Phi,N)$, $Q\left(\Phi,N\right)$, $C_{\text{ea}}\left(\Phi,N\right)$, and $Q_{\text{ea}}\left(\Phi,N\right)$~\cite{Mark-energy-constrained, Holevo2004,Holevo-energy-constrained,Holevo2013}.
\emph{Super-dense coding}~\cite{dense-coding} and \emph{quantum teleportation}~\cite{teleportation} protocols guarantee that $C_{\text{ea}}\left(\Phi,N\right)=2Q_{\text{ea}}\left(\Phi,N\right)$. 
The \emph{EC entanglement-assisted classical capacity} can be written as~\cite{entanglement-assisted,Bennett2002, Holevo2013}:
\begin{equation}\label{Cea}
C_{\text{ea}}\left(\Phi,N\right)=\max_{\rho\in\mathfrak{S}(\HH_S):\\ \Tr\left[\rho\,a^\dagger a \right]\le N}\left[S(\rho)+I_{\text{coh}}\left(\Phi,\rho\right)\right]
\end{equation}	
where $S(\rho)\coloneqq -\Tr[\rho\log_2\rho]$ is the \emph{von Neumann entropy}, 
\begin{equation}
    I_{\text{coh}}\left(\Phi,\rho\right)\coloneqq S\left(\Phi(\rho)\right)-S\left(\Phi\otimes \Id_P(\ketbra{\psi})\right)
\end{equation}
is the \emph{coherent information} with $\ket{\psi}\in\HH_S\otimes\HH_{P}$ being a \emph{purification} of $\rho$~\cite{NC,MARK,HOLEVO-CHANNELS-2}, $\HH_P$ being the purifying Hilbert space, and $\Id_P$ being the identity superoperator on $\HH_P$.
It turns out that the coherent information can be rewritten in terms of the complementary channel as
\begin{equation}
    I_{\text{coh}}\left(\Phi,\rho\right)\coloneqq S\left(\Phi(\rho)\right)-S\left(\tilde{\Phi}(\rho)\right)\,.
\end{equation}

Moreover, the \emph{EC quantum capacity} is given by~\cite{Lloyd-S-D, L-Shor-D, L-S-Devetak, HOLEVO-CHANNELS,Mark-energy-constrained}
\begin{equation}\label{lsdth2}
Q\left(\Phi,N\right)=\lim\limits_{n\rightarrow\infty}\frac{1}{n}Q_1\left(\Phi^{\otimes n},nN \right)\ge Q_1\left(\Phi,N\right)\,,
\end{equation}
where 
\begin{equation*}
Q_1\left(\Phi^{\otimes n},N\right)\coloneqq\max_{\rho\in\mathfrak{S}(\HH_S^{\otimes n}): \Tr\left[\rho \sum_{i=1}^n (a_i)^\dagger a_i \right]\le N}I_{\text{coh}}\left(\Phi^{\otimes n},\rho\right)\text{.}
\end{equation*}
The noiseless channel is the identity superoperator $\Id:\mathfrak{S}(\HH_S)\mapsto\mathfrak{S}(\HH_S)$, defined as $\Id(\rho)\coloneqq\rho$. The capacities of the noiseless channel are
\begin{equation}\label{capId}
C\left(\Id,N\right)=Q\left(\Id,N\right)=C_{\text{ea}}\left(\Id,N\right)/2=g(N)\,,
\end{equation}
as it can be shown by exploiting Lemma~\ref{maxthermstate}. Let us review some known cases where the capacities have been determined exactly. For the pure loss channel, it holds that~\cite{LossyECEAC1,LossyECEAC2}
\bb\label{pureloss}
C_{\text{ea}}\left(\Phi_{\lambda,\ketbrasub{0}},N\right)&=I\left(\Phi_{\lambda,\ketbrasub{0}},\tau_N\right) \\&=g(N)+g(\lambda N)-g((1-\lambda)N)\,,
\ee
and~\cite{holwer, Caruso2006, Wolf2007, Mark2012, Mark-energy-constrained, Noh2019}
\begin{equation*}\label{purelossq}
Q\left(\Phi_{\lambda,\ketbrasub{0}},N\right)=\begin{cases}
g(\lambda N)-g((1-\lambda)N) & \text{if $\lambda\ge 1/2$,} \\
0 & \text{otherwise.}
\end{cases}
\end{equation*}
The EC classical capacity of the thermal attenuator is~\cite{Giova_classical_cap} 
\bb
C(\Phi_{\lambda,\tau_\nu},N)=g\left(\lambda N+(1-\lambda)\nu\right)-g\left((1-\lambda)\nu\right)\,,
\ee
while its EC entanglement-assisted classical capacity
is~\cite{holwer}
\bb\label{func}
C_{\text{ea}}\left(\Phi_{\lambda,\tau_{\nu}},N\right)  &=
 g(N)+g(N')-g\left(\frac{D+N'-N-1}{2}\right) \\&-g\left(\frac{D-N'+N-1}{2}\right)\,,
\ee
where 
\bb
N'&\coloneqq \lambda N+(1-\lambda)\nu\,,\\
D&\coloneqq\sqrt{\left(N+N'+1\right)^2-4\lambda N(N+1)}\,.
\ee
A closed formula of $Q\left(\Phi_{\lambda,\tau_{\nu}},N\right)$ has not yet been discovered for the case in which $\lambda>1/2$ and $\nu>0$, although sharp bounds are known~\cite{PLOB, Rosati2018, Sharma2018, Noh2019,holwer, Noh2020,fanizza2021estimating}. If $\lambda\le 1/2$, it holds that $Q\left(\Phi_{\lambda,\tau_{\nu}},N\right)=0$ for all $\nu,N\ge0$.
Our work is motivated by the following result~\cite{die-hard}.
\begin{thm}\label{diehard_th}
For all $\lambda\in(0,1]$ there exists $\sigma(\lambda)$ such that
\begin{equation}
    Q\left(\Phi_{\lambda,\sigma(\lambda)}\right)\ge Q\left(\Phi_{\lambda,\sigma(\lambda)},1/2\right) >\eta\,,
\end{equation}
where $\eta>0$ is a universal constant.  More specifically, for $\varepsilon\ge0$ sufficiently small and for all $\lambda\in(0,1/2-\varepsilon)$ it holds that \begin{equation}
    Q\left(\Phi_{\lambda,\ketbrasub{n_\lambda} }\right)\ge Q\left(\Phi_{\lambda,\ketbrasub{n_\lambda} },1/2\right) >c(\varepsilon)\,,
\end{equation}
where $c(\varepsilon)\ge0$ is a constant with respect to $\lambda$ and $n_\lambda\in\N$ satisfies $1/\lambda-1\le n_\lambda\le 1/\lambda$. Moreover, it holds that $c(0)=0$, and $c(\bar{\varepsilon})\ge 5.133\times10^{-6}$ for an appropriate $\bar{\varepsilon}$ such that $0<\bar{\varepsilon}\ll 1/6$ (see the Supplemental Material of~\cite{die-hard}). 
\end{thm}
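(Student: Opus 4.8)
The plan is to lower-bound the quantum capacity by a single-letter coherent information and to estimate the latter in the regime $\lambda\to 0$ with $n_\lambda\lambda\to 1$. By the LSD bound~\eqref{lsdth2} one has $Q\left(\Phi_{\lambda,\ketbrasub{n_\lambda}},1/2\right)\ge I_{\text{coh}}\left(\Phi_{\lambda,\ketbrasub{n_\lambda}},\rho\right)$ for every $\rho$ with $\Tr[\rho\,a^\dagger a]\le 1/2$, so it is enough to produce one good $\rho$. Crucially, $\rho$ must be \emph{mixed}: for a pure input the Stinespring dilation $U_\lambda^{(SE)}\ket{\phi}_S\ket{n_\lambda}_E$ is globally pure, whence $S(\Phi(\rho))=S(\tilde\Phi(\rho))$ and $I_{\text{coh}}=0$. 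I would therefore take $\rho$ to be an optimised qubit state supported on the two lowest Fock levels $\{\ket 0,\ket 1\}$ (so that $\Tr[\rho\,a^\dagger a]\le 1/2$ holds automatically), carrying both a small amount of mixedness and a tuned coherence; the coherence is indispensable, since a Fock-diagonal input reproduces the antidegradable-like behaviour of the thermal case at $\lambda\le 1/2$ and yields nonpositive $I_{\text{coh}}$.

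Next I would compute Bob's output $\Phi_{\lambda,\ketbrasub{n_\lambda}}(\rho)$ and Eve's output $\tilde\Phi^{\text{wc}}_{\lambda,\ketbrasub{n_\lambda}}(\rho)$ explicitly. Because the beam splitter conserves the total photon number, $U_\lambda^{(SE)}\ket 0_S\ket{n_\lambda}_E$ and $U_\lambda^{(SE)}\ket 1_S\ket{n_\lambda}_E$ populate the orthogonal total-number sectors $n_\lambda$ and $n_\lambda+1$ with amplitudes given by Wigner small-$d$ (binomial) coefficients; tracing out one mode yields, for the diagonal part of $\rho$, the binomial laws $B(n_\lambda,1-\lambda)$ for the signal and $B(n_\lambda,\lambda)$ for the environment (together with their one-photon-shifted analogues for the $\ket 1$ input), while the coherence of $\rho$ produces surviving $\ketbraa{j}{j+1}$ off-diagonal terms (the only ones for which the traced-out kets of the other mode coincide). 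I would then write $I_{\text{coh}}=S(\Phi(\rho))-S(\tilde\Phi(\rho))$ and pass to the limit $\lambda\to 0$, $n_\lambda\lambda\to 1$: Bob's cloud concentrates at mean $\approx n_\lambda(1-\lambda)$ with $O(1)$ fluctuations and is nearly input-independent, Eve's cloud concentrates at mean $\approx 1$ and carries the input dependence, and all the relevant distributions and coherences converge to explicit Poisson-type limits, so that $I_{\text{coh}}$ tends to a fixed positive number for a suitably tuned $\rho$.

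The crux --- and the reason the certified constant $c(\bar\varepsilon)$ is only of order $5\times 10^{-6}$ --- is that $S(\Phi(\rho))$ and $S(\tilde\Phi(\rho))$ are both $O(1)$ and their dominant contributions (the entropy of the shared $\approx\mathrm{Pois}(1)$ cloud) almost cancel; the positive coherent information is a small residual generated by the one-photon offset between the sectors $n_\lambda$ and $n_\lambda+1$ together with the tuned coherence of $\rho$. Turning the limiting value into a rigorous bound valid for \emph{all} $\lambda\in(0,1/2-\varepsilon)$ thus demands explicit control of the binomial tails and of the convergence rate, plus an optimisation over the parameters of $\rho$; this tight near-cancellation estimate is the main obstacle, and it is what forces the small explicit constant obtained at a chosen $\bar\varepsilon$. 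Finally, to obtain the universal $\eta$ on all of $(0,1]$ I would glue this small-$\lambda$ bound to the large-$\lambda$ region, where already the vacuum environment gives the pure-loss value $Q\left(\Phi_{\lambda,\ketbrasub 0},1/2\right)=g(\lambda/2)-g((1-\lambda)/2)>0$ for $\lambda>1/2$, cover the compact window around $\lambda=1/2$ by the same Fock-state construction and continuity, and set $\eta$ equal to the resulting positive infimum.
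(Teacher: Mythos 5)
First, a point of orientation: the paper does not prove this theorem at all --- it is imported verbatim from Ref.~\cite{die-hard} (the statement itself points to the Supplemental Material of that reference for the constant $c(\bar\varepsilon)$), so there is no in-paper proof to match your argument against. Your overall architecture --- lower-bounding $Q\left(\Phi_{\lambda,\ketbrasub{n_\lambda}},1/2\right)$ by a single-letter coherent information, choosing $n_\lambda\sim 1/\lambda$, analysing the limit $\lambda\to 0$ with $n_\lambda\lambda$ bounded, and gluing to the pure-loss regime $\lambda>1/2$ to extract a universal $\eta$ --- is the right shape and matches both the cited proof and the analogous limit computation the paper carries out in Theorem~\ref{congl0}.

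However, there is a concrete error at the heart of your construction: the claim that the input $\rho$ must carry Fock-basis coherence, and that ``a Fock-diagonal input reproduces the antidegradable-like behaviour of the thermal case at $\lambda\le 1/2$ and yields nonpositive $I_{\text{coh}}$.'' This is false, and it conflates the roles of the input state and the environment state. Antidegradability of $\Phi_{\lambda,\tau_\nu}$ for $\lambda\le 1/2$ is a property of the \emph{channel}, fixed by the (Gaussian, thermal) environment; it forces $I_{\text{coh}}\le 0$ for \emph{every} input, diagonal or not, and it simply fails when the environment is a Fock state $\ketbrasub{n}$ with $n$ large. Indeed, the actual proof in Ref.~\cite{die-hard} uses precisely the Fock-diagonal input $\frac{\ketbrasub{0}+\ketbrasub{1}}{2}$ (the paper says so explicitly just before Conjecture~\ref{congI}), and the present paper's Section~III obtains strictly positive $I_{\text{coh}}\left(\Phi_{\lambda,\ketbrasub{n}},\tau_N\right)$ in the limit of Theorem~\ref{congl0} with the Fock-diagonal input $\tau_N$. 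Your insistence on off-diagonal $\ketbraa{j}{j+1}$ terms is therefore not only unnecessary but counterproductive: it destroys the Fock-diagonality of both outputs, so the entropies are no longer Shannon entropies of explicit binomial/Poisson-type laws, and the ``tight near-cancellation'' estimate you correctly identify as the crux becomes substantially harder to control. The correct route is to keep everything diagonal, so that $S\left(\Phi_{\lambda,\ketbrasub{n}}(\rho)\right)$ and $S\left(\tilde{\Phi}^{\text{wc}}_{\lambda,\ketbrasub{n}}(\rho)\right)$ reduce to classical entropies of explicit distributions (as in~\eqref{simpler} and~\eqref{cohprob}), and then carry out exactly the tail/convergence control you describe.
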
 
Hence, suitable environmental Fock states are beneficial for the performance of quantum communication. Let us state a useful lemma which follows straightforwardly from~\cite[Theorem 1]{die-hard}.
\begin{lemma}\label{lemma_diehard}
Let $\alpha\in\mathbb{C}$. Let $\sigma\in\mathfrak{S}(\HH_E)$ be of the form $\sigma=D(\alpha)\sigma_0{D(\alpha)}^\dagger$, where:
\begin{equation}
    D(z)\coloneqq\exp\left[z\, b^\dagger-z^\ast\, b\right]
\end{equation}
is the displacement operator on $E$, and $\sigma_0$ satisfies 
\begin{equation}
    V \sigma_0 V^\dagger=\sigma_0\,,
\end{equation}
with $V \coloneqq (-1)^{b^\dagger b}$ being the parity operator. Then 
\begin{equation}
    Q\left(\Phi_{1/2,\sigma}\right)=0.
\end{equation}
Furthermore, it holds that
\begin{equation}\label{weak_formula}
\tilde{\Phi}_{\lambda,\sigma}^{\text{wc}}=\mathcal{V}\circ\mathcal{D}_{-2\sqrt{\lambda}\alpha}\circ\Phi_{1-\lambda,\sigma}\,,
\end{equation}
where $\mathcal{V}$ and $\mathcal{D}_{\alpha}$ are two quantum channels defined by: 
\begin{equation}\label{tV}
\mathcal{V}(\cdot)\coloneqq V (\cdot)  V^\dagger\,,
\end{equation}
\begin{equation}\label{tD}
\mathcal{D}_\alpha(\cdot)\coloneqq D(\alpha) (\cdot) D(\alpha)^\dagger\,.
\end{equation}
\end{lemma}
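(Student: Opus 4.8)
The plan is to reduce the displaced environment $\sigma = D(\alpha)\sigma_0 D(\alpha)^\dagger$ to the parity-symmetric case $\sigma_0$ using the displacement covariance of the beam splitter, then to prove the weak-complement identity for $\sigma_0$ by a swap-and-parity manipulation, and finally to reassemble the displacements. First I would record how an environment displacement propagates through the BS: from the Heisenberg relations $U_\lambda^\dagger a U_\lambda = \sqrt{\lambda}\,a + \sqrt{1-\lambda}\,b$ and $U_\lambda^\dagger b U_\lambda = -\sqrt{1-\lambda}\,a + \sqrt{\lambda}\,b$ one obtains the covariance identity $U_\lambda^{(SE)}(I_S\otimes D(\alpha))(U_\lambda^{(SE)})^\dagger = D(\sqrt{1-\lambda}\,\alpha)\otimes D(\sqrt{\lambda}\,\alpha)$. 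Substituting $\sigma = D(\alpha)\sigma_0 D(\alpha)^\dagger$ into $\tilde{\Phi}_{\lambda,\sigma}^{\text{wc}}$ and tracing out $S$ kills the system-side displacement and yields $\tilde{\Phi}_{\lambda,\sigma}^{\text{wc}} = \mathcal{D}_{\sqrt{\lambda}\alpha}\circ\tilde{\Phi}_{\lambda,\sigma_0}^{\text{wc}}$; the same computation for $U_{1-\lambda}$, tracing out $E$, gives $\Phi_{1-\lambda,\sigma} = \mathcal{D}_{\sqrt{\lambda}\alpha}\circ\Phi_{1-\lambda,\sigma_0}$.

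The core step is the identity $\tilde{\Phi}_{\lambda,\sigma_0}^{\text{wc}} = \mathcal{V}\circ\Phi_{1-\lambda,\sigma_0}$ for any parity-invariant $\sigma_0$. I would prove it by exploiting $\arccos\sqrt{1-\lambda} = \pi/2 - \arccos\sqrt{\lambda}$, which factorises the beam splitter as $U_{1-\lambda} = S_0\,U_\lambda^\dagger$, where $S_0$ is the fully reflecting ($\lambda=0$) BS. Since $S_0$ acts as a mode swap dressed by a parity, $S_0 = (I_S\otimes V)F$ with $F$ the swap, one has $S_0(\rho\otimes\sigma_0)S_0^\dagger = \sigma_0\otimes V\rho V$. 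Inserting $U_\lambda = U_{1-\lambda}^\dagger S_0$ into $\tilde{\Phi}_{\lambda,\sigma_0}^{\text{wc}}(\rho) = \Tr_S[U_\lambda(\rho\otimes\sigma_0)U_\lambda^\dagger]$ turns it into $\Tr_S[U_{1-\lambda}^\dagger(\sigma_0\otimes V\rho V)U_{1-\lambda}]$. I would then invert the remaining beam splitter with the parity-inversion identity $V_E U_{1-\lambda}V_E = U_{1-\lambda}^\dagger$ (with $V_E = I_S\otimes V$), pull the outer parities out of the partial trace, and use $V\sigma_0 V = \sigma_0$ to absorb the stray environment parities, leaving $\tilde{\Phi}_{\lambda,\sigma_0}^{\text{wc}}(\rho) = V\,\Tr_S[U_{1-\lambda}(\sigma_0\otimes\rho)U_{1-\lambda}^\dagger]\,V$. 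A final swap, using $F U_{1-\lambda}F = U_{1-\lambda}^\dagger$ together with $V\sigma_0 V = \sigma_0$ once more, identifies the inner map with $\Phi_{1-\lambda,\sigma_0}$ and delivers the claimed relation.

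Reassembling is then bookkeeping: combining the three identities above and commuting parity past displacements via $\mathcal{V}\circ\mathcal{D}_\beta = \mathcal{D}_{-\beta}\circ\mathcal{V}$ (because $V D(\beta)V = D(-\beta)$), the displacement carried by $\Phi_{1-\lambda,\sigma}$ and the one carried by $\tilde{\Phi}_{\lambda,\sigma}^{\text{wc}}$ combine to the advertised $\mathcal{D}_{-2\sqrt{\lambda}\alpha}$, proving \eqref{weak_formula}. For the vanishing of the quantum capacity I would argue by antidegradability: specialising \eqref{weak_formula} to $\lambda=1/2$ exhibits $\tilde{\Phi}_{1/2,\sigma}^{\text{wc}}$ as a \emph{unitary} post-processing of $\Phi_{1/2,\sigma}$, so, inverting the unitaries, $\Phi_{1/2,\sigma}$ is itself a channel applied to its own weak complement. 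Since the weak complement is the partial trace over the purifying mode of the true Stinespring complement, hence a CPTP degradation of it, $\Phi_{1/2,\sigma}$ is antidegradable, and antidegradable channels have $Q=0$. (Equivalently, $\Phi_{1/2,\sigma} = \mathcal{D}_{\alpha/\sqrt{2}}\circ\Phi_{1/2,\sigma_0}$ gives $Q(\Phi_{1/2,\sigma}) = Q(\Phi_{1/2,\sigma_0})$, reducing to the $\lambda=1/2$ case of Theorem~\ref{diehard_th}.)

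I expect the main obstacle to be the sign-and-parity bookkeeping in the core step: the factorisation $U_{1-\lambda} = S_0 U_\lambda^\dagger$, the decomposition $S_0 = (I_S\otimes V)F$, and the parity-inversion identities each deposit a sign on a single mode, and the whole mechanism works only because these signs are cancelled precisely by invoking $V\sigma_0 V = \sigma_0$ at the right moments; misplacing any of them changes the displacement coefficient or spoils the cancellation. The second, more conceptual subtlety is that $Q=0$ is a statement about the \emph{true} complement, whereas \eqref{weak_formula} controls only the \emph{weak} one, so I must make explicit that the weak complement is a CPTP degradation (partial trace of the purification) of the true complement in order to legitimately conclude antidegradability.
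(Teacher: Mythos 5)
Your proposal is correct, but it reaches \eqref{weak_formula} by a different route than the paper. The paper does not prove this lemma from scratch: it cites~[die-hard] for the antidegradability/$Q=0$ statement, and its own supporting result (Lemma~\ref{lemmaweakp} in the Appendix) establishes the general identity $\tilde{\Phi}^{\text{wc}}_{\lambda,\sigma}=\mathcal{V}\circ\Phi_{1-\lambda,\mathcal{V}(\sigma)}$ entirely in the language of characteristic functions, using $\chi_{\tilde{\Phi}^{\text{wc}}_{\lambda,\sigma}(\rho)}(w)=\chi_\rho(-\sqrt{1-\lambda}\,w)\,\chi_\sigma(\sqrt{\lambda}\,w)$ together with $V D_S(z) V^\dagger = D_S(-z)$; \eqref{weak_formula} then follows by inserting $\mathcal{V}(\sigma)=D(-\alpha)\sigma_0 D(-\alpha)^\dagger$ and commuting the displacement out. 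Your argument instead works at the operator level, via the factorisation $U_{1-\lambda}=S_0 U_\lambda^\dagger$ (valid because $\arccos\sqrt{1-\lambda}=\pi/2-\arccos\sqrt{\lambda}$ and the generators coincide), the decomposition $S_0=(I_S\otimes V)F$, and the conjugation identities $V_E U_\mu V_E = U_\mu^\dagger = F U_\mu F$ — I checked the sign bookkeeping and it closes correctly, with $V\sigma_0V^\dagger=\sigma_0$ absorbing the stray parities exactly where you invoke it, and the two displacements recombining into $\mathcal{D}_{2\sqrt{\lambda}\alpha}\circ\mathcal{V}=\mathcal{V}\circ\mathcal{D}_{-2\sqrt{\lambda}\alpha}$ as claimed. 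The characteristic-function route is shorter and sidesteps precisely the sign-tracking you flag as the main hazard; your route is more elementary (no phase-space formalism) and makes structurally visible why parity invariance of $\sigma_0$ is the operative hypothesis. For $Q(\Phi_{1/2,\sigma})=0$ you supply the antidegradability argument explicitly — specialising \eqref{weak_formula} to $\lambda=1/2$, inverting the unitary post-processing, and correctly noting that the weak complement is a partial trace (hence a CPTP degradation) of the true Stinespring complement — which is exactly the argument of the cited reference, and it is a virtue of your write-up that this step is spelled out rather than outsourced. One small caveat: your closing parenthetical refers to "the $\lambda=1/2$ case of Theorem~\ref{diehard_th}", but the paper's restatement of that theorem only records the positivity results, not the $\lambda=1/2$ antidegradability; the reduction is to Theorem~1 of~[die-hard] itself, so your self-contained antidegradability argument is the one to keep.
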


The {diamond norm} of a superoperator $\Delta:\mathfrak{S}(\HH)\mapsto\mathfrak{S}(\HH)$ is
\begin{equation}\label{dnorm}
	\left\|\Delta\right\|_{\diamond}\coloneqq\sup_{\rho\in\mathfrak{S}(\HH\otimes\HH_C)} \left\|(\Delta\otimes I_C)\rho\right\|_1\text{ ,}
\end{equation}
where the $\sup$ is taken also over the ancilla systems $\HH_C$.

It turns out that the capacities are continuous with respect to diamond norm in \emph{finite} dimension~\cite{LeungSmith}. However, in the infinite-dimensional scenario the topology induced by the diamond norm is often too strong to make the capacities continuous.
For instance, although it can be shown that $\left\| \Phi_{\lambda,\ketbrasub{0}}-\Phi_{\lambda',\ketbrasub{0}}\right\|_{\diamond}=2$ for all $\lambda\ne\lambda'$~\cite[Proposition 1]{VV-diamond}, the functions $Q\left(\Phi_{\lambda,\ketbrasub{0}},N\right)$ and $C_{\text{ea}}\left(\Phi_{\lambda,\ketbrasub{0}},N\right)$ are both continuous in $\lambda$.
To remedy these undesirable features of the diamond norm, it is customary to define the energy-constrained (EC) diamond norm. For a given single-mode system $\HH_S$ with annihilation operator $a$, some $N>0$, and a superoperator $\Delta:\mathfrak{S}(\HH_S)\mapsto\mathfrak{S}(\HH_S)$, one defines~\cite{PLOB, Shirokov2018, VV-diamond}
\begin{equation}\label{ednorm}
	\left\|\Delta\right\|_{\diamond N}\coloneqq\sup_{\rho\in\mathfrak{S}(\HH_S\otimes\HH_C)\text{ : }\Tr\left[\rho\, a^\dagger a\right]\le N} \left\|(\Delta\otimes I_C)\rho\right\|_1\text{ ,}
\end{equation}
where the $\sup$ is taken also over the ancilla systems $\HH_C$.

Analogously to the diamond norm, the supremum in~\eqref{ednorm} can be restricted to pure states of $\mathfrak{S}(\HH_S\otimes\HH_{\bar{C}})$ where the ancilla $\bar{C}$ is a copy of $S$. The paper~\cite{VV-diamond} shows that the energy-constrained diamond norm provides useful continuity bounds for the energy-constrained capacities of infinite-dimensional bosonic quantum channels.
\begin{lemma}{\cite[Theorem 9]{VV-diamond}}\label{continuityBound}
Let $\Phi_1,\Phi_2:\mathfrak{S}(\HH_S)\mapsto\mathfrak{S}(\HH_S)$ be quantum channels. Suppose there exist $\alpha$, $N_0$ $\in\mathbb{R}$ such that for all $\rho\in\mathfrak{S}(\HH_S)$ it holds that
\begin{equation}
    \Tr\left[a^\dagger a\, \Phi_i(\rho)\right]\le \alpha \Tr [a^\dagger a\,\rho] + N_0\qquad \forall\ i=1,2\,.
\end{equation}
Quantum channels satisfying such a constraint are said to be `energy limited'.

Suppose that there exists $\varepsilon\in(0,1)$ such that for all $N>0$ it holds that
\begin{equation}
    \frac{1}{2}\left\| \Phi_1-\Phi_2\right\|_{\diamond N}\le \varepsilon\,.
\end{equation}
Then for all $N>0$ it holds that
\bb \label{countboundq}
&\left|Q\left(\Phi_1,N\right)-Q\left(\Phi_2,N\right)\right| \\&\le 56\sqrt{\varepsilon}\, g\left(4\frac{\alpha N + N_0}{\sqrt{\varepsilon}}\right)+6g\left(4\sqrt{\varepsilon}\right)\,.
\ee
Similar statements can be proved for the energy-constrained entanglement-assisted capacities.

In particular, $Q(\cdot\,,N)$ and $C_{\text{ea}}\left(\cdot\,,N\right)$ are continuous with respect to the EC diamond norm over the subset of energy-limited quantum channels.
\end{lemma}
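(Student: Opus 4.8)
The plan is to reduce the claimed capacity continuity to a uniform continuity bound for the von Neumann entropy under a mean-photon-number constraint, and then to propagate the single-letter estimate through the regularisation defining $Q(\cdot,N)$. Since $C_{\text{ea}}(\cdot,N)$ is already given by the single-letter formula~\eqref{Cea}, its continuity will follow from the single-shot part of the argument alone; the genuinely delicate case is $Q(\cdot,N)$, whose definition~\eqref{lsdth2} involves a limit over tensor powers. The starting point is to rewrite the coherent information of a channel $\Phi$ on an input $\rho$ with purification $\ket{\psi}\in\HH_S\otimes\HH_P$ as a conditional entropy of the channel output, namely
\begin{equation*}
I_{\text{coh}}(\Phi,\rho)=S(\Phi(\rho))-S(\tilde{\Phi}(\rho))=-S(B|P)_{(\Phi\otimes\Id_P)(\ketbra{\psi})}\,,
\end{equation*}
where $B$ denotes the output mode and the last equality uses that the global state on output--purifying--environment is pure. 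In this form the task becomes estimating how much a (conditional) entropy can move when the channel is replaced by a nearby one.

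First I would convert the hypothesis into a statement about output states. Because the supremum in~\eqref{ednorm} may be restricted to purifications of energy-constrained inputs, $\tfrac12\|\Phi_1-\Phi_2\|_{\diamond N}\le\varepsilon$ yields $\tfrac12\|(\Phi_1\otimes\Id_P)(\ketbra{\psi})-(\Phi_2\otimes\Id_P)(\ketbra{\psi})\|_1\le\varepsilon$ for every admissible $\psi$, while the energy-limited hypothesis supplies the output energy bound $\Tr[a^\dagger a\,\Phi_i(\rho)]\le\alpha N+N_0$. I would then invoke the energy-constrained Alicki--Fannes--Winter bound for the (conditional) entropy, whose crucial feature is that the energy enters only through the mode carrying the photon-number constraint. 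This produces the single-letter estimate
\begin{equation*}
|I_{\text{coh}}(\Phi_1,\rho)-I_{\text{coh}}(\Phi_2,\rho)|\le 56\sqrt{\varepsilon}\,g\!\left(4\tfrac{\alpha N+N_0}{\sqrt{\varepsilon}}\right)+6\,g(4\sqrt{\varepsilon})
\end{equation*}
uniformly over energy-constrained $\rho$, whence the same bound for $|Q_1(\Phi_1,N)-Q_1(\Phi_2,N)|$ and, via~\eqref{Cea}, the companion bound for $C_{\text{ea}}$. The square-root dependence on $\varepsilon$ and the numerical constants $56$, $6$, $4$ originate in this entropy bound: one truncates the states to the subspace of energy $\lesssim(\alpha N+N_0)/\sqrt{\varepsilon}$, so that the high-energy tail removed by Markov's inequality contributes $O(\sqrt{\varepsilon})$, and applies the finite-dimensional estimate on the truncated space, where the role of $\log\dim$ is played by $g$ evaluated at the cutoff energy.

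The hard part will be the regularisation, i.e.\ passing from $Q_1$ to $Q(\cdot,N)=\lim_n\frac1n Q_1(\cdot^{\otimes n},nN)$. The naive route of viewing $\Phi_i^{\otimes n}$ as a single channel fails: one only controls $\tfrac12\|\Phi_1^{\otimes n}-\Phi_2^{\otimes n}\|_{\diamond,nN}\le n\varepsilon$ and the total output energy $\le n(\alpha N+N_0)$, so feeding these into the single-letter bound and dividing by $n$ leaves a quantity that diverges with $n$ (both because $n\varepsilon$ eventually exceeds $1$ and because the energy argument of $g$ grows linearly). The fix I would use is a telescoping along the hybrid channels $\Psi_k\coloneqq\Phi_2^{\otimes k}\otimes\Phi_1^{\otimes(n-k)}$, which differ in a single tensor factor. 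The key observation is that tracing out the one swapped output mode erases all dependence on that factor, so each telescoping increment collapses to a conditional-entropy difference $S(B_{k}|\mathrm{rest})$ for a single mode; the continuity bound then sees only that mode's energy $\le\alpha N_k+N_0$ and the single-letter distance $\varepsilon$, never their $n$-fold inflations.

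Summing the $n$ increments, dividing by $n$, and using concavity of $g$ to pass from the per-mode energies $\{N_k\}$ (with $\frac1n\sum_k N_k\le N$) to $\alpha N+N_0$ yields a bound that is uniform in $n$; the limit defining $Q(\cdot,N)$ may then be taken, and the single-letter estimate survives verbatim. The same telescoping applied to the entanglement-assisted coherent quantity delivers the analogous statement for $C_{\text{ea}}(\cdot,N)$. The one point demanding care throughout is the environment contribution $S(\tilde{\Phi}(\rho))$: one must ensure its energy is likewise controlled (via energy conservation at the beam splitter, or by imposing the energy-limited condition on the weak complementary channel as well), since otherwise the conditional-entropy bound for that term would not localise.
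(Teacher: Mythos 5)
The paper offers no proof of this lemma at all: it is imported verbatim from \cite[Theorem 9]{VV-diamond}, so the only in-paper ``proof'' is that citation. Your sketch is, in substance, the standard argument behind that cited theorem --- the energy-constrained Alicki--Fannes--Winter bound for conditional entropies at the single-letter level, the Leung--Smith telescoping over the hybrid channels $\Phi_2^{\otimes k}\otimes\Phi_1^{\otimes(n-k)}$ to tame the regularisation, and concavity of $g$ to recombine the per-mode energies --- and you correctly diagnose why the naive route through $\tfrac12\|\Phi_1^{\otimes n}-\Phi_2^{\otimes n}\|_{\diamond\, nN}\le n\varepsilon$ fails. Two small corrections. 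First, your opening identity is transposed: $-S(B|P)_\omega=S(\rho)-S(\tilde{\Phi}(\rho))$ is the \emph{reverse} coherent information, whereas $I_{\mathrm{coh}}(\Phi,\rho)=-S(P|B)_\omega=S(B)_\omega-S(BP)_\omega$; your telescoping step, which chain-rules $S(B^n)-S(B^nP)$ down to conditional entropies of the single swapped output mode given the rest, is consistent with the correct identity and with the appearance of the output energy $\alpha N+N_0$ in the stated bound, so the slip does not propagate. Second, the closing worry about controlling the energy of the complementary output is moot: by purity of the global state, $S(\tilde{\Phi}(\rho))=S(B^nP)_\omega$, so the complementary channel never enters the continuity estimate and no energy bound on it is required.
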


\section{Noise neutralisation thanks to environment control and entanglement assistance}\label{sec_Cea}
In this section we study the quantity $C_{\text{ea}}\left(\Phi_{\lambda,\ketbrasub{n}},N\right)$, for $n\in\N$, $N>0$, and $\lambda\in(0,1]$. Hence, the environment states under consideration are Fock states. To calculate $C_{\text{ea}}\left(\Phi_{\lambda,\ketbrasub{n}}, N\right)$, we need to maximise
\begin{equation}
    I\left(\Phi_{\lambda,\ketbrasub{n}},\rho\right)=S(\rho)+I_{\text{coh}}\left(\Phi_{\lambda,\ketbrasub{n}},\rho\right)
\end{equation} on the subset of states $\rho$ satisfying the energy constraint $\Tr\left[\rho\,a^\dagger a\right]\le N$. Since the term $S(\rho)$ appears in the definition of $I\left(\Phi_{\lambda,\ketbrasub{n}},\rho\right)$, it is worth choosing as an ansatz for $\rho$ the state that maximises the entropy among all states satisfying the energy constraint, i.e.~the thermal state $\tau_N$. Incidentally, it is known that $\tau_N$ is the maximizer of $I\left(\Phi_{\lambda,\ketbrasub{n}},\rho\right)$ if $n=0$ (see~\eqref{pureloss}), i.e.~the case of the pure loss channel.

Since $S(\tau_N)=g(N)$, we have
\bb \label{bound_Cea}
	C_{\text{ea}}\left(\Phi_{\lambda,\ketbrasub{n}},N\right) &\ge I\left(\Phi_{\lambda,\ketbrasub{n}},\tau_N\right) \\
	&=g(N)+I_{\text{coh}}\left(\Phi_{\lambda,\ketbrasub{n}},\tau_N\right)\, .
\ee
Since a weak complementary channel associated with a pure environment state is a complementary channel, it holds that
\bb
&I_{\text{coh}}\left(\Phi_{\lambda,\ketbrasub{n}},\tau_N\right) \\ &\qquad =S\left(\Phi_{\lambda,\ketbrasub{n}}(\tau_N)\right)-S\left(\tilde{\Phi}^{\text{wc}}_{\lambda,\ketbrasub{n}}(\tau_N)\right)\,.
\ee
By applying~\eqref{weak_formula} and the invariance of the 
von Neumman entropy under unitary transformation, we obtain 
\begin{equation}
    S\left(\tilde{\Phi}^{\text{wc}}_{\lambda,\ketbrasub{n}}(\tau_N)\right)= S\left(\Phi_{1-\lambda,\ketbrasub{n}}(\tau_N)\right)\,.
\end{equation}
As a consequence,
\bb\label{Icoh}
&I_{\text{coh}}\left(\Phi_{\lambda,\ketbrasub{n}},\tau_N\right) \\&=S\left(\Phi_{\lambda,\ketbrasub{n}}(\tau_N)\right) -S\left(\Phi_{1-\lambda,\ketbrasub{n}}(\tau_N)\right)\,.
\ee 
In order to compute the coherent information in~\eqref{Icoh}, let us calculate the eigenvalues of $\Phi_{\lambda,\ketbrasub{n}}(\tau_N)$.

We could try to do this by exploiting the `explicit formula of {BS}' 
of Lemma~\ref{teobeamexp} in {the} Appendix, derived by calculating the action of $U_\lambda^{(SE)}$ on the tensor product of two Fock states (or, equivalently, the formulae shown by
Sabapathy and Winter~\cite[Section III.B]{KK-VV}). However, adopting this approach we would obtain a very complicated expression of the eigenvalues of $\Phi_{\lambda,\ketbrasub{n}}(\tau_N)$ which involves a series that we are not able to sum. 
More explicitly, leveraging Lemma~\ref{teobeamexp} we would obtain that
\begin{equation} \label{explicit_formula}
U_{\lambda}^{(SE)}\ket{i}_S\ket{j}_E=\sum_{m=0}^{i+j}c_m^{(i,j)}(\lambda)\ket{i+j-m}_S\ket{m}_E \,,
\end{equation}
for all $i,j\in\N$, where
\bb\label{coefficients}
c_m^{(i,j)}(\lambda)&=   \frac{1}{\sqrt{i!j!}}\sum_{k=\max(0,m-j)}^{\min(i,m)}(-1)^k\binom{i}{k}\binom{j}{m-k}\\&\quad\times\lambda^{\frac{i+m-2k}{2}}(1-\lambda)^{\frac{j+2k-m}{2}}\sqrt{m!(i+j-m)!}\,,
\ee
for all $m\in\{0,1, \ldots,i+j\}$.
By using~\eqref{explicit_formula}, one obtains that
\bb \label{stato}
\Phi_{\lambda,\ketbrasub{n}}(\tau_N)&=\frac{1}{N+1}\sum_{k=0}^{\infty}\left(\frac{N}{N+1}\right)^k\Phi_{\lambda,\ketbrasub{n}}(\ketbra{k}) \\
&=\frac{1}{N+1}\sum_{k=0}^{\infty}\left(\frac{N}{N+1}\right)^k \\
&\quad\times\sum_{m=0}^{n+k}|c_{m}^{(k,n)}(\lambda)|^2\ketbra{n+k-m} \\
&=\sum_{l=0}^{\infty}P_l(N,n,\lambda)\ketbra{l}\,,
\ee
where 
\bb \label{explfockatt}
&P_l(N,n,\lambda) \\&\coloneqq \frac{1}{N+1}\sum_{k=\max(l-n,0)}^{\infty}{\left(\frac{N}{N+1}\right)}^k|c_{n+k-l}^{(k,n)}(\lambda)|^2\,.
\ee
\eqref{stato} implies that the eigenvalues of $\Phi_{\lambda,\tau_N}(\ketbra{n})$ are $\{P_l(N,n,\lambda)\}_{l\in\N}$.

Since the calculated expression of $P_l(N,n,\lambda)$ in~\eqref{explfockatt} is very complicated, we introduce the `master equation trick', which will allow us to obtain a much simpler expression. 

Before explaining it, let us observe that for any $\lambda$, $\sigma$, and $\rho$ it holds that
\begin{equation} \label{invert}
\Phi_{\lambda,\sigma}(\rho)=\Phi_{1-\lambda,\,\rho}(\sigma)\,,
\end{equation}
as established by Lemma~\ref{lemma_invert} in {the} Appendix.
Consequently, we obtain
\begin{equation}\label{scambio}
\Phi_{\lambda,\ketbrasub{n}}(\tau_N)=\Phi_{1-\lambda,\tau_N}(\ketbra{n})\,.
\end{equation}
Therefore, the problem has now been reformulated as that of calculating the action of the thermal attenuator $\Phi_{\lambda,\tau_N} $ on a Fock state.

Now, we apply the `master equation trick':  we retrieve the action of the thermal attenuator on a Fock state as a solution to the \emph{Gorini--Kossakowski--Sudarshan--Lindblad master equation}.
Our trick allows us to sum the complicated series which appears in the expression of the eigenvalues of $\Phi_{\lambda,\ketbrasub{n}}(\tau_N)$ in~\eqref{explfockatt}. Without this simplification, the analytical calculations in Sec.~\ref{subsec_Cea} would have been much more challenging. By defining $$\rho_N(t)\coloneqq\Phi_{\exp(-t),\tau_N}(\rho)$$ with $t\in[0,\infty)$, it holds that
\begin{equation}\label{Lindblad}
\begin{cases}
    \frac{d}{dt}\rho_N(t)=N\left[a^\dagger\rho_N(t) a-\frac{1}{2}\{\rho_N(t),a a^\dagger \}\right]\\\qquad\quad+(N+1)\left[a\rho_N(t) a^\dagger-\frac{1}{2}\{\rho_N(t),a^\dagger a\}\right]\\
    \rho_N(0)=\rho
\end{cases}
\end{equation}
where $\{A,B\}\coloneqq AB+BA$ denotes the anti-commutator between the operators $A$ and $B$. For the sake of completeness, we provide a proof of~\eqref{Lindblad} in Lemma~\ref{LemmaMaster} in {the} Appendix. The master equation in~\eqref{Lindblad} can be solved explicitly. The solution can be found in~\cite[Eq.~(3.18)]{fujii2012}. It reads:
\bb\label{master_sol}
\rho_N(t)=&\frac{e^{\frac{\mu-\nu}{2}t}}{F(t)}\sum_{k=0}^{\infty}\frac{G(t)^k}{k!}  \left(a^\dagger\right)^k e^{-a^\dagger a \ln F(t)}  \\&\times\left(\sum_{m=0}^\infty \frac{E(t)^m}{m!}a^m\rho{a^\dagger}^m\right) e^{- a^\dagger a \ln F(t)} a^k \, ,
\ee
where:
\bb\label{expressions_master}
\mu&\coloneqq N+1\,;\\
\nu&\coloneqq N\,;\\
E(t)&\coloneqq\frac{\frac{2\mu}{\mu-\nu}\sinh(\frac{\mu-\nu}{2}t)}{\cosh(\frac{\mu-\nu}{2}t)+\frac{\mu+\nu}{\mu-\nu}\sinh(\frac{\mu-\nu}{2}t)}\,;\\
G(t)&\coloneqq\frac{\frac{2\nu}{\mu-\nu}\sinh(\frac{\mu-\nu}{2}t)}{\cosh(\frac{\mu-\nu}{2}t)+\frac{\mu+\nu}{\mu-\nu}\sinh(\frac{\mu-\nu}{2}t)}\,;\\
F(t)&\coloneqq\cosh\left(\frac{\mu-\nu}{2}t\right)+\frac{\mu+\nu}{\mu-\nu}\sinh\left(\frac{\mu-\nu}{2}t\right)\,.
\ee
Exploiting the following formulae
\bb
a^m\ket{n}&=
\begin{cases}
\sqrt{\frac{n!}{(n-m)!}}\ket{n-m}, & \text{if $n\ge m$,} \\
0, & \text{otherwise}
\end{cases}\\
 \left(a^\dagger\right)^k\ket{n-m}&=
\sqrt{\frac{(n-m+k)!}{(n-m)!}}\ket{n-m+k}
\ee
and using the initial condition $\rho=\ketbra{n}$, one obtains that:
\bb
&\Phi_{\exp(-t),\tau_N}(\ketbra{n})=\rho_N(t)\\&=\frac{e^{t/2}}{F(t)}\sum_{k=0}^{\infty}\sum_{m=0}^n\frac{n!G(t)^k}{(n-m)!k!}\\&\quad\times\frac{E(t)^m}{m!F(t)^{2(n-m)}} (a^\dagger)^k   \ketbra{n-m} a^k  \\&=\frac{e^{t/2}}{F(t)}\sum_{k=0}^{\infty}\sum_{m=0}^nG(t)^k\frac{E(t)^m}{F(t)^{2(n-m)}}\\&\quad\times\binom{n}{m}\binom{n-m+k}{k} \ketbra{n-m+k}\,.  
\ee
In addition, by setting
\begin{equation}
f(t)\coloneqq\frac{2N+1+\coth(t/2)}{2}\,,
\end{equation}
the expression in~\eqref{expressions_master} can be rewritten as
\bb
E(t)&=(N+1)/f(t)\,,\\
G(t)&=N/f(t)\,,\\
F(t)&=\cosh(t/2)+(2N+1)\sinh(t/2)\,.
\ee
Hence, in terms of $\lambda=\exp(-t)$, it holds that
\bb\label{fF_ambda}
f(-\ln\lambda)&=\frac{N+1-N\lambda}{1-\lambda}\,,\\
F(-\ln\lambda)&=\frac{N+1-N\lambda}{\sqrt{\lambda}}\,.
\ee
Finally, we obtain
\bb
&\Phi_{\lambda,\tau_N}(\ketbra{n}) \\&=\sum_{k=0}^{\infty}\sum_{m=0}^n\frac{(1-\lambda)^{m+k}\lambda^{n-m}N^k(N+1)^m\binom{n}{m}\binom{n-m+k}{k} }{\left(N+1-N\lambda\right)^{k-m+2n+1}} \\&\quad\times\ketbra{n-m+k}= \\&=\sum_{l=0}^\infty\frac{1}{\left(N+1-N\lambda\right)^{l+n+1}}\sum_{m=\max\{0,n-l\}}^n(1-\lambda)^{2m+l-n} \\&\quad\times\lambda^{n-m}N^{l+m-n}(N+1)^m\binom{n}{m}\binom{l}{n-m}  \ketbra{l}\,.
\ee
In conclusion, by using 
\begin{equation}\label{defP}
    \Phi_{\lambda,\ketbrasub{n}}(\tau_N)=\sum_{l=0}^{\infty}P_l(N,n,\lambda)\ketbra{l}\,,
\end{equation}
~\eqref{scambio} implies
\bb\label{simpler}
    &P_l(N,n,\lambda)=\bra{l}\Phi_{\lambda,\ketbrasub{n}}(\tau_N)\ket{l}=\bra{l}\Phi_{1-\lambda,\tau_N}(\ketbra{n})\ket{l} \\&
    =\frac{1}{\left(1+N\lambda\right)^{l+n+1}}\sum_{m=\max\{0,n-l\}}^n\lambda^{2m+l-n}(1-\lambda)^{n-m} \\&\quad\times N^{l+m-n}(N+1)^m\binom{n}{m}\binom{l}{n-m}\,.
\ee
Therefore, the `master equation trick' provides a significantly simpler expression for the eigenvalues $\{P_l(N,n,\lambda)\}_{l\in\N}$ of $\Phi_{\lambda,\ketbrasub{n}}(\tau_N)$ than that in~\eqref{explfockatt}. The expression in~\eqref{simpler} was obtained from a standard calculation exploiting the `explicit formula of {BS}' of 
Lemma~\ref{teobeamexp}. Comparing the expressions in~\eqref{explfockatt} and~\eqref{simpler}, we find that 
\bb
&\sum_{k=\max(l-n,0)}^{\infty}{\left(\frac{N}{N+1}\right)}^k|c_{n+k-l}^{(k,n)}(\lambda)|^2 \\&=\frac{N+1}{\left(1+N\lambda\right)^{l+n+1}}\sum_{m=\max\{0,n-l\}}^n\lambda^{2m+l-n}(1-\lambda)^{n-m} \\&\quad\times N^{l+m-n}(N+1)^m\binom{n}{m}\binom{l}{n-m}\,.
\ee
Hence, the `master equation trick' allows us to sum this complicated series that involves the coefficients $c^{(i,j)}_m$ expressed in~\eqref{coefficients}.

Furthermore, the simple expression of $P_l(N,n,\lambda)$ in~\eqref{simpler} can be useful in lower bounding the capacities of general attenuators whose environment state is diagonal in Fock basis. Indeed, the capacities of a general attenuator $\Phi_{\lambda,\sigma}$ can be lower bounded in terms of $I_{\text{coh}}\left(\Phi_{\lambda,\sigma},\tau_N\right)$, and the latter in terms of the eigenvalues of $\Phi_{\lambda,\sigma}(\tau_N)$, by using the subadditivity inequality of the von Neumann entropy to deal with the term $S\left(\tilde{\Phi}_{\lambda,\sigma}(\tau_N)\right)$. Therefore, if $\sigma$ is diagonal in Fock basis, by noting that in this case~\eqref{defP} allows one to easily calculate the eigenvalues of $\Phi_{\lambda,\sigma}(\tau_N)$, one can obtain a lower bound on the capacities of $\Phi_{\lambda,\sigma}$. This reasoning is developed in Lemma~\ref{lowcap} in the Appendix.

{Moreover, the `master equation trick' provides a Kraus representation for the thermal attenuator, we it report in Theorem~\ref{TeoKraus}. Such Kraus representation is much simpler than that reported in Lemma~\ref{lemmakraus} in the Appendix, obtained from a standard calculation exploiting the `explicit formula of {BS}'. The Kraus representation reported in Theorem~\ref{TeoKraus} can be useful in calculating the action of the thermal attenuator on a generic operator. For example, as we show in Remark~\ref{lemmakraus3} in the Appendix, it significantly simplifies the action of the thermal attenuator on an operator of the form $\ketbraa{n}{i}$, where $\ket{n}$ and $\ket{i}$ are Fock states.  
\begin{thm}\label{TeoKraus}
For all $\lambda\in[0,1],\nu\ge 0$, the thermal attenuator $\Phi_{\lambda,\tau_\nu}$ admits the following Kraus representation:
\begin{equation}\label{krausformatt}
    \Phi_{\lambda,\tau_\nu}(\rho)=\sum_{k,m=0}^\infty M_{k,m}\rho M_{k,m}^\dagger\,,
\end{equation}
where 
\bb
    M_{k,m}\coloneqq&\sqrt{\frac{\nu^k(\nu+1)^m(1-\lambda)^{m+k}}{k!m![(1-\lambda)\nu+1]^{m+k+1}}} \\&\quad\times(a^\dagger)^k\left(\frac{\sqrt{\lambda}}{(1-\lambda)\nu+1}\right)^{a^\dagger a}a^m\,.
\ee
In particular, by letting $\ket{n}$ and $\ket{i}$ two Fock states, it holds that
\begin{equation}\label{action_ni}
    \Phi_{\lambda,\tau_\nu}(\ketbraa{n}{i})=\sum_{l=\max(i-n,0)}^\infty f_{n,i,l}(\lambda,\nu)\ketbraa{l+n-i}{l}\,,
\end{equation}
where
\bb
    f_{n,i,l}(\lambda,\nu)&\coloneqq \sum_{m=\max(i-l,0)}^{\min(n,i)}\frac{\sqrt{n!i!l!(l+n-i)!}}{(n-m)!(i-m)!m!(l+m-i)!}\\&\quad\times\frac{\nu^{l+m-i}(\nu+1)^m(1-\lambda)^{2m+l-i}\lambda^{\frac{n+i-2m}{2}}}{\left((1-\lambda)\nu+1\right)^{l+n+1}}\,.
\ee
\end{thm}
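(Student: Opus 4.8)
The plan is to read off the Kraus operators directly from the explicit solution~\eqref{master_sol} of the Lindblad master equation~\eqref{Lindblad}, whose validity for $\Phi_{\exp(-t),\tau_\nu}(\rho)$ is guaranteed by Lemma~\ref{LemmaMaster} (with the solution taken from~\cite{fujii2012}). Setting the environment mean photon number equal to $\nu$ (i.e.\ taking $N=\nu$, so that $\mu=\nu+1$ and $\mu-\nu=1$) and substituting $\lambda=\exp(-t)$, the first step is to rewrite the scalar coefficients $E(t)$, $G(t)$, $F(t)$ and the overall prefactor $e^{(\mu-\nu)t/2}/F(t)$ from~\eqref{master_sol} as closed rational expressions in $\lambda$ and $\nu$. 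Using $\cosh(t/2)=(1+\lambda)/(2\sqrt{\lambda})$ and $\sinh(t/2)=(1-\lambda)/(2\sqrt{\lambda})$, exactly as in the derivation of~\eqref{fF_ambda}, one obtains $F=((1-\lambda)\nu+1)/\sqrt{\lambda}$, $E=(\nu+1)(1-\lambda)/((1-\lambda)\nu+1)$, $G=\nu(1-\lambda)/((1-\lambda)\nu+1)$, and $e^{(\mu-\nu)t/2}/F=1/((1-\lambda)\nu+1)$.

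With these substitutions, \eqref{master_sol} becomes a double series whose $(k,m)$ term is $\frac{1}{(1-\lambda)\nu+1}\frac{G^k E^m}{k!\,m!}\,(a^\dagger)^k F^{-a^\dagger a}a^m\rho\,(a^\dagger)^m F^{-a^\dagger a}a^k$, where $F^{-a^\dagger a}=\left(\frac{\sqrt{\lambda}}{(1-\lambda)\nu+1}\right)^{a^\dagger a}$ is a self-adjoint function of the number operator. The second step is to recognise this term as $M_{k,m}\rho M_{k,m}^\dagger$ with $M_{k,m}$ as in the statement: since $(F^{-a^\dagger a})^\dagger=F^{-a^\dagger a}$, one has $M_{k,m}^\dagger=\sqrt{\cdots}\,(a^\dagger)^m F^{-a^\dagger a}a^k$, so the operator ordering matches term by term, and a one-line check shows that $\frac{1}{(1-\lambda)\nu+1}\frac{G^k E^m}{k!\,m!}$ equals the coefficient $\frac{\nu^k(\nu+1)^m(1-\lambda)^{m+k}}{k!\,m!\,[(1-\lambda)\nu+1]^{m+k+1}}$ produced by $M_{k,m}\rho M_{k,m}^\dagger$. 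This establishes~\eqref{krausformatt}; the completeness relation $\sum_{k,m}M_{k,m}^\dagger M_{k,m}=\id$ then follows for free, since $\Phi_{\lambda,\tau_\nu}$ is trace preserving and $\Tr[\rho\sum_{k,m}M_{k,m}^\dagger M_{k,m}]=\Tr\rho$ for all $\rho$.

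For~\eqref{action_ni} I would apply the Kraus operators to the dyad $\ketbraa{n}{i}$ directly. Using $a^m\ket{n}=\sqrt{n!/(n-m)!}\,\ket{n-m}$ (nonzero only for $m\le n$), $F^{-a^\dagger a}\ket{n-m}=F^{-(n-m)}\ket{n-m}$, and $(a^\dagger)^k\ket{n-m}=\sqrt{(n-m+k)!/(n-m)!}\,\ket{n-m+k}$, the operator $M_{k,m}$ sends $\ket{n}$ to a multiple of $\ket{n-m+k}$ and $\bra{i}M_{k,m}^\dagger$ to a multiple of $\bra{i-m+k}$, so each term is proportional to $\ketbraa{n-m+k}{i-m+k}$. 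The key bookkeeping step is the reindexing $l\coloneqq i-m+k$, hence $k=l-i+m$, which turns the ket index into $l+n-i$ and the bra index into $l$; collecting the scalar prefactors under this substitution reproduces $f_{n,i,l}(\lambda,\nu)$ term by term, while the constraints $m\ge 0$, $m\le\min(n,i)$ and $k=l-i+m\ge 0$ fix the ranges $\max(i-l,0)\le m\le\min(n,i)$ and $l\ge\max(i-n,0)$.

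The manipulations are elementary, and in fact no commutation of $F^{-a^\dagger a}$ past the ladder operators is needed because the master-equation term already appears in the ordered form $(a^\dagger)^k F^{-a^\dagger a}a^m\rho\,(a^\dagger)^m F^{-a^\dagger a}a^k$. The most error-prone place is the exponent bookkeeping in the second part (tracking powers of $\lambda$, $1-\lambda$, $\nu$, $\nu+1$ and $(1-\lambda)\nu+1$ through the substitution $k=l-i+m$), but it is purely mechanical. The only genuine subtlety, which I would dispatch with a remark rather than a long argument, is the rigour of handling the unbounded ladder operators in infinite dimensions: justifying that the series in~\eqref{master_sol} converges in trace norm and that its regrouping into the Kraus sum~\eqref{krausformatt} is legitimate.
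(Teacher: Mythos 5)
Your proposal is correct and follows essentially the same route as the paper: the paper's Lemma~\ref{lemmakraus2} obtains~\eqref{krausformatt} precisely by substituting $\lambda=\exp(-t)$ and~\eqref{fF_ambda} into the master-equation solution~\eqref{master_sol} and reading off the ordered Kraus terms, and Remark~\ref{lemmakraus3} derives~\eqref{action_ni} exactly as you do, by applying $M_{k,m}$ to the dyad and reindexing with $l=i-m+k$. Your closing caveat about trace-norm convergence of the regrouped series is a fair point, but the paper passes over it just as lightly, so there is no discrepancy to report.
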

The proofs of~\eqref{krausformatt} and~\eqref{action_ni} are provided in Lemma~\ref{lemmakraus2} and Remark~\ref{lemmakraus3} in the Appendix, respectively.}

{Now, let us conclude the calculation of the coherent information in~\eqref{Icoh}.}
The von Neumann entropy of $\Phi_{\lambda,\ketbrasub{n}}(\tau_N)$ is the Shannon entropy of the probability distribution $\{P_l(N,n,\lambda)\}_{l\in\N}$ i.e. 
\bb
    S\left(\Phi_{\lambda,\ketbrasub{n}}\left(\tau_N\right)\right) &= H\left(P(N,n,\lambda)\right)\\
    &=-\sum_{l=0}^{\infty}P_l(N,n,\lambda)\log_2 P_l(N,n,\lambda)\,.
\ee
Finally, from~\eqref{Icoh}, we have that
\bb \label{cohprob}
	&I_{\text{coh}}\left(\Phi_{\lambda,\ketbrasub{n}},\tau_N\right) \\ &\qquad =H\left(P(N,n,\lambda)\right)-H\left(P(N,n,1-\lambda)\right)\, .
\ee
By using~\eqref{cohprob}, in Fig.~\ref{Icoh figure} we plot $I_{\text{coh}}\left(\Phi_{\lambda,\ketbrasub{n}},\tau_N\right)$ as a function of $\lambda$ for $N=0.5$ and increasing $n$. 
\begin{figure}[t]
\centering
\includegraphics[width=1.0\linewidth]{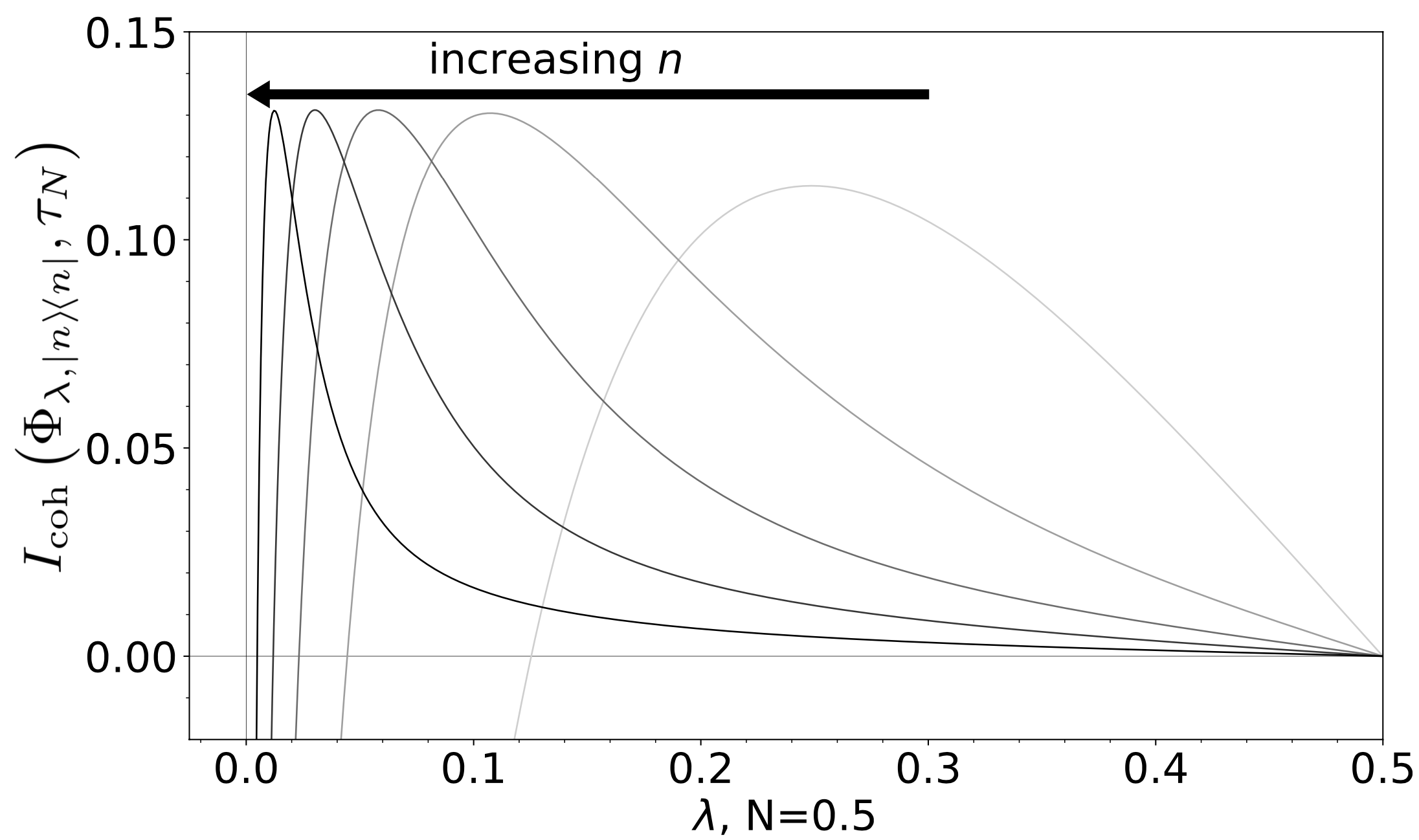}
\caption{The functions $I_{\text{coh}}\left(\Phi_{\lambda,\ketbrasub{n}},\tau_N\right)$ plotted with respect to the variable $\lambda$ for $N=0.5$ and for several values of $n$ from $3$ to $100$. We have computed $I_{\text{coh}}\left(\Phi_{\lambda,\ketbrasub{n}},\tau_N\right)$ by using~\eqref{cohprob}.}
\label{Icoh figure}
\end{figure}

For the other values of $N$, the trend of $I_{\text{coh}}\left(\Phi_{\lambda,\ketbrasub{n}},\tau_N\right)$ is similar to that in Fig.~\ref{Icoh figure}. This suggests that, for fixed $N>0$ and $n\in\N$, there exists a transmissivity value $\bar{\lambda}_N(n)\in (0,1/2]$ for which the function $\lambda\mapsto I_{\text{coh}}\left(\Phi_{\lambda,\ketbrasub{n}},\tau_N\right)$ is strictly positive for $\lambda\in(\bar{\lambda}_N(n),1/2)$ and negative for $\lambda\in\left(0,\bar{\lambda}_N(n)\right)$. In addition, $\bar{\lambda}_N(n)$ seems to decrease in $n$ and to satisfy $\lim\limits_{n\rightarrow\infty}\bar{\lambda}_N(n)=0$. This leads us to Conjecture~\ref{congI}.

Incidentally, a similar behaviour is displayed by $I_{\text{coh}} \left(\Phi_{\lambda,\ketbrasub{n}}, \frac{\ketbrasub{0}+\ketbrasub{1}}{2}\right)$, as discussed in~\cite{die-hard}. There, it was analytically shown that for all $\lambda>0$ such a coherent information is bounded away from zero, if $n$ is chosen appropriately.

\begin{cj} \label{congI}
For all $\lambda\in(0,1/2)$, $N>0$, and for $n\in\N$ sufficiently large it holds that
\begin{equation}
    I_{\text{coh}}\left(\Phi_{\lambda,\ketbrasub{n}},\tau_N\right)>0\,.
\end{equation}
\end{cj}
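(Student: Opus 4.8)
The plan is to treat the conjecture as a sharp second-order asymptotic statement as $n\to\infty$. By \eqref{cohprob} the claim is equivalent to the strict inequality $H\!\left(P(N,n,\lambda)\right) > H\!\left(P(N,n,1-\lambda)\right)$, and by \eqref{scambio} the distribution $P(N,n,\lambda)$ in \eqref{simpler} is precisely the photon-number law of $\omega_\lambda\coloneqq\Phi_{1-\lambda,\tau_N}(\ketbrasub{n})$, i.e. the Fock state $\ket{n}$ sent through a thermal attenuator of transmissivity $1-\lambda$. First I would compute, from the Heisenberg action of the beam splitter on $a^\dagger a$, the low-order cumulants of $P(N,n,\lambda)$ in closed form: the mean $(1-\lambda)n+\lambda N$, the variance $V(\lambda)=\lambda(1-\lambda)\big[(2N+1)n+N\big]+\lambda^2N(N+1)$, and the third cumulant $\kappa_3(\lambda)$, obtained by the same elementary moment computation.

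The decisive structural observation is that the comparison is settled only at subleading order. Swapping $\lambda\leftrightarrow 1-\lambda$ leaves the leading variance coefficient $\lambda(1-\lambda)(2N+1)n$ invariant, so both entropies grow as $\tfrac12\log_2 n$ with identical leading constant and $I_{\text{coh}}\to0$; its sign is decided entirely by the $O(1/n)$ terms. Since $V(\lambda)-V(1-\lambda)=(2\lambda-1)N(N+1)<0$ for $\lambda<1/2$, a Gaussian (variance-only) entropy estimate predicts the \emph{wrong} sign, so the positivity in the conjecture must originate from the non-Gaussian skewness correction. Accordingly, the analytic core is a local central limit theorem for the lattice law $P(N,n,\lambda)$ together with an Edgeworth expansion of its entropy,
\begin{equation*}
H\!\left(P(N,n,\lambda)\right)=\tfrac12\log_2\!\big(2\pi e\,V(\lambda)\big)-\frac{1}{12\ln2}\,\frac{\kappa_3(\lambda)^2}{V(\lambda)^3}+o\!\left(\tfrac1n\right),
\end{equation*}
where the linear-in-skewness and linear-in-kurtosis contributions cancel and only $\kappa_3^2/V^3=O(1/n)$ survives at this order. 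Subtracting the same expansion at $1-\lambda$, the $\tfrac12\log_2(2\pi e)$ terms cancel, and the problem reduces to showing that the explicit limit
\begin{equation*}
\begin{aligned}
\lim_{n\to\infty} n\,I_{\text{coh}}\!\left(\Phi_{\lambda,\ketbrasub{n}},\tau_N\right)
&=\frac{(2\lambda-1)N(N+1)}{2\ln2\,\lambda(1-\lambda)(2N+1)}\\
&\quad-\frac{c_3(\lambda)^2-c_3(1-\lambda)^2}{12\ln2\,[\lambda(1-\lambda)(2N+1)]^3}
\end{aligned}
\end{equation*}
is strictly positive on $(0,1/2)$, where $c_3(\lambda)$ is the leading-in-$n$ coefficient of $\kappa_3(\lambda)$. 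This is a single elementary inequality in $(\lambda,N)$; as consistency checks, both differences vanish at $N=0$ and at $\lambda=1/2$, reproducing the exact value $I_{\text{coh}}=0$ there, in agreement with the reflection symmetry of the binomial statistics underlying \eqref{pureloss}.

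The hard part will be turning this formal expansion into a theorem. Two points need genuine care: (i) a local limit theorem for $P(N,n,\lambda)$ with an \emph{explicit and uniform} remainder, so that tail and lattice effects do not contaminate the $O(1/n)$ coefficient; and (ii) the control of the discrete entropy by its continuous Edgeworth counterpart to $o(1/n)$. A convenient device for (i) is the semigroup property of the thermal-attenuator family \eqref{Lindblad}: since $\Phi_{e^{-t},\tau_N}$ is a one-parameter semigroup, one has $\omega_{1-\lambda}=\Phi_{\lambda/(1-\lambda),\tau_N}(\omega_\lambda)$, so the two laws being compared differ by a single further attenuation of transmissivity $\lambda/(1-\lambda)<1$; feeding the corresponding characteristic functions into an Esseen-type bound should yield the uniform estimate. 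Should a fully uniform remainder prove too delicate, I would fall back on a split in $\lambda$: near $\lambda=1/2$ expand in $\varepsilon=1/2-\lambda$ and establish that $I_{\text{coh}}$ has a strictly positive leading slope, while for smaller fixed $\lambda$ one takes $n$ large enough that the positive skewness term dominates --- precisely the regime $\bar\lambda_N(n)\to0$ suggested by the numerics. The genuine obstacle, and likely the reason the statement is only conjectured, is that the whole effect lives at order $1/n$ after an exact cancellation of all leading terms, so nothing short of a controlled second-order expansion can decide the sign.
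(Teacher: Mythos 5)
First, a point of order: the paper does \emph{not} prove this statement. It is Conjecture~\ref{congI}, supported only by the numerics of Fig.~\ref{Icoh figure}; the only rigorous result the authors obtain in this direction is Theorem~\ref{congl0}, which lives in the different scaling regime $\lambda=c/n\rightarrow0$ (and even there the positivity of the limiting bound is checked only numerically). So there is no proof to compare yours against, and I am judging the proposal on its own terms.

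Your framework is the right one, and it goes beyond anything in the paper. At fixed $\lambda\in(0,1/2)$ the laws $P(N,n,\lambda)$ and $P(N,n,1-\lambda)$ share the leading variance $\lambda(1-\lambda)(2N+1)n$, so $I_{\text{coh}}\rightarrow0$ and the sign is decided at order $1/n$; your mean and variance are correct, and the observation that the Gaussian term $\tfrac12\log_2\!\big(V(\lambda)/V(1-\lambda)\big)$ has the \emph{wrong} sign is exactly the right diagnosis. Moreover, the decisive inequality you leave unverified does check out. Computing factorial moments via $\langle c^{\dagger k}c^k\rangle=\sum_{j}\binom{k}{j}^2(1-\lambda)^j\lambda^{k-j}\frac{n!}{(n-j)!}(k-j)!\,N^{k-j}$ gives $c_3(\lambda)=\lambda(1-\lambda)\big[6\lambda N(N+1)+2\lambda-1\big]$, hence $c_3(\lambda)^2-c_3(1-\lambda)^2=[\lambda(1-\lambda)]^2(2\lambda-1)\cdot 6N(N+1)\big[6N(N+1)+2\big]$, and your limit collapses to
\begin{equation*}
\lim_{n\rightarrow\infty}n\,I_{\text{coh}}\!\left(\Phi_{\lambda,\ketbrasub{n}},\tau_N\right)=\frac{(1-2\lambda)\,N(N+1)\left[2N(N+1)+1\right]}{2\ln 2\;\lambda(1-\lambda)(2N+1)\left[4N(N+1)+1\right]}\,,
\end{equation*}
which is manifestly positive on $(0,1/2)$ and vanishes at $\lambda=1/2$ and at $N=0$ as it must. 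You should have carried out this computation: it is the entire content of your reduction, there was no a priori guarantee the skewness term would beat the Gaussian one, and had the sign come out the other way your plan would instead be a route to \emph{disproving} the conjecture.

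The genuine gap is the analytic one you flag but do not close. The semigroup decomposition plus an Esseen-type bound controls the Kolmogorov distance to a Gaussian at rate $O(n^{-1/2})$, which is two orders too weak to resolve an entropy difference of size $O(1/n)$. What is actually required is a local limit theorem for the lattice laws $P(N,n,\cdot)$ with a first-order Edgeworth correction and a remainder $o(1/(\sigma_n n))$ uniform in the lattice point, together with tail estimates and an Euler--Maclaurin argument showing that the Shannon entropy of the lattice law agrees with the differential entropy of its Edgeworth profile to $o(1/n)$. None of this is automatic, it is where essentially all of the work lies, and until it is done the proposal is a correct-looking and genuinely valuable heuristic --- one that strongly corroborates the conjecture and identifies the mechanism behind it --- but not a proof.
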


Conjecture~\ref{congI} has several implications since the quantity $I_{\text{coh}}\left(\Phi_{\lambda,\ketbrasub{n}},\tau_N\right)$ provides a lower bound on several capacities of $\Phi_{\lambda,\ketbrasub{n}}$. Indeed,~\eqref{bound_Cea}
and~\eqref{capId} imply that \begin{equation}\label{lboundCea}
    C_{\text{ea}}\left(\Phi_{\lambda,\ketbrasub{n}},N\right)\ge C\left(\Id,N\right)+I_{\text{coh}}\left(\Phi_{\lambda,\ketbrasub{n}},\tau_N\right)\,.
\end{equation}
In addition, the relation $C_{\text{ea}}=2Q_{\text{ea}}$ and~\eqref{capId} guarantee that 
\begin{equation}\label{lboundQea}
    Q_{\text{ea}}\left(\Phi_{\lambda,\ketbrasub{n}},N\right)\ge \frac{Q\left(\Id,N\right)+I_{\text{coh}}\left(\Phi_{\lambda,\ketbrasub{n}},\tau_N\right)}{2}\,,
\end{equation}
Finally,~\eqref{lsdth2} yields
\begin{equation}\label{lboundQ}
    Q\left(\Phi_{\lambda,\ketbrasub{n}},N\right)\ge I_{\text{coh}}\left(\Phi_{\lambda,\ketbrasub{n}},\tau_N\right)\,.
\end{equation}
The validity of Conjecture~\ref{congI}, together with the above lower bounds on the capacities, would prove Theorem~\ref{congCap}. 

\begin{thm} \label{congCap}
Suppose that Conjecture~\ref{congI} is valid. Then, for all $\lambda\in(0,1/2)$, $N>0$, and for $n\in\N$ sufficiently large it holds that 
\begin{align}
C_{\text{ea}}\left(\Phi_{\lambda,\ketbrasub{n}},N\right) &> C\left(\Id,N\right)\,, \label{SpecialCea} \\
Q_{\text{ea}}\left(\Phi_{\lambda,\ketbrasub{n}},N\right) &> Q\left(\Id,N\right)/2\,, \label{SpecialQea} \\
Q\left(\Phi_{\lambda,\ketbrasub{n}},N\right) &> 0\,. \label{SpecialQ}
\end{align}
\end{thm}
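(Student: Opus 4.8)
The plan is to read the three claimed inequalities directly off the capacity lower bounds already assembled in the text. Because Theorem~\ref{congCap} is conditional on Conjecture~\ref{congI}, the only new ingredient needed is the strict positivity $I_{\text{coh}}\left(\Phi_{\lambda,\ketbrasub{n}},\tau_N\right)>0$, which the conjecture supplies verbatim for all $\lambda\in(0,1/2)$, $N>0$, and $n$ sufficiently large --- precisely the regime in which the theorem is stated. No further analytic estimates are required.

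First I would fix $\lambda\in(0,1/2)$ and $N>0$ and invoke Conjecture~\ref{congI} to pick $n\in\N$ large enough that $I_{\text{coh}}\left(\Phi_{\lambda,\ketbrasub{n}},\tau_N\right)>0$. Substituting this into the entanglement-assisted classical-capacity bound~\eqref{lboundCea} gives $C_{\text{ea}}\left(\Phi_{\lambda,\ketbrasub{n}},N\right)\ge C\left(\Id,N\right)+I_{\text{coh}}\left(\Phi_{\lambda,\ketbrasub{n}},\tau_N\right)>C\left(\Id,N\right)$, which is~\eqref{SpecialCea}. The same substitution into~\eqref{lboundQea} yields $Q_{\text{ea}}\left(\Phi_{\lambda,\ketbrasub{n}},N\right)\ge\tfrac{1}{2}\left(Q\left(\Id,N\right)+I_{\text{coh}}\left(\Phi_{\lambda,\ketbrasub{n}},\tau_N\right)\right)>\tfrac{1}{2}Q\left(\Id,N\right)$, which is~\eqref{SpecialQea}; and into~\eqref{lboundQ} it yields $Q\left(\Phi_{\lambda,\ketbrasub{n}},N\right)\ge I_{\text{coh}}\left(\Phi_{\lambda,\ketbrasub{n}},\tau_N\right)>0$, which is~\eqref{SpecialQ}. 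This exhausts the three claims.

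The genuine difficulty lies not in this deduction but upstream of it. The bounds~\eqref{lboundCea}--\eqref{lboundQ} rest on choosing the thermal ansatz $\rho=\tau_N$ inside the variational formulas~\eqref{Cea} and~\eqref{lsdth2}, on the identity-channel capacities~\eqref{capId}, and on the relation $C_{\text{ea}}=2Q_{\text{ea}}$; all of these are already in place. The hard content --- the positivity of the coherent information for large $n$ --- is exactly what Conjecture~\ref{congI} postulates and what Fig.~\ref{Icoh figure} supports numerically, so under the hypothesis of the theorem there is effectively no obstacle. The one point worth double-checking is purely bookkeeping: that the quantifier ranges over $\lambda$, $N$, and $n$ in~\eqref{lboundCea}--\eqref{lboundQ} coincide with those in Conjecture~\ref{congI}, which they do.
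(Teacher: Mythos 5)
Your proposal is correct and is exactly the paper's argument: the authors likewise obtain Theorem~\ref{congCap} by feeding the positivity of $I_{\text{coh}}\left(\Phi_{\lambda,\ketbrasub{n}},\tau_N\right)$ granted by Conjecture~\ref{congI} into the three lower bounds~\eqref{lboundCea},~\eqref{lboundQea}, and~\eqref{lboundQ}. Nothing further is needed.
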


Let us analyse each of these lower bounds on the capacities.

First, since $C\left(\Id,N\right)$ is the best-achievable rate of bits reliably transmitted from Alice to Bob in the ideal case of absence of noise (i.e.~$\lambda=1$),~\eqref{SpecialCea} would imply the following statement. \emph{The pre-shared entanglement resource and the control of the environment state allow one to communicate with better performance than the ideal case of absence of noise, even if the transmissivity $\lambda>0$ is arbitrarily small}. This statement can be expressed by means of the inequality
\begin{equation}\label{sup_sigma}
    \sup_\sigma C_{\text{ea}}\left(\Phi_{\lambda,\sigma},N\right)> C\left(\Id,N\right)\,,
\end{equation}
for all $N>0$ and all $\lambda\in(0,1/2)$. Or, more specifically, as
\begin{equation}
    \sup_{n\in\N} C_{\text{ea}}\left(\Phi_{\lambda,\ketbrasub{n}},N\right)> C\left(\Id,N\right)\,.
\end{equation}
\eqref{sup_sigma} is well known to be true if $\lambda>1/2$ (it suffices to take $\sigma=\ketbra{0}$). Here we find that~\eqref{sup_sigma} can hold also for $0<\lambda<1/2$. In addition, it is well known that the resource of pre-shared entanglement between Alice and Bob helps communication. Here, we show that this resource, together with the ability to choose the environment state, is able to completely neutralise the action of the noise on an arbitrarily long optical fibre used to transmit bits.   

To put~\eqref{SpecialCea} into perspective, notice that if the environment is in a thermal state (as in the usual scheme of an optical fibre), for $\lambda<1/2$ it is not possible to transmit bits with better performance than the unassisted noiseless scenario, even when pre-shared entanglement is available. Indeed, for all $\lambda<1/2$ and all $N,\nu\ge0$ it holds that 
\begin{equation}
    C\left(\Phi_{\lambda,\tau_\nu},N\right)\le C_{\text{ea}}\left(\Phi_{\lambda,\tau_\nu},N\right)< C\left(\Id,N\right)\,.
\end{equation}
In addition, since $C_{\text{ea}}\left(\Phi_{\lambda,\tau_\nu},N\right)$ approaches $0$ as $\lambda$ approaches $0$, the environment control would infinitely improve the performance of entanglement-assisted communication for $\lambda\rightarrow 0^+$.

Second,~\eqref{SpecialQea} would imply {that the
pre-shared entanglement resource and the appropriate control of the environment state always allow one to transmit qubits with a rate equal to at least half of the energy-constrained quantum capacity of the noiseless channel.} 

Third,~\eqref{SpecialQ} would constitute another proof of the phenomenon of {D-HQCOM}~\cite{die-hard}. That is, for any arbitrarily low non-zero value of the transmissivity, there exists an environment state for which the quantum capacity of the corresponding general attenuator is strictly positive.~\eqref{SpecialQ} is proved in~\cite{die-hard} only for $N\ge1/2$ and for $n\in\N$ such that $1/\lambda-1\le n\le 1/\lambda$. Here, we are able to explore {D-HQCOM} also for $N\in(0,1/2)$.

Conjecture~\ref{congI}, and hence also Theorem~\ref{congCap}, rests on the hypothesis that $n$ has to be sufficiently large. A natural question that arises in this context is: for a fixed $N>0$ and $\lambda\in(0,1/2)$, what is the minimum $n\in\N$ such that $I_{\text{coh}}\left(\Phi_{\lambda,\ketbrasub{n}},\tau_N\right)>0$\,?
Let $\bar{n}_N(\lambda)$ denote such a minimum, i.e. 
\bb\label{nmineq}
&\bar{n}_N(\lambda)\coloneqq\min\{n\in\N:I_{\text{coh}}\left(\Phi_{\lambda,\ketbrasub{n}},\tau_N\right)> 0\} \,.
\ee
From numerical investigations, the above optimisation problem seems to admit a solution for all $N>0$ and $\lambda\in(0,1/2)$. More precisely, the plot of $\bar{n}_N(\lambda)$ in Fig.~\ref{nmin} suggests that
\begin{equation}\label{defKN}
	\bar{n}_N(\lambda)\sim \frac{K(N)}{\lambda}\quad\text{for }\lambda\rightarrow0^+\,,
\end{equation}
where $K(N)$ is a monotonically increasing positive function. In Fig.~\ref{K(N)} we plot an estimate of $K(N)$.
\begin{figure}[t]
	\centering
	\includegraphics[width=1.0\linewidth]{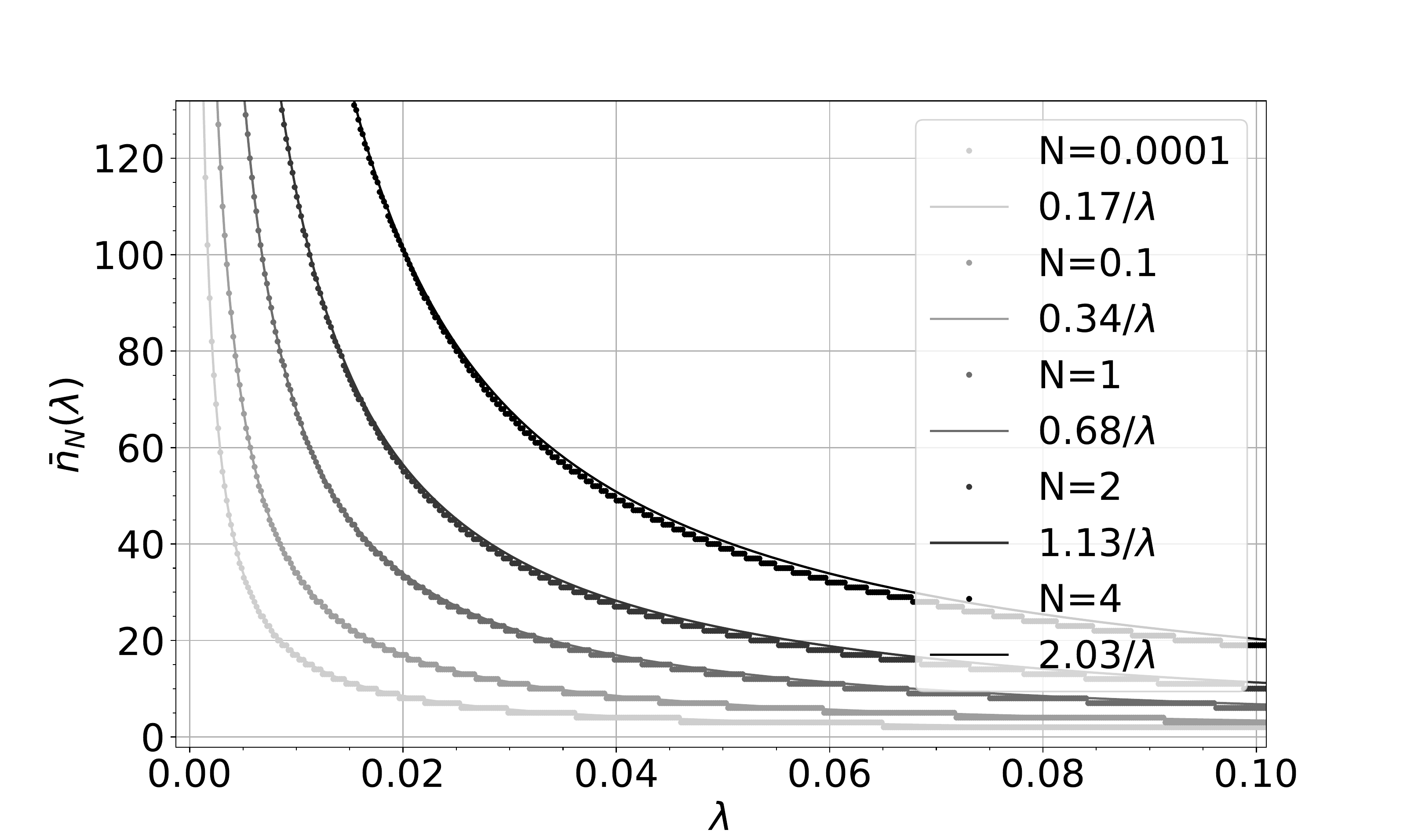}
	\caption{The function $\bar{n}_N(\lambda)$ plotted with respect to the transmissivity $\lambda$ with a step of $\Delta\lambda=0.0002$ for several values of the energy-constraint $N$. The continuous lines are functions of the form $K(N)/\lambda$.}
	\label{nmin}
\end{figure}
\begin{figure}[t]
	\centering
	\includegraphics[width=1.0\linewidth]{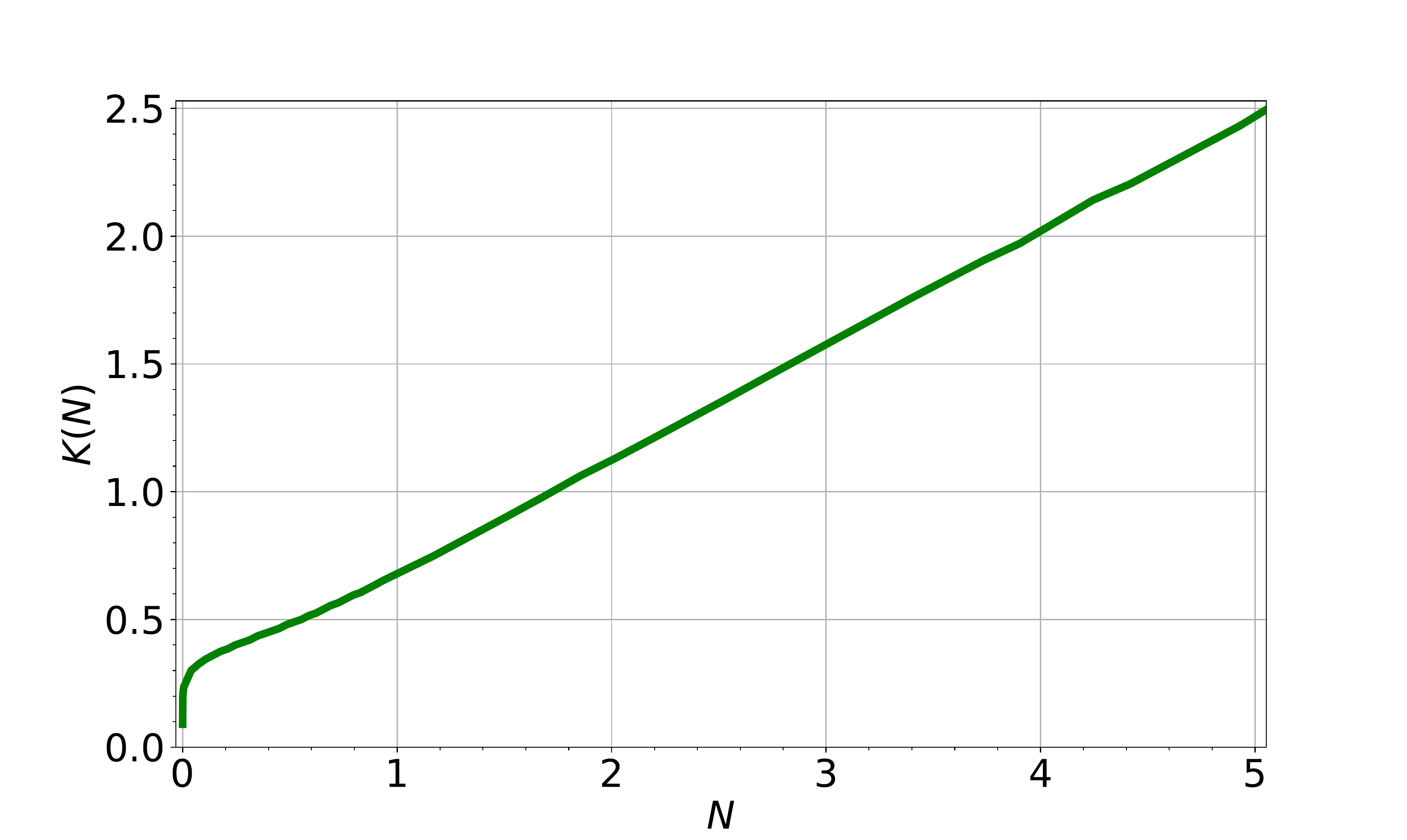}
	\caption{An estimate of the function $K(N)$ plotted with respect to the energy-constraint $N$.}
	\label{K(N)}
\end{figure}
From the above numerical observations it seems that the following more specific conjecture holds.
\begin{cj} \label{cong2}
	Let $N>0$ and $\lambda\in\left(0,1/2\right)$. It holds that $I_{\text{coh}}\left(\Phi_{\lambda,\ketbrasub{n}},\tau_N\right)> 0$ if and only if $n\ge\bar{n}_N(\lambda)$,
	where $\bar{n}_N(\lambda)\sim \frac{K(N)}{\lambda}$ for $\lambda>0$ sufficiently small with $K(N)$ being a proper increasing positive function. As a consequence, for all $n\ge \bar{n}_N(\lambda)$ the inequalities in~\eqref{SpecialCea},~\eqref{SpecialQea}, and~\eqref{SpecialQ} hold.
\end{cj}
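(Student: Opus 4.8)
The plan is to prove the three assertions of Conjecture~\ref{cong2} in turn --- the threshold ``if and only if'', the law $\bar n_N(\lambda)\sim K(N)/\lambda$, and the monotonicity of $K(N)$ --- after which the ``consequence'' clause is immediate. The essential preliminary observation is that both $\Phi_{\lambda,\ketbrasub{n}}(\tau_N)$ and its weak complement are diagonal in the Fock basis (the Fock input and the thermal environment are Fock-diagonal, and the {BS} conserves total photon number), so by~\eqref{cohprob} the entire problem reduces to the asymptotics of the \emph{classical} Shannon entropies $H(P(N,n,t))$ of the explicit family $\{P_l(N,n,t)\}_{l\in\N}$ given in~\eqref{simpler}, evaluated at $t=\lambda$ and $t=1-\lambda$.

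For the scaling law, which is the heart of the statement, I would fix $N>0$ and study the regime $\lambda\to0^+$ with $n=\floor{x/\lambda}$ for a continuous parameter $x>0$, aiming to show that the limit
\begin{equation*}
\Delta(x,N)\coloneqq\lim_{\lambda\to0^+} I_{\text{coh}}\left(\Phi_{\lambda,\ketbrasub{n}},\tau_N\right),\qquad n=\floor{x/\lambda},
\end{equation*}
exists, is continuous and strictly increasing in $x$, satisfies $\Delta(0^+,N)=-g(N)<0$, and has a unique positive root $K(N)$ with $\Delta(x,N)>0$ precisely for $x>K(N)$. By~\eqref{cohprob} this limit is a difference of two entropies: $H(P(N,n,\lambda))$, the entropy of $\ket{n}$ sent through the \emph{near-perfect} thermal attenuator of transmissivity $1-\lambda$, and $H(P(N,n,1-\lambda))$, the entropy through the \emph{strongly lossy} one of transmissivity $\lambda$. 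In the first only $\mathrm{Poisson}(x)$-many photons are lost (and $O(\lambda)$ thermal photons added), while in the second $\mathrm{Poisson}(x)$-many Fock photons survive and combine with the almost fully transmitted thermal state. Identifying the exact limiting distributions from~\eqref{simpler} --- e.g.\ by a generating-function/saddle-point analysis of the inner $m$-sum, or via the Kraus form of Theorem~\ref{TeoKraus} applied to $\ketbra{n}$ --- and proving strict monotonicity of $\Delta(\cdot,N)$ yields $\bar n_N(\lambda)\sim K(N)/\lambda$. Monotonicity of $K(N)$ then follows by implicit differentiation of $\Delta(K(N),N)=0$, once the signs of $\partial_x\Delta$ and $\partial_N\Delta$ at the root are controlled, matching the intuition that more thermal noise requires more environmental photons to overcome.

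To upgrade this limiting crossing into the exact ``if and only if'' at finite $\lambda$, I would prove that $n\mapsto I_{\text{coh}}\left(\Phi_{\lambda,\ketbrasub{n}},\tau_N\right)$ changes sign exactly once, ideally through the discrete monotonicity $I_{\text{coh}}\left(\Phi_{\lambda,\ketbrasub{n+1}},\tau_N\right)>I_{\text{coh}}\left(\Phi_{\lambda,\ketbrasub{n}},\tau_N\right)$ for every $n$. A natural route is to derive a recursion relating $\{P_l(N,n+1,t)\}_{l}$ to $\{P_l(N,n,t)\}_{l}$ (adding one environmental photon acts through $b^\dagger$ on $\ketbra{n}$) and to convert it into a comparison of the two Shannon entropies, from which a single-crossing property would give uniqueness of $\bar n_N(\lambda)$. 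Finally, the ``consequence'' requires no new work: once $I_{\text{coh}}>0$ for $n\ge\bar n_N(\lambda)$, the inequalities~\eqref{SpecialCea},~\eqref{SpecialQea}, and~\eqref{SpecialQ} follow verbatim from the lower bounds~\eqref{lboundCea},~\eqref{lboundQea}, and~\eqref{lboundQ}.

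The main obstacle, and the reason the statement remains conjectural, is the rigorous control of this entropy difference in the scaling limit. The Shannon entropy is not continuous in any topology that crude moment-matching supplies, and the leading Gaussian approximation of the two output distributions \emph{cancels}: modelling the surviving-Fock statistics as $\mathrm{Poisson}(x)$ and the transmitted thermal as an independent $\mathrm{thermal}(N)$ makes the lossy entropy exceed the near-perfect one for \emph{every} finite $x$, which would wrongly predict $\Delta<0$ throughout. The sign of $\Delta$ at finite $x$ is therefore decided by the sub-leading, genuinely non-classical corrections --- the oscillatory, sub-Poissonian photon statistics of the displaced attenuated Fock state, averaged over the thermal displacement --- which must be retained and shown not to average away. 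Producing an asymptotic expansion of $H(P(N,n,\cdot))$ sharp enough to resolve this cancellation, with error terms uniform in $x$ on compact sets, is the crux; the discrete monotonicity in $n$ is a second, independent difficulty of the same nature.
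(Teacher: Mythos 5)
First, be clear about what you are being measured against: the statement is Conjecture~\ref{cong2}, and the paper does not prove it. What the paper offers is (i)~numerical evidence for the existence of the threshold $\bar n_N(\lambda)$ and for the scaling $\bar n_N(\lambda)\sim K(N)/\lambda$ (Figs.~\ref{nmin} and~\ref{K(N)}), (ii)~the observation that the ``consequence'' clause follows at once from the capacity lower bounds \eqref{lboundCea}, \eqref{lboundQea}, \eqref{lboundQ}, and (iii)~one rigorous partial result, Theorem~\ref{congl0}, in the scaling limit $\lambda=c/n$, $n\to\infty$. Your proposal is therefore correctly framed as a programme rather than a proof, and you are right that the conjecture remains open.

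On the substance: your treatment of the consequence clause is identical to the paper's. Your scaling-limit plan for extracting $K(N)$ is, in essence, Theorem~\ref{congl0}: the paper sets $\lambda=c/n$, shifts $\Phi_{c/n,\ketbrasub{n}}(\tau_N)$ by the unitary $T_{-n}$, and computes the limiting photon-number distributions \emph{exactly} from \eqref{simpler} --- Tannery's theorem suffices, no saddle-point or generating-function machinery is needed --- obtaining the Skellam/Bessel law \eqref{prinf} and the Laguerre law \eqref{ninfprob}; the entropy limits are then controlled via weak-to-trace-norm convergence together with lower semicontinuity and the finite-energy continuity lemma, yielding the rigorous bound \eqref{entdifflow}. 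This disposes of one of the obstacles you raise (identifying the limiting distributions and making sense of their entropies); note also that the paper obtains only a $\liminf$ lower bound, not the existence of your limit $\Delta(x,N)$, and it never relies on the Gaussian-cancellation heuristic you describe. What remains genuinely open, in your plan and in the paper alike, is exactly: the analytic positivity of $H(q(N,c))-H(p(N,c))$ for $c\gtrsim K(N)$ (the paper checks it only numerically, Fig.~\ref{entropy_diff}); the single-crossing/monotonicity in $n$ at fixed $\lambda$, which is what the ``if and only if'' with $\bar n_N(\lambda)$ defined as in \eqref{nmineq} really asserts, and for which neither you nor the paper offers even a partial argument; and the monotonicity of $K(N)$. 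The verdict is that your proposal is a faithful reconstruction of the paper's partial progress plus an honest inventory of the missing steps, but it does not prove the statement --- nor does the paper, since the crux (the sign of the limiting entropy difference and the discrete monotonicity in $n$) is precisely what is not known.
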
 
\subsection{Limit of vanishing transmissivity}\label{subsec_Cea} 
Above we have seen that the positivity of the coherent information {$I_{\text{coh}}\left(\Phi_{\lambda,\ketbrasub{n}},\tau_N\right)$ in~\eqref{Icoh}}
has strong implications in terms of performances of the communication across an optical fibre with transmissivity $\lambda$. The case in which the optical fibre is very long --- i.e.~the transmissivity $\lambda>0$ is very small --- is the most interesting one. In Theorem~\ref{congl0} we will take into account this case providing a lower bound on the coherent information in the limit in which $\lambda$ goes to $0^+$ as $c/n$ where $n\rightarrow\infty$ and $c>0$ is a constant. Interestingly, such a lower bound turns out to be positive, as we will see later.

Before stating Theorem~\ref{congl0}, let us explain why we take the transmissivity to be 
\begin{equation}
    \lambda_n=c/n\,.
\end{equation} First,~\eqref{defKN} suggests that we need to take $\lambda_n n\gtrsim K(N)$ in order to obtain $I_{\text{coh}}\left(\Phi_{\lambda_n,\ketbra{n}},\tau_N\right)>0$. Therefore, we are led to consider as ansatz a sequence {$\{\lambda_n\}_{n\in\N}$} 
such that $\lambda_n n$ tends either to infinity or to a constant $\gtrsim K(N)$. The choice of $\lambda_n n$ proportional to a constant makes the calculation simpler since, in this case, we will see that the sequence $\left\{\Phi_{1-\lambda_n,\ketbra{n}}(\tau_N)\right\}_{n\in\N}$ converges as $n\rightarrow\infty$. In addition, notice that if $\lambda_n n$ is proportional to a constant, then the mean photon number of $\Phi_{1-\lambda_n,\ketbra{n}}(\tau_N)$ converges. While, if $\lambda_n n\rightarrow\infty$, then this mean photon number diverges.

Observe that the mean photon number of $\Phi_{\lambda,\sigma}(\rho)$ can be calculated as
\bb\label{energia_general}
	&\langle a^\dagger a\rangle_{\Phi_{\lambda,\sigma}(\rho)}\\&=\lambda\langle a^\dagger a\rangle_\rho +(1-\lambda)\langle b^\dagger b\rangle_\sigma+2\sqrt{\lambda(1-\lambda)}\Re\left(\langle a\rangle_\rho \langle b^\dagger\rangle_\sigma\right)\,,
\ee
as proved in Lemma~\ref{LemmaEner} in the Appendix.
\begin{thm}\label{congl0}
	For all $N>0$ and $c>0$ it holds that
	\begin{equation}\label{entdifflow}
		\liminf\limits_{n\rightarrow\infty}I_{\text{coh}}\left(\Phi_{\frac{c}{n},\ketbrasub{n}},\tau_N\right)\ge H\left(q(N,c)\right)-H\left(p(N,c)\right)\,,
	\end{equation}
	where $\{q_{k}(N,c)\}_{k\in\mathbb{Z}}$ and $\{p_{k}(N,c)\}_{k\in\mathbb{Z}}$ are two probability distributions given by
	\begin{equation}\label{prinf}
	q_k(N,c)=e^{-c(2N+1)}\left(\frac{N}{N+1}\right)^{k/2}I_{|k|}\left(2c\sqrt{N(N+1)}\right)
	\end{equation}
	\bb\label{ninfprob}
	&p_k(N,c)=\\&
	\begin{cases}
	\frac{N^k}{(N+1)^{k+1}}e^{-c/(N+1)}L_k\left(-\frac{c}{N(N+1)}\right)\,, & \text{if $k\ge 0$,} \\
	0\,, & \text{otherwise.}
	\end{cases}
	\ee
	where $I_k(\cdot)$ is the $k$-th Bessel function of the first kind i.e.
	\begin{equation}\label{besself}
	I_k(z)\coloneqq\left(\frac{z}{2}\right)^k 	\sum_{m=0}^{\infty}\frac{\left(\frac{z}{2}\right)^{2m}}{m!(k+m)!}\,,
	\end{equation}
	and $L_k(\cdot)$ is the $k$-th Laguerre polynomial i.e.
	\begin{equation}
	L_k(x)\coloneqq\sum_{m=0}^k\frac{(-1)^m}{m!}\binom{k}{m}x^m\,.
	\end{equation}
\end{thm}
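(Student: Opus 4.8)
The plan is to start from the exact identity \eqref{cohprob}, which expresses the quantity of interest as a difference of two Shannon entropies,
\begin{equation*}
I_{\text{coh}}\left(\Phi_{c/n,\ketbrasub{n}},\tau_N\right)=H\!\left(P(N,n,c/n)\right)-H\!\left(P(N,n,1-c/n)\right),
\end{equation*}
and to analyse the two probability distributions $A_n\coloneqq P(N,n,c/n)$ and $B_n\coloneqq P(N,n,1-c/n)$ separately using the closed form \eqref{simpler}. I would show that, suitably reindexed, $A_n$ converges pointwise to $q(N,c)$ while $B_n$ converges pointwise to $p(N,c)$, and then combine these limits with the appropriate semicontinuity properties of $H$. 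The asymmetry between the two terms is decisive: by \eqref{energia_general} the mean photon number of $A_n$ (the output of $\Phi_{1-c/n,\tau_N}$ on $\ketbrasub{n}$, via \eqref{scambio}) behaves like $n-c\to\infty$, whereas that of $B_n$ (the output of $\Phi_{c/n,\tau_N}$ on $\ketbrasub{n}$) converges to the finite value $c+N$. This dichotomy dictates which direction of semicontinuity is available for each term.

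For $A_n$ I would reindex by setting $l=n+k$, $k\in\mathbb Z$, and insert $\lambda=c/n$ into \eqref{simpler}. Tracking each factor as $n\to\infty$, the prefactor $(1+Nc/n)^{-(2n+k+1)}\to e^{-2Nc}$, the factor $(1-c/n)^{n-m}\to e^{-c}$, and the two binomials contribute $\binom{n}{m}\binom{n+k}{m+k}\sim n^{2m+k}/(m!\,(m+k)!)$, which exactly cancels the $n$-dependence of $\lambda^{2m+k}=(c/n)^{2m+k}$. The surviving sum over $m$, after collecting $c^{2m+k}\bigl(N(N+1)\bigr)^{m}N^{k}$, is precisely the series \eqref{besself} for the modified Bessel function, yielding $(A_n)_{n+k}\to q_k(N,c)$ as in \eqref{prinf}. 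That $q$ is a genuine probability distribution follows from the Bessel generating function $\sum_{k\in\mathbb Z}t^{k}I_{|k|}(z)=e^{(z/2)(t+1/t)}$ evaluated at $t=\sqrt{N/(N+1)}$, $z=2c\sqrt{N(N+1)}$, which returns $e^{c(2N+1)}$ and hence $\sum_k q_k=1$.

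For $B_n$ I would instead set $\lambda=1-c/n$ in \eqref{simpler}; now $1-\lambda=c/n\to0$, so the factor $(c/n)^{n-m}$ forces the dominant contributions to come from $m=n-j$ with $j$ fixed. Writing $(N+1)^{n-j}$ together with the prefactor $(1+N(1-c/n))^{-(l+n+1)}=(N+1)^{-(l+n+1)}\bigl(1-\tfrac{Nc}{(N+1)n}\bigr)^{-(l+n+1)}$ produces a finite limit carrying the exponential factor $e^{-c/(N+1)}$, while $\binom{n}{j}(c/n)^{j}\to c^{j}/j!$ and $\binom{l}{j}$ survives. Collecting terms identifies the finite sum over $j$ with a Laguerre polynomial, giving $(B_n)_l\to p_l(N,c)$ as in \eqref{ninfprob}; the Laguerre generating function $\sum_l L_l(x)t^l=(1-t)^{-1}e^{-xt/(1-t)}$ at $t=N/(N+1)$ confirms $\sum_l p_l=1$. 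I would also record that $\sum_l l\,p_l=c+N$, matching the exact energy of $B_n$ obtained from \eqref{energia_general}.

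Finally I would assemble the two limits. Since $H$ is lower semicontinuous with respect to pointwise convergence of probability distributions (being a supremum over finite truncations of continuous functionals), the reindexing invariance of $H$ gives $\liminf_n H(A_n)=\liminf_n H(\{(A_n)_{n+k}\}_k)\ge H(q)$, requiring no control on the diverging energy of $A_n$. For the subtracted term lower semicontinuity is useless; instead I would invoke a continuity result for the entropy under convergence of the mean photon number (of Shirokov type): since $B_n\to p$ pointwise and the energies converge to the finite limit $c+N=\sum_l l\,p_l$, it follows that $H(B_n)\to H(p)$. Combining, $\liminf_n I_{\text{coh}}=\liminf_n H(A_n)-\lim_n H(B_n)\ge H(q)-H(p)$, which is the claim. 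The main obstacle is exactly this last continuity statement for $H(B_n)$: pointwise convergence alone yields only $\liminf_n H(B_n)\ge H(p)$, the wrong direction, so the bound $\limsup_n H(B_n)\le H(p)$ must be secured through the convergence of energies and a dedicated entropy-continuity lemma, whereas the diverging-energy term $A_n$ is dispatched by the comparatively soft lower-semicontinuity argument.
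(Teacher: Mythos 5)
Your proposal is correct and follows essentially the same route as the paper's proof: the same recentering of the first distribution (the paper implements your reindexing $l=n+k$ as a unitary shift $T_{-n}$ on an extended Fock space, then passes from weak to trace-norm convergence via Davies' lemma), the same pointwise limits yielding the Bessel and Laguerre expressions, lower semicontinuity of the entropy for the diverging-energy term, and energy-controlled entropy continuity for the subtracted term (the paper invokes Lemma~11.8 of Holevo's book under the Gibbs hypothesis, using the uniform energy bound $N+c$ rather than convergence of energies). The one step you gloss over that the paper treats carefully is the interchange of the $n\to\infty$ limit with the sum over $m$ of growing length in the computation of $q_k$, which the paper justifies via Tannery's theorem.
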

Before proving Theorem~\ref{congl0}, let us explain its implications. As discussed above, our aim is to show that the coherent information is positive in the limit of vanishing transmissivity. As guaranteed by Theorem~\ref{congl0}, if
\bb
H\left(q\left(N,c\right)\right)-H\left(p\left(N,c\right)\right)>0\,,
\ee
then for $n$ sufficiently large it holds that
\bb
I_{\text{coh}}\left(\Phi_{\frac{c}{n},\ketbrasub{n}},\tau_N\right)>0\,.
\ee
We do not expect that $H\left(q\left(N,c\right)\right)-H\left(p\left(N,c\right)\right)>0$ for all $c,N>0$. Indeed,~\eqref{defKN} suggests that, for fixed $N>0$, we need to take $c\gtrsim K(N)$ in order to obtain $I_{\text{coh}}\left(\Phi_{\frac{c}{n},\ketbrasub{n}},\tau_N\right)>0$ for $n$ sufficiently large ($K(N)$ is estimated in Fig.~\ref{K(N)}). 
This is the reason why we choose $c$ to be a function of $N$. In particular, we choose $c(N)$ so that the inequality $c(N)\gtrsim K(N)$ is satisfied.

In Fig.~\ref{entropy_diff} we plot the function $H\left(q\left(N,c(N)\right)\right)-H\left(p\left(N,c(N)\right)\right)$ where $c(N)$ is chosen to be $$c(N)\coloneqq N+\alpha$$ and $\alpha$ is a constant. 
\begin{figure}[t]
\centering
\includegraphics[width=1.0\linewidth]{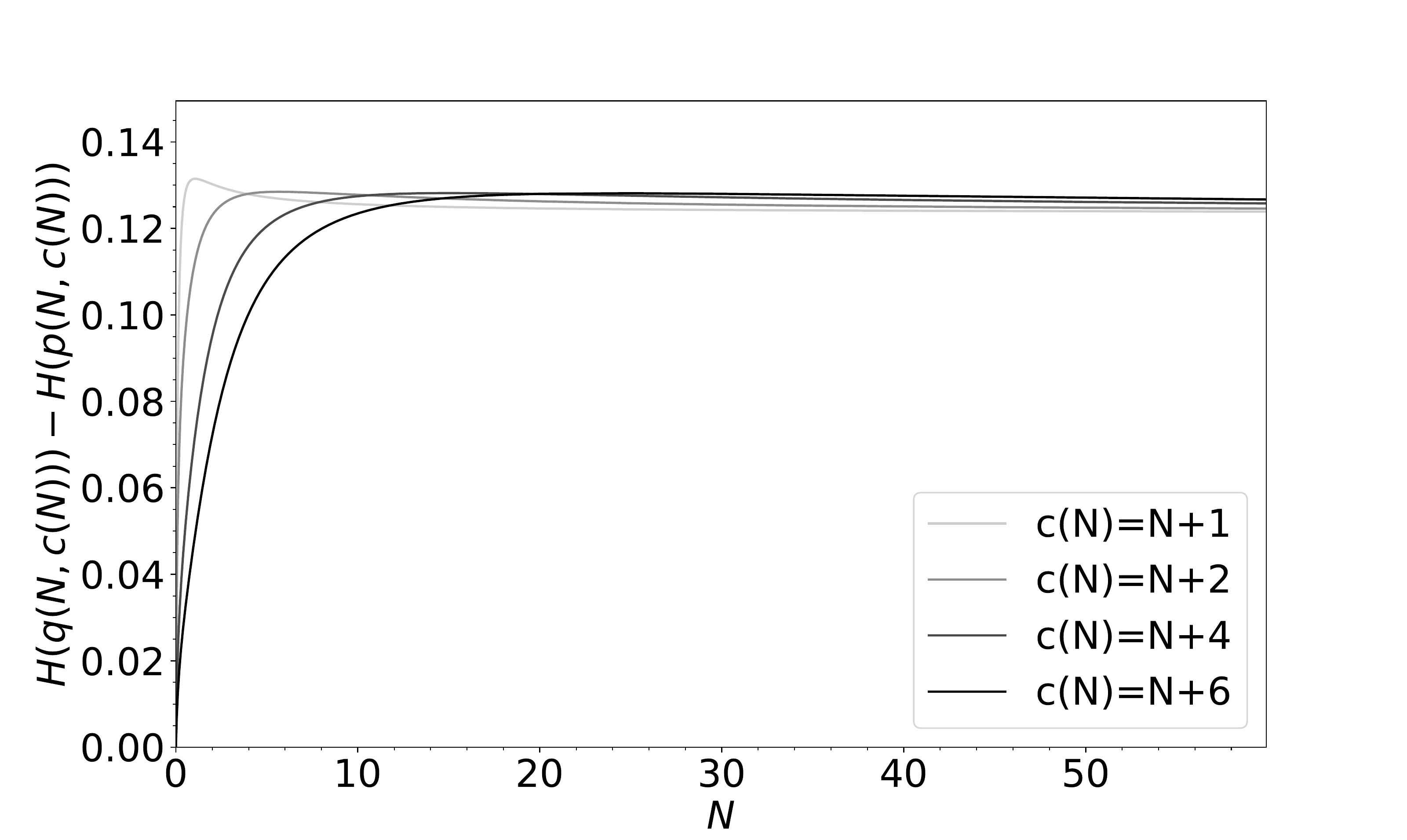}
\caption{The functions $H\left(q\left(N,c(N)\right)\right)-H\left(p\left(N,c(N)\right)\right)$ plotted with respect to the variable $N$ for several choices of the function $c(N)$.}
\label{entropy_diff}
\end{figure}
From Fig.~\ref{entropy_diff} we see that
\begin{equation}\label{ineqcong}
    H\left(q\left(N,c(N)\right)\right)-H\left(p\left(N,c(N)\right)\right)>0
\end{equation}
for proper choices of the function $c(N)$. We do not yet possess an analytical proof of the fact that~\eqref{ineqcong} holds for all $N>0$, but the numerical evidence we have gathered, depicted in Fig.~\ref{entropy_diff}, is very clear. Indeed, the function $H\left(q\left(N,c(N)\right)\right)-H\left(p\left(N,c(N)\right)\right)$ appears to have a positive horizontal asymptote. As a consequence, Theorem~\ref{congl0}, together with the lower bounds in~\eqref{lboundCea},~\eqref{lboundQea}, and~\eqref{lboundQ},  establishes that:
\bb\label{SpecialCea0}
\liminf_{n\rightarrow\infty} C_{\text{ea}}\left(\Phi_{\frac{c(N)}{n},\ketbrasub{n}},N\right)&>C\left(\Id,N\right)\,,  \\  
\liminf_{n\rightarrow\infty} Q_{\text{ea}}\left(\Phi_{\frac{c(N)}{n},\ketbrasub{n}},N\right)&>Q\left(\Id,N\right)/2\,, \\  
\liminf_{n\rightarrow\infty} Q\left(\Phi_{\frac{c(N)}{n},\ketbrasub{n}},N\right)&>0\,.  
\ee
These inequalities show that all the discussed implications of Theorem~\ref{congCap} are valid in the physically interesting case in which $\lambda>0$ is very small. Namely, we have proved that for arbitrarily small values of $\lambda>0$ there exist environment states that allow one to reliably transmit:
\begin{itemize}
    \item Bits with better performances than in the (unassisted) ideal case of absence of noise, provided that pre-shared entanglement assists communication. In other words, the ability of choosing the environment state and the entanglement resource suffice to neutralise the effect of the noise affecting an optical fibre used to transmit bits, no matter how small the signal-to-noise ratio is.
    \item Qubits with a rate larger than half of the best-achievable one in the (unassisted) ideal case of absence of noise, provided that pre-shared entanglement assists communication.
    \item Qubits at a strictly positive rate that is moreover independent of $\lambda$. This result was already known from~\cite{die-hard}.
\end{itemize}

We are now ready to prove Theorem~\ref{congl0}.

\begin{proof}[Proof of Theorem~\ref{congl0}]
From~\eqref{Icoh}, we have  
\bb\label{conscons}
&I_{\text{coh}}\left(\Phi_{\frac{c}{n},\ketbrasub{n}},\tau_N\right)\\&=S\left(\Phi_{\frac{c}{n},\ketbrasub{n}}(\tau_N)\right) -S\left(\Phi_{1-\frac{c}{n},\ketbrasub{n}}(\tau_N)\right)\,,
\ee
where, by using~\eqref{defP}, we can write:
\bb\label{phia}
\Phi_{\frac{c}{n},\ketbrasub{n}}(\tau_N)&=\sum_{l=0}^{\infty}P_l\left(N,n,\frac{c}{n}\right)\ketbra{l}\,,\\
\Phi_{1-\frac{c}{n},\ketbrasub{n}}(\tau_N)&=\sum_{l=0}^{\infty}P_l\left(N,n,1-\frac{c}{n}\right)\ketbra{l}\,.\
\ee
Before explaining our proof, let us forge our intuition by looking at Fig.~\ref{ProbIntuition}.
\begin{figure}[t]
	\centering
	\includegraphics[width=1\linewidth]{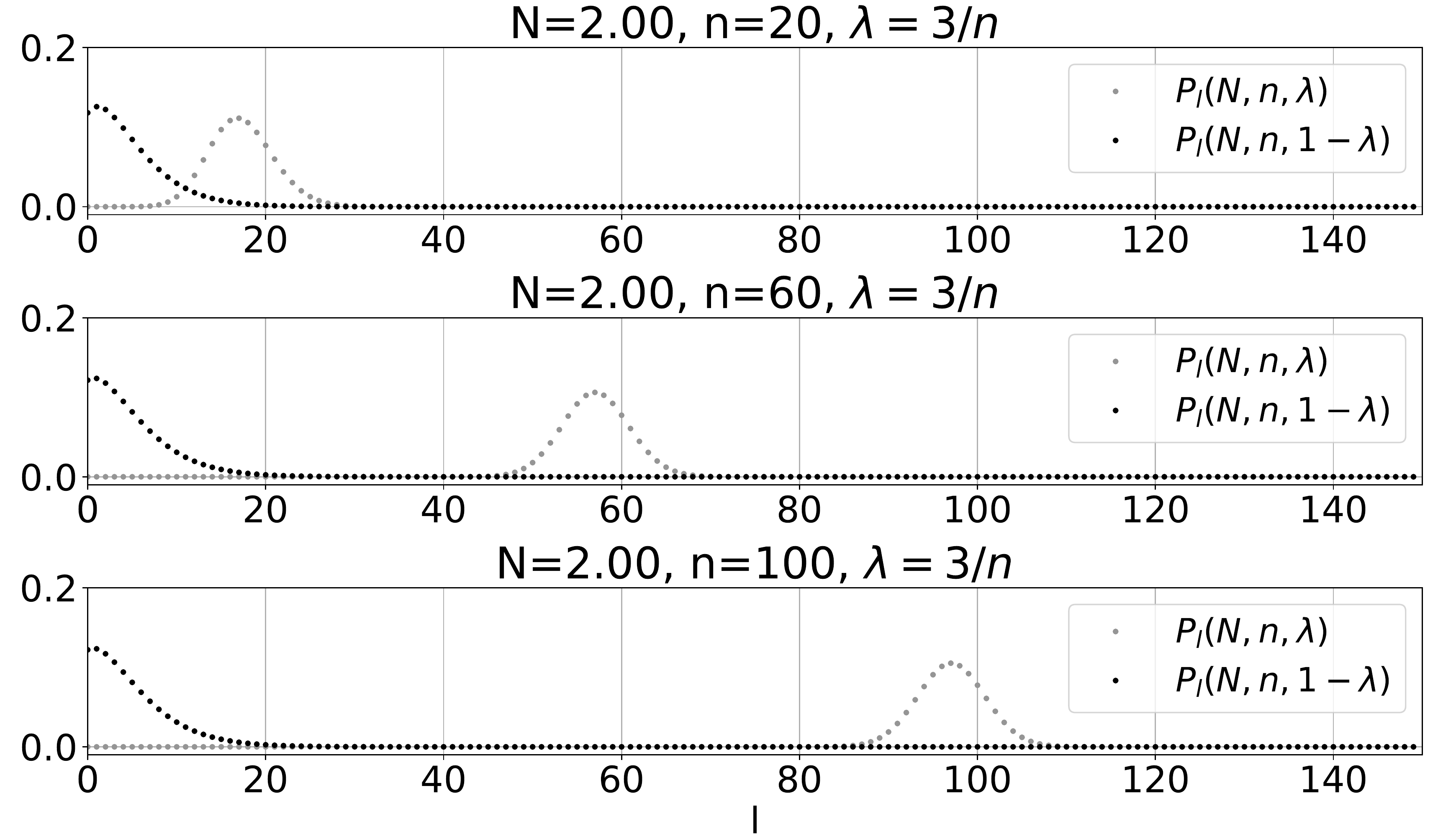}
	\caption{The probability distributions $P_l(2,n,\frac{3}{n})$ and $P_l(2,n,1-\frac{3}{n})$ plotted with respect to the index $l$ for several values of $n$. These probability distributions are computed by using~\eqref{simpler}.}
	\label{ProbIntuition}
\end{figure}
While $\left\{P_l(N,n,1-\frac{c}{n})\right\}_{l\ge 0}$ seems to converge for $n\rightarrow\infty$, the distribution $\left\{P_l(N,n,\frac{c}{n})\right\}_{l\ge 0}$ does not. Indeed, the distribution $\left\{P_l(N,n,\frac{c}{n})\right\}_{l\ge 0}$ can be seen to move to the right as $n$ increases, with a mean value of (see~\eqref{energia_general})
\bb
    \sum_{l=0}^\infty l\,P_l\left(N,n,\frac{c}{n}\right)&=\Tr\left[a^\dagger a\,\Phi_{\frac{c}{n},\ketbrasub{n}}(\tau_N)\right] \\&=\frac{c}{n}N+\left(1-\frac{c}{n}\right)n\stackrel{n\rightarrow\infty}{\simeq}n-c\,,
\ee
keeping constant its peak shape. This is the reason why we are going to unitarily shift to the left the state $\Phi_{\frac{c}{n},\ketbrasub{n}}(\tau_N)$ in order to have a new state which does converge in the limit $n\rightarrow\infty$.

Let us extend our harmonic oscillator Hilbert space by adding new orthonormal states $\{\ket{-i}\}_{i\in\N^+}$ so that $\{\ket{k}\}_{k\in\mathbb{Z}}$ is an orthonormal basis of the resulting Hilbert space $\HH_{\text{ext}}$.
We define the extension of the operators $\Phi_{\frac{c}{n},\ketbrasub{n}}(\tau_N)$ and $\Phi_{1-\frac{c}{n},\ketbrasub{n}}(\tau_N)$ to $\HH_{\text{ext}}$ such that for all $k\in\N^+$ it holds that (for the sake of simplicity we are going to denote an operator and its extension in the same way) 
\bb
\Phi_{\frac{c}{n},\ketbrasub{n}}(\tau_N)\ket{-k}&=0\,,\\
\Phi_{1-\frac{c}{n},\ketbrasub{n}}(\tau_N)\ket{-k}&=0\,.
\ee
Moreover, let us define for all $k\in\mathbb{Z}$ the shift operators $T_k$ which act on the extended Hilbert space as
\begin{equation}
T_{k}\ket{i}=\ket{i+k}\quad\forall\,  i\in\mathbb{Z}\,.
\end{equation}
Since $T_k$ is unitary, it holds that
\begin{equation}
S\left(\Phi_{\frac{c}{n},\ketbrasub{n}}(\tau_N)\right) =S\left(T_{-n}\Phi_{\frac{c}{n},\ketbrasub{n}}(\tau_N)T_{-n}^\dagger\right).
\end{equation}
Consequently,~\eqref{conscons} implies that
\bb\label{conscons2}
&I_{\text{coh}}\left(\Phi_{\frac{c}{n},\ketbrasub{n}},\tau_N\right) \\&=S\left(T_{-n}\Phi_{\frac{c}{n},\ketbrasub{n}}(\tau_N)T_{-n}^\dagger\right)  -S\left(\Phi_{1-\frac{c}{n},\ketbrasub{n}}(\tau_N)\right)\,.
\ee
Unlike the sequence of operators 
\bb
\left\{\Phi_{\frac{c}{n},\ketbrasub{n}}(\tau_N)\right\}_{n\in\N}\subseteq\mathfrak{S}(\HH_S)\,,
\ee
its shifted version, i.e.
\bb
\left\{T_{-n}\Phi_{\frac{c}{n},\ketbrasub{n}}(\tau_N)T_{-n}^\dagger\right\}_{n\in\N}\subseteq\mathfrak{S}(\HH_{\text{ext}})\,,
\ee
does converge in trace norm as $n\rightarrow\infty$, as we are going to prove now.

\eqref{phia} implies that
\bb\label{shiftedState}
T_{-n}\Phi_{\frac{c}{n},\ketbrasub{n}}(\tau_N)T_{-n}^\dagger&=\sum_{l=0}^{\infty}P_l\left(N,n,\frac{c}{n}\right)\ketbra{l-n} \\&=\sum_{k=-n}^{\infty}P_{n+k}\left(N,n,\frac{c}{n}\right)\ketbra{k}\,.
\ee
Hence, since we want to study $T_{-n}\Phi_{\frac{c}{n},\ketbrasub{n}}(\tau_N)T_{-n}^\dagger$ in the limit $n\rightarrow\infty$, let us define the probability distribution $\{q_k(N,c)\}_{k\in\mathbb{Z}}$ over the alphabet $\mathbb{Z}$ as
\begin{equation}\label{qk}
q_k(N,c)\coloneqq\lim\limits_{n\rightarrow\infty}P_{n+k}\left(N,n,\frac{c}{n}\right)\,.
\end{equation}
We will show below that such limit exists. To this end, let us first consider the case where $k\geq 0$. Then~\eqref{simpler} implies that
\begin{small}
\bb
&q_{-k}(N,c) \\
&= \lim\limits_{n\rightarrow\infty}\frac{\sum_{m=k}^n\left(\frac{c}{n}\right)^{2m-k}\!\left(1\!-\!\frac{c}{n}\right)^{n-m} \! N^{m-k}(N\!+\!1)^m\binom{n}{m}\!\binom{n-k}{n-m}}{(1+N\frac{c}{n})^{2n-k+1}} \\
&=\lim\limits_{n\rightarrow\infty}\frac{1}{(1+N\frac{c}{n})^{2n-k+1}}\sum_{m=0}^{n-k}\left(\frac{c}{n}\right)^{2m+k}\left(1-\frac{c}{n}\right)^{n-m-k} \\
&\hspace{9ex} \times N^{m}(N+1)^{m+k}\binom{n}{m+k}\binom{n-k}{m}  \\&=\lim\limits_{n\rightarrow\infty}\frac{\left(1-\frac{c}{n}\right)^{n-k}\left[c(N+1)\right] ^k}{(1+N\frac{c}{n})^{2n-k+1}} \\
&\hspace{9ex} \times\sum_{m=0}^{n-k}\left(1-\frac{c}{n}\right)^{-m}\left[c\sqrt{N(N+1)}\right]^{2m}\frac{\binom{n}{m+k}\binom{n-k}{m}}{n^{2m+k}} \\
&=e^{-c\left[2N+1\right]}\left[c(N+1)\right]^{k} \\
&\hspace{9ex} \times\lim\limits_{n\rightarrow\infty}\sum_{m=0}^{n-k}\left(1-\frac{c}{n}\right)^{-m}\left[c\sqrt{N(N+1)}\right]^{2m}\frac{\binom{n}{m+k}\binom{n-k}{m}}{n^{2m+k}}\,,\label{seriess}
\ee
\end{small}where in the last line we used the celebrated \emph{Nepero limit} i.e.$\lim\limits_{n\rightarrow\infty}\left(1-\frac{c}{n}\right)^{n}=e^{-c} $.
Now, we are going to invoke \emph{Tannery's theorem} in order to show that we can interchange the limit and the series, i.e.
\bb
	&\lim\limits_{n\rightarrow\infty}\sum_{m=0}^{n-k}\left(1-\frac{c}{n}\right)^{-m}\left[c\sqrt{N(N+1)}\right]^{2m}\frac{\binom{n}{m+k}\binom{n-k}{m}}{n^{2m+k}} \\&=\sum_{m=0}^{\infty}\lim\limits_{n\rightarrow\infty}\left(1-\frac{c}{n}\right)^{-m}\left[c\sqrt{N(N+1)}\right]^{2m}\frac{\binom{n}{m+k}\binom{n-k}{m}}{n^{2m+k}}\,.
\ee
The statement of the Tannery's theorem, which is nothing but a special case of Lebesgue's dominated convergence theorem, is the following. \emph{For all $n\in\N$, let $\{a_m(n)\}_{m\in\N}\subset\R$ be a sequence. Suppose that the limit
\bb
\lim_{n\rightarrow \infty }a_{m}(n)\,
\ee
exists for all $m\in\N$. If there exists a sequence $\{M_m\}_{m\in\N}\subset\R$ such that $\sum_{m=0}^{\infty}M_m<\infty$ and $$|a_m(n)|\le M_m$$ for all $m,n\in\N$, then}
\bb
\lim\limits_{n\rightarrow\infty}\sum _{m=0}^{\infty }a_{m}(n)=\sum_{m=0}^{\infty}\lim_{n\rightarrow \infty }a_{m}(n)\,.
\ee
By setting
\begin{equation}
	\chi_{[0,n-k]}(m)\coloneqq\begin{cases}
	1 & \text{if $m\in[0,n-k]$,} \\
	0 & \text{otherwise.}
	\end{cases}
\end{equation}
one obtains 
\bb
&\lim\limits_{n\rightarrow\infty}\sum_{m=0}^{n-k}\left(1-\frac{c}{n}\right)^{-m}\left[c\sqrt{N(N+1)}\right]^{2m}\frac{\binom{n}{m+k}\binom{n-k}{m}}{n^{2m+k}} \\&=\lim\limits_{n\rightarrow\infty}\sum_{m=0}^{\infty}\chi_{[0,n-k]}(m)\left(1-\frac{c}{n}\right)^{-m} \\&\quad\times\left[c\sqrt{N(N+1)}\right]^{2m}\frac{\binom{n}{m+k}\binom{n-k}{m}}{n^{2m+k}}\,.
\ee
Let us check whether the hypotheses of Tannery's theorem are fulfilled.
Thanks to the inequality $\binom{a}{b}\le\frac{a^b}{b!}$, valid for all integers $a\ge b\ge0$, and to the fact that $(1-\frac{c}{n})^{-m}\le 2^m$, valid for sufficiently large $n$ (more precisely, as soon as $n\ge2c$), the general term of the series (which is non-negative) satisfies
\bb
&\chi_{[0,n-k]}(m)\left(1-\frac{c}{n}\right)^{-m}\left[c\sqrt{N(N+1)}\right]^{2m}\frac{\binom{n}{m+k}\binom{n-k}{m}}{n^{2m+k}} \\&\le \frac{\left[c\sqrt{2N(N+1)}\right]^{2m}}{(m+k)!m!}\,,
\ee
for sufficiently large $n$. Since~\footnote{Notice that $\sum_{m=0}^{\infty}\frac{\left[c\sqrt{2N(N+1)}\right]^{2m}}{(m+k)!m!}<\exp\left[2N(N+1)c^2\right]<\infty$.} 
\bb
\sum_{m=0}^{\infty}\frac{\left[c\sqrt{2N(N+1)}\right]^{2m}}{(m+k)!m!}<\infty\,,
\ee
Tannery's theorem guarantees that
\bb
&\lim\limits_{n\rightarrow\infty}\sum_{m=0}^{n-k}\left(1-\frac{c}{n}\right)^{-m}\left[c\sqrt{N(N+1)}\right]^{2m}\frac{\binom{n}{m+k}\binom{n-k}{m}}{n^{2m+k}} \\&=\sum_{m=0}^{\infty}\lim\limits_{n\rightarrow\infty}\chi_{[0,n-k]}(m)\left(1-\frac{c}{n}\right)^{-m} \\&\quad\times\left[c\sqrt{N(N+1)}\right]^{2m}\frac{\binom{n}{m+k}\binom{n-k}{m}}{n^{2m+k}} \\&=\sum_{m=0}^{\infty}\lim\limits_{n\rightarrow\infty}\left(1-\frac{c}{n}\right)^{-m}\left[c\sqrt{N(N+1)}\right]^{2m}\frac{\binom{n}{m+k}\binom{n-k}{m}}{n^{2m+k}}\\&=\sum_{m=0}^\infty\frac{\left[c\sqrt{N(N+1)}\right]^{2m}}{(m+k)!m!}\,.\label{speroult}
\ee
As a consequence,~\eqref{seriess} implies that
\begin{equation}\label{primaeq}
q_{-k}(N,c)= e^{-c\left[2N+1\right]}\left[c(N+1)\right]^{k}\sum_{m=0}^{\infty}\frac{\left[c\sqrt{N(N+1)}\right]^{2m}}{(m+k)!m!}\,.
\end{equation}
By using~\eqref{besself}, we arrive at
\begin{equation}
q_{-k}(N,c)= e^{-c\left[2N+1\right]}\left(\frac{N+1}{N}\right)^{k/2}I_k\left(2c\sqrt{N(N+1)}\right)\,.
\end{equation}
Analogously, for $k\ge0$ it holds that:
\begin{small}
\bb
&q_{k}(N,c) \\
&=\lim\limits_{n\rightarrow\infty}\frac{\sum_{m=0}^n\left(\frac{c}{n}\right)^{2m+k}\left(1-\frac{c}{n}\right)^{n-m}N^{m+k}(N+1)^m\binom{n}{m}\binom{n+k}{n-m}}{(1+N\frac{c}{n})^{2n+k+1}} \\
&=\lim\limits_{n\rightarrow\infty}\frac{1}{(1+N\frac{c}{n})^{2n+k+1}}\left(1-\frac{c}{n}\right)^{n}\left(cN\right)^k \\
&\quad\times\sum_{m=0}^{n}\left(1-\frac{c}{n}\right)^{-m}\left[c\sqrt{N(N+1)}\right]^{2m}\frac{\binom{n}{m}\binom{n}{m+k}}{n^{2m+k}} \\
&=e^{-c\left[2N+1\right]}\left(cN\right)^{k}\sum_{m=0}^{\infty}\frac{\left[c\sqrt{N(N+1)}\right]^{2m}}{(m+k)!m!} \\
&=e^{-c\left[2N+1\right]}\left(\frac{N}{N+1}\right)^{k/2}I_k\left(2c\sqrt{N(N+1)}\right)\,.&
\ee
\end{small}In summary, we have shown that for all $k\in\mathbb{Z}$ it holds that
\begin{equation}
q_k(N,c)=e^{-c\left[2N+1\right]}\left(\frac{N}{N+1}\right)^{k/2}I_{|k|}\left(2c\sqrt{N(N+1)}\right)\,.
\end{equation}
Now, let us define the following state on $\HH_{\text{ext}}$:
\begin{equation}
\rho_{q(N,c)}\coloneqq\sum_{k=-\infty}^{\infty}q_{k}(N,c)\ketbra{k}\,.
\end{equation}
\eqref{shiftedState} and~\eqref{qk} guarantee that the sequence of density operators $\left\{T_{-n}\Phi_{\frac{c}{n},\ketbrasub{n}}(\tau_N)T_{-n}^\dagger\right\}_{n\in\N}$ weakly converges to $\rho_{q(N,c)}$. Consequently, we can apply~\cite[Lemma 4.3]{Davies1969}, which states that: \emph{if a sequence of density operators converges to another density operators in the weak operator topology, then the convergence is in trace norm}.  Hence,
$\left\{T_{-n}\Phi_{\frac{c}{n},\ketbrasub{n}}(\tau_N)T_{-n}^\dagger\right\}_{n\in\N}$ converges to $\rho_{q(N,c)}$ in trace norm i.e.
\begin{equation}\label{trconvq}
\lim\limits_{n\rightarrow\infty}\left\| T_{-n}\Phi_{\frac{c}{n},\ketbrasub{n}}(\tau_N)T_{-n}^\dagger-\rho_{q(N,c)}\right\|_1=0\,.
\end{equation}
Analogously, we define the probability distribution $\{p_k(N,c)\}_{k\in\mathbb{Z}}$ over the alphabet $\mathbb{Z}$ as
\begin{equation}\label{pk}
p_k(N,c)\coloneqq
\begin{cases}
\lim\limits_{n\rightarrow\infty}P_k\left(N,n,1-\frac{c}{n}\right), & \text{if $k\ge 0$,} \\
0, & \text{otherwise.}
\end{cases}
\end{equation}
Let $k\ge0$.  Analogously to what has been done previously i.e.~starting from~\eqref{simpler}, carrying out the calculations and expanding for large $n$, one obtains that:  
\begin{small}
\bb
&p_k(N,c) \\
&\quad =\lim\limits_{n\rightarrow\infty}\frac{1}{\left(N+1-N\frac{c}{n}\right)^{k+n+1}}\sum_{m=n-k}^n\left(1-\frac{c}{n}\right)^{2m+k-n}  \\
&\quad\times\left(\frac{c}{n}\right)^{n-m}N^{k+m-n}(N+1)^m\binom{n}{m}\binom{k}{n-m} \\&=\lim\limits_{n\rightarrow\infty}\frac{1}{(N+1)^{k+n+1}\left(1-\frac{Nc}{(N+1)n}\right)^{k+n+1}}\left(1-\frac{c}{n}\right)^{n+k}  \\
&\quad\times\sum_{m=0}^k\left(1-\frac{c}{n}\right)^{-m}		\left(\frac{c}{n}\right)^{m}N^{k-m}(N+1)^{n-m}\binom{n}{m}\binom{k}{m}	 \\
&=\frac{N^k}{(N+1)^{k+1}}e^{-c/(N+1)}\sum_{m=0}^k	\left(\frac{c}{N(N+1)}\right)^{m}\frac{1}{m!}\binom{k}{m} \\
&=\frac{N^k}{(N+1)^{k+1}}e^{-c/(N+1)}L_k\left(-\frac{c}{N(N+1)}\right)\,.
\ee
\end{small}Moreover, let us define
\begin{equation}
	\rho_{p(N,c)}\coloneqq \sum_{k=0}^{\infty}p_{k}(N,c)\ketbra{k}\,.
\end{equation}
Since $\left\{\Phi_{1-\frac{c}{n},\ketbrasub{n}}(\tau_N)\right\}_{n\in\N}$ weakly converges to $\rho_{p(N,c)}$, then it converges to $\rho_{p(N,c)}$ in trace norm~\cite[Lemma 4.3]{Davies1969} i.e.
\begin{equation}\label{trconvp}
\lim\limits_{n\rightarrow\infty}\left\| \Phi_{1-\frac{c}{n},\ketbrasub{n}}(\tau_N)-\rho_{p(N,c)}\right\|_1=0\,.
\end{equation}
From~\eqref{conscons2} we have that:
\bb \label{conscons3}
&\liminf_{n\rightarrow\infty} I_{\text{coh}}\left(\Phi_{\frac{c}{n},\ketbrasub{n}},\tau_N\right)\\
&\qquad = \liminf_{n\rightarrow\infty}\left[S\left(T_{-n}\Phi_{\frac{c}{n},\ketbrasub{n}}(\tau_N)T_{-n}^\dagger\right)  \right. \\
& \hspace{20ex} \left.-S\left(\Phi_{1-\frac{c}{n},\ketbrasub{n}}(\tau_N)\right)\right]\,.
\ee
At this point we would like to show that we can lower bound the expression~\eqref{conscons3} by taking the limit inside the function $S$. For this to be a legal move, we would need the entropy to be a continuous function of the states we consider. However, although in the finite-dimensional scenario Fannes' inequality~\cite[Chapter~11]{NC} guarantees the continuity in trace norm of the von Neumann entropy~\cite{Fannes1973}, in the infinite-dimensional setting the continuity does not hold any more. However, the von Neumman entropy is still lower semi-continuous in trace norm~\cite[Theorem 11.6]{HOLEVO-CHANNELS-2}.
Continuity of the von Neumann entropy is restored when restricting to quantum systems satisfying the Gibbs hypothesis (i.e.\ quantum systems having finite partition function) and to states with 
bounded energy, as established by~\cite[Lemma 11.8]{HOLEVO-CHANNELS-2}.
First, we are going to apply~\cite[Lemma 11.8]{HOLEVO-CHANNELS-2} to the sequence $\left\{\Phi_{1-\frac{c}{n},\ketbrasub{n}}(\tau_N)\right\}_{n\in\N}$. Let us check whether its hypotheses are fulfilled.
\begin{itemize}
	\item The partition function of the quantum harmonic oscillator is finite for all $\beta>0$, indeed:
	$$\Tr_S\left[e^{-\beta a^\dagger a}\right]=\frac{1}{1-e^{-\beta}}<\infty\quad\text{for all }\beta>0\,.$$
	\item The sequence $\left\{\Phi_{1-\frac{c}{n},\ketbrasub{n}}(\tau_N)\right\}_{n\in\N}\subseteq\mathfrak{S}(\HH_S)$ has 
	bounded energy. Indeed,~\eqref{energia_general} implies that for all $n\in\N$ it holds that 
	$$\Tr_S\left[\Phi_{1-\frac{c}{n},\ketbrasub{n}}(\tau_N)\,a^\dagger a\right]=\left(1-\frac{c}{n}\right)N+c\le N+c\,.$$
\end{itemize}
As a consequence, we can apply~\cite[Lemma 11.8]{HOLEVO-CHANNELS-2}. By using also~\eqref{trconvp}, we deduce that
\bb
\lim\limits_{n\rightarrow\infty}S\left(\Phi_{1-\frac{c}{n},\ketbrasub{n}}(\tau_N)\right)=S\left(\rho_{p(N,c)}\right)=H\left(p(N,c)\right)\,.
\ee
Then,~\eqref{conscons3} implies 
\bb\label{conscons4}
&\liminf_{n\rightarrow\infty} I_{\text{coh}}\left(\Phi_{\frac{c}{n},\ketbrasub{n}},\tau_N\right) \\&=\liminf_{n\rightarrow\infty}S\left(T_{-n}\Phi_{\frac{c}{n},\ketbrasub{n}}(\tau_N)T_{-n}^\dagger\right)  -H\left(p(N,c)\right)\,.
\ee
Second, for the sequence $\left\{T_{-n}\Phi_{\frac{c}{n},\ketbrasub{n}}(\tau_N)T_{-n}^\dagger\right\}_{n\in\N}\subseteq\mathfrak{S}(\HH_{\text{ext}})$, we apply the lower semi-continuity of the von Neumann entropy and~\eqref{trconvq} to deduce that
\bb
    \liminf\limits_{n\rightarrow\infty}S\left(T_{-n}\Phi_{\frac{c}{n},\ketbrasub{n}}(\tau_N)T_{-n}^\dagger\right)&\ge S(\rho_{q(N,c)}) \\&=H\left(q(N,c)\right) \,.
\ee
Hence, by substituting in~\eqref{conscons4}, we finally obtain~\eqref{entdifflow}.
\end{proof}

\subsection{The capacities are not necessarily monotonic in the transmissivity} \label{subsec_mon}

The results we have discussed in Section~\ref{subsec_Cea} (i.e. the implications of~\eqref{SpecialCea0}) are valid for sufficiently small values of $\lambda>0$, which is the physically interesting case where only an infinitesimal fraction of the output energy comes from the input and where we expect the channel to be noisier than what it would be for larger values of $\lambda$. Therefore, one can be tempted to conclude that such results hold not only for sufficiently small values of $\lambda>0$ but also for all $\lambda>0$. However, this reasoning does not quite work as stated since it assumes that the capacities are monotonic in $\lambda$. This is not obvious and, as a matter of fact, it is already known that the quantum capacity is not necessarily monotonically increasing in the transmissivity for a fixed environment state~\cite{die-hard}.
Analogously, \emph{$C_{\text{ea}}\left(\Phi_{\lambda,\sigma},N\right)$ can fail to be monotonically increasing in $\lambda$ for fixed $\sigma$ and $N$}. Indeed:
\begin{itemize}
     \item If $\lambda=0$, the general attenuators are completely noisy and hence $C_{\text{ea}}\left(\Phi_{0,\ketbrasub{n}},N\right)=0$ for all $N>0$, $n\in\N$;
 	 \item If $\lambda=1/2$, as we will see later thanks to Theorem~\ref{teofirst}, it holds that 
 	 \begin{equation}\label{onehalf}
 	     C_{\text{ea}}\left(\Phi_{1/2,\ketbrasub{n}},N\right)=g(N)
 	 \end{equation}
 	 for all $N>0$, $n\in\N$;
 	 \item As guaranteed by~\eqref{SpecialCea0} (or also by Fig.~\ref{Icoh figure}), for some $\lambda\in(0,1/2)$, $n\in\N$, $N>0$, it holds that $C_{\text{ea}}\left(\Phi_{\lambda,\ketbrasub{n}},N\right)>g(N)$ .
\end{itemize}
Numerical investigations suggest that $C_{\text{ea}}\left(\Phi_{\lambda,\ketbrasub{n}},N\right)$ is non-monotonic in $\lambda$ if $n\gg N$. Notice that monotonicity still holds when the environment state is a thermal state, as it can be shown by exploiting~\eqref{func}. A more elegant way to prove the monotonicity in $\lambda$ of $C_{\text{ea}}\left(\Phi_{\lambda,\tau_\nu},N\right)$ is as follows. Lemma~\ref{thermal_composition} in the Appendix establishes that for all $\lambda_1\text{, }\lambda_2\in[0,1]$ and all $N>0$ it holds that
\begin{equation}
\Phi_{\lambda_1\lambda_2,\tau_N}=\Phi_{\lambda_1,\tau_N}\circ\Phi_{\lambda_2,\tau_N}\,.
\end{equation}
By using this, for all $0<\lambda_1<\lambda_2<1$ we deduce that $\Phi_{\lambda_1,\tau_\nu}=\Phi_{\lambda_1/\lambda_2,\tau_\nu}\circ\Phi_{\lambda_2,\tau_\nu}$. Hence, the channel $\Phi_{\lambda_1,\tau_\nu}$ can be simulated by $\Phi_{\lambda_2,\tau_\nu}$ if Bob applies $\Phi_{\lambda_1/\lambda_2,\tau_\nu}$ on the received quantum carrier. As a consequence, it holds that every capacity of $\Phi_{\lambda_1,\tau_\nu}$ is lower than that of $\Phi_{\lambda_2,\tau_\nu}$. Hence, all the capacities of the thermal attenuator are monotonic in $\lambda$. In addition, the previous reasoning suggests that the composition law $\Phi_{\lambda_1\lambda_2,\ketbrasub{n}}=\Phi_{\lambda_1,\ketbrasub{n}}\circ\Phi_{\lambda_2,\ketbrasub{n}}$ does not hold for $n\ge1$.

Now, we just have to prove~\eqref{onehalf}. To do this, let us show the following more general theorem.
\begin{thm}\label{teofirst}
	Let $\alpha\in\mathbb{C}$. Let $\sigma\in\mathfrak{S}(\HH_E)$ be of the form $\sigma=D(\alpha)\sigma_0{D(\alpha)}^\dagger$, where $V \sigma_0 V^\dagger=\sigma_0$ with $V \coloneqq (-1)^{b^\dagger b}$.
	Then for all $N>0$ it holds that
	\begin{equation}\label{bounds}
	g(N)-S(\sigma)\le C_{\text{ea}}\left(\Phi_{1/2,\sigma},N\right)\le g(N)\,.
	\end{equation}
	If $\sigma$ is also pure, it follows in particular that
	\begin{equation}\label{lambda1/2}
	C_{\text{ea}}\left(\Phi_{1/2,\sigma},N\right)= g(N)\,.
	\end{equation}
\end{thm}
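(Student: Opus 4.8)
The plan is to work directly from the variational formula \eqref{Cea}, writing $C_{\mathrm{ea}}(\Phi_{1/2,\sigma},N)=\max_{\rho:\,\Tr[\rho\,a^\dagger a]\le N}\left[S(\rho)+I_{\mathrm{coh}}(\Phi_{1/2,\sigma},\rho)\right]$ with $I_{\mathrm{coh}}(\Phi_{1/2,\sigma},\rho)=S(\Phi_{1/2,\sigma}(\rho))-S(\tilde\Phi_{1/2,\sigma}(\rho))$, and to pin down the complementary term using the structure of the balanced beam splitter encoded in Lemma~\ref{lemma_diehard}. The single identity I exploit is \eqref{weak_formula}: at $\lambda=1/2$ it reads $\tilde\Phi_{1/2,\sigma}^{\mathrm{wc}}=\mathcal V\circ\mathcal D_{-\sqrt2\,\alpha}\circ\Phi_{1/2,\sigma}$, and since $\mathcal V$ and $\mathcal D_{-\sqrt2\,\alpha}$ defined in \eqref{tV}--\eqref{tD} are unitary conjugations, they leave the von Neumann entropy invariant, so that $S(\tilde\Phi_{1/2,\sigma}^{\mathrm{wc}}(\rho))=S(\Phi_{1/2,\sigma}(\rho))$ for every input $\rho$.

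The next step is to relate the genuine complementary channel $\tilde\Phi_{1/2,\sigma}$ to the weak one. Purifying $\sigma$ on $\HH_E\otimes\HH_{E'}$ and using that the beam-splitter unitary acts as the identity on $E'$, the complementary output $\tilde\Phi_{1/2,\sigma}(\rho)$ is a state on $EE'$ whose marginal on $E$ is $\tilde\Phi_{1/2,\sigma}^{\mathrm{wc}}(\rho)$ and whose marginal on $E'$ has the same spectrum as $\sigma$, hence entropy $S(\sigma)$ independently of $\rho$. Subadditivity and the Araki--Lieb inequality then give $\bigl|S(\Phi_{1/2,\sigma}(\rho))-S(\sigma)\bigr|\le S(\tilde\Phi_{1/2,\sigma}(\rho))\le S(\Phi_{1/2,\sigma}(\rho))+S(\sigma)$, where I already substituted the entropy equality from the first step. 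Rearranging, this yields the two-sided estimate $-S(\sigma)\le I_{\mathrm{coh}}(\Phi_{1/2,\sigma},\rho)\le S(\sigma)$ for all admissible $\rho$.

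The lower bound of \eqref{bounds} is then immediate: evaluating the objective at the Gibbs ansatz $\rho=\tau_N$, for which $S(\tau_N)=g(N)$ by Lemma~\ref{maxthermstate}, gives $C_{\mathrm{ea}}(\Phi_{1/2,\sigma},N)\ge g(N)+I_{\mathrm{coh}}(\Phi_{1/2,\sigma},\tau_N)\ge g(N)-S(\sigma)$.

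The upper bound is where the subadditivity estimate above is too lossy, yielding only $C_{\mathrm{ea}}\le g(N)+S(\sigma)$; this is the main obstacle. The fix is to upgrade $I_{\mathrm{coh}}\le S(\sigma)$ to $I_{\mathrm{coh}}\le 0$ by invoking the other conclusion of Lemma~\ref{lemma_diehard}, namely $Q(\Phi_{1/2,\sigma})=0$. Indeed, since the energy-constrained quantum capacity is dominated by the unconstrained one, which vanishes, the bound $Q(\Phi,N)\ge Q_1(\Phi,N)$ of \eqref{lsdth2} forces $\max_{\rho}I_{\mathrm{coh}}(\Phi_{1/2,\sigma},\rho)=Q_1(\Phi_{1/2,\sigma},N)\le Q(\Phi_{1/2,\sigma},N)\le Q(\Phi_{1/2,\sigma})=0$, so $I_{\mathrm{coh}}(\Phi_{1/2,\sigma},\rho)\le 0$ for every admissible $\rho$. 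Combining this with $S(\rho)\le g(N)$ from Lemma~\ref{maxthermstate} gives $C_{\mathrm{ea}}(\Phi_{1/2,\sigma},N)\le g(N)$, completing \eqref{bounds}. Finally, when $\sigma$ is pure we have $S(\sigma)=0$, so the two bounds coincide and pinch $C_{\mathrm{ea}}(\Phi_{1/2,\sigma},N)=g(N)$, which is \eqref{lambda1/2}; equivalently, for pure $\sigma$ the weak complementary channel is the complementary channel and the first step already gives $I_{\mathrm{coh}}\equiv 0$.
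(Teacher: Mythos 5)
Your proof is correct and follows essentially the same route as the paper's: the identity \eqref{weak_formula} at $\lambda=1/2$ to equate $S(\tilde\Phi^{\mathrm{wc}}_{1/2,\sigma}(\rho))$ with $S(\Phi_{1/2,\sigma}(\rho))$, the purification/subadditivity argument (the paper's Lemma~\ref{subadd}) to get $I(\Phi_{1/2,\sigma},\rho)\ge S(\rho)-S(\sigma)$, the thermal ansatz with Lemma~\ref{maxthermstate} for the lower bound, and $Q(\Phi_{1/2,\sigma})=0$ from Lemma~\ref{lemma_diehard} to force $I_{\mathrm{coh}}\le 0$ for the upper bound. The only addition is the Araki--Lieb side of the estimate, which is not needed and is correctly discarded.
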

\begin{proof}
	Let $N>0$ and let $\rho\in\mathfrak{S}(\HH_S)$ such that $\Tr[a^\dagger a\,\rho]\le N$. Since the hypotheses are the same as Lemma~\ref{lemma_diehard}, we can apply~\eqref{weak_formula}. The latter, together with the invariance of the von Neumann entropy under unitary transformations, guarantees that
	\begin{equation}\label{forweak}
	S\left(\tilde{\Phi}_{\lambda,\sigma}^{\text{wc}}(\rho)\right)= S\left(\Phi_{1-\lambda,\sigma}(\rho)\right)\,.
	\end{equation}
	Thanks to the subadditivity of the von Neumann entropy, one can prove that
	\begin{equation}\label{ineq_compl}
	    S\left(\tilde{\Phi}_{\lambda,\sigma}(\rho)\right)\le S(\sigma)+S\left(\tilde{\Phi}_{\lambda,\sigma}^{\text{wc}}(\rho)\right)\,.
	\end{equation}
	A proof of~\eqref{ineq_compl} can be found in Lemma~\ref{subadd} in the Appendix.
	\eqref{ineq_compl} implies that
	\bb
	&I(\Phi_{1/2,\sigma},\rho)=S(\rho)+S\left(\Phi_{1/2,\sigma}(\rho)\right)-S\left(\tilde{\Phi}_{1/2,\sigma}(\rho)\right) \\&\ge S(\rho)+S\left(\Phi_{1/2,\sigma}(\rho)\right)-S(\sigma)-S\left(\tilde{\Phi}_{1/2,\sigma}^{\text{wc}}(\rho)\right) \\&=S(\rho)-S(\sigma)\,,
	\ee 
	where in the last line we used~\eqref{forweak}.
	Therefore, Lemma~\ref{maxthermstate} concludes the proof of 
	\begin{equation}
	    C_{\text{ea}}\left(\Phi_{1/2,\sigma},N\right)\ge g(N)-S(\sigma)\,.
	\end{equation}	Now, we just have to prove the upper bound. Lemma~\ref{lemma_diehard} guarantees that $Q\left(\Phi_{1/2,\sigma}\right)=0$ and hence for all $\rho$ it holds $I_{\text{coh}}\left(\Phi_{1/2,\sigma},\rho\right)\le 0$. As a consequence, $I\left(\Phi_{1/2,\sigma},\rho\right)\le S(\rho)$.
	Taking the maximum over all $\rho$ satisfying the energy constraint $\Tr[a^\dagger a\,\rho]\le N$, this implies
	\begin{equation}
	    C_{\text{ea}}\left(\Phi_{1/2,\sigma},N\right)\le g(N).
	\end{equation}
	Finally, since the von Neumann entropy of a pure state is zero,~\eqref{lambda1/2} immediately follows from the upper and lower bounds just proved.
\end{proof}
The states $\sigma$ satisfying the hypothesis of Theorem~\ref{teofirst} are of the form
	$\sigma= D(\alpha)\sigma_0{D(\alpha)}^{\dagger}
	,$ where $\sigma_0$ and $V$ commute and hence admit a common basis of eigenstates. Hence, we can write 
	\begin{equation}
	    \sigma = D(\alpha)\left(\sum_i p_i \ketbra{\psi_i}\right){D(\alpha)}^{\dagger}\,,
	\end{equation}
	where:
	\begin{itemize}
		\item $\{p_i\}_{i\in\N}$ is a probability distribution (since $\sigma_0$ is a density operator);
		\item Either $\ket{\psi_i}$ is an \emph{even state} i.e.
		$$V\ket{\psi_i}=\ket{\psi_i}\Longleftrightarrow \ket{\psi_i}=\sum_{n=0}^{\infty} c^{(i)}_n\ket{2n}\,,$$
		or it is an \emph{odd state} i.e.
		$$V\ket{\psi_i}=-\ket{\psi_i}\Longleftrightarrow \ket{\psi_i}=\sum_{n=0}^{\infty} c^{(i)}_n\ket{2n+1}\,.$$
	\end{itemize}
	To summarise, if $\sigma$ is a displaced convex combination of odd and even states (e.g.\ displaced thermal states) then 
	\begin{equation}
	    g(N)-S(\sigma)\le C_{\text{ea}}\left(\Phi_{1/2,\sigma},N\right)\le g(N)\,.
	\end{equation}
	In addition, if $\sigma$ is a displaced odd or even pure state (e.g.\ a (displaced) Fock state, coherent state, squeezed state, or cat state) then 
	\begin{equation}
	    C_{\text{ea}}\left(\Phi_{1/2,\sigma},N\right)= g(N)\,.
	\end{equation}
	This result is in agreement with the known expression for the pure loss channel $\Phi_{1/2,\ketbra{0}}$ (see~\eqref{pureloss}). Since $C\left(\Id,N\right)=g(N)$, for such a class of environment states $\sigma$ it holds that \begin{equation}
	    C_{\text{ea}}\left(\Phi_{1/2,\sigma},N\right)= C\left(\Id,N\right)\,.
	\end{equation} In other words, we can state that: for $\lambda=1/2$ if the environment is in a displaced odd or even pure state, the resource of pre-shared entanglement allows classical information to be transmitted with the same performance as the unassisted noiseless scenario. 
	Previously, we have seen that a stronger property holds: for $0<\lambda<1/2$, if the environment state is a suitable Fock state, entanglement-assisted classical communication is possible with better performance than in the unassisted noiseless scenario. 
\section{Control of the environment state}\label{sec_control}
In the best studied models of communication, one usually employs the \emph{memoryless approximation}. In our scenario, under this approximation, the environment state is always the same every time the transmission line is used. However, if Alice feeds signals into the line separated by a sufficiently short temporal interval, the memoryless approximation is challenged and memory effects arise, thereby altering the environment state. From the above results we know that the manipulation of the environment state can greatly improve the communication performance. Consequently, a natural question that arises is: how to implement in a realistic and operational way the control of the environment state?

Now, we introduce a communication model which takes into account memory effects. Then, we will explain how to implement a \emph{partial} manipulation of the environment state that however suffices to our purposes, i.e.\ that of boosting the capacities.

We analyse the model formulated by~\cite{Dynamical-Model} and depicted in Fig.~\ref{comm_model}, which consists of three elements: the signals, the channel environment $E$, and also a huge reservoir $R$.
\begin{figure}[t]
\centering
\includegraphics[scale=0.20]{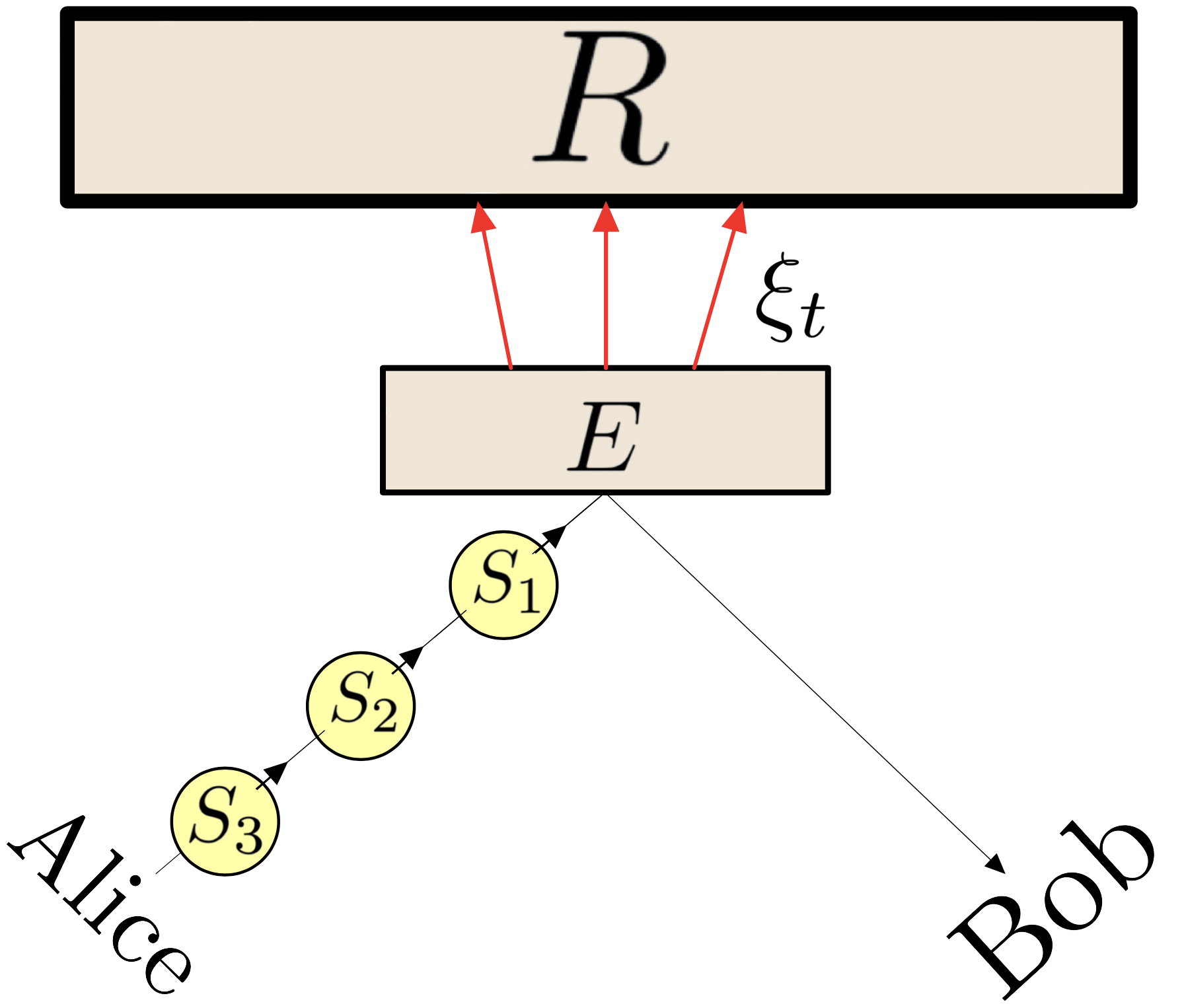}
\caption{Schematic of the communication model. The reservoir $R$ is coupled with the environment $E$, but not with the signals $S_i$. The environment $E$ is perturbed also by $S_i$.}
\label{comm_model}
\end{figure}
The signals and the environment $E$ are single-mode systems. The signals interact one at a time with the environment $E$ through the {BS} unitary, while $E$ undergoes a dissipative evolution induced by $R$. The dissipative evolution is implemented by a one-parameter family of quantum channels $\{\xi_{\delta t}\}_{\delta t\ge0}$, which transforms any state $\sigma$ into a stationary state $\sigma_0$ after a thermalisation time $t_E$. We request that 
{\begin{alignat}{3}
\xi_{\delta t}(\sigma)&&&\simeq\sigma\qquad &&\text{if $\delta t\ll t_E$ ,}\label{assumption1xi} \\
\xi_{\delta t}(\sigma)&&&=\sigma_0\qquad &&\text{if $\delta t\ge t_E$ ,} \label{assumption2xi}
\end{alignat}}
for all $\sigma\in\mathfrak{S}(\HH_E)$.
An optical fibre is described by the above model with $\sigma_0$ being a thermal state $\tau_\nu$.

{Let us justify the assumption in~\eqref{assumption2xi}. In the best studied model of bosonic quantum communication across optical fibres, the attenuation noise which affects each signal is given by the same quantum channel, i.e.~by the thermal attenuator $\Phi_{\lambda,\tau_\nu}$. This is valid when the memoryless assumption is justified, i.e.\ when the time interval $\delta t$ between two subsequent signals is sufficiently large. Within our model, this translates to the assumption that when a signal is transmitted across the fibre, the thermalisation process $\xi_{\delta t}$ has already \emph{exactly} brought back the environment state into the thermal state $\tau_\nu$. Mathematically, this means that if $\delta t$ is larger than a certain timescale --- which we call the thermalisation time $t_E$ --- it holds that $\xi_{\delta t}(\sigma)= \tau_\nu$ for all $\sigma$. To summarise, if we require that our model reduces to the best studied model of bosonic quantum communication across optical fibres, the assumption in~\eqref{assumption2xi} is necessary.}

If Alice is able to send signals only separated by a temporal interval {$\delta t\ge t_E$, }{}
then the memoryless approximation is recovered and communication occurs via $\Phi_{\lambda,\sigma_0}$. Vice versa, if the temporal interval between signals is shorter than $t_E$, then memory effects arise. And, if the interval is much shorter than $t_E$, then the dissipative evolution is negligible and the environment state is only modified by the interaction with the signals.

{{\begin{ex}
Suppose that Alice sends $k$ signals $S_1,S_2,\ldots,S_k$, separated by a time interval $\delta t$ and initialised in a state $\rho^{(k)}$. Then the environment state $\sigma$, right after a time $\delta t$ that the $k$th signal has interacted with the environment, can be expressed as
	\bb\label{expl_sigma}
    	\sigma &= \Tr_{S_1\ldots S_k}\left[\xi_{\delta t}\circ \mathcal{U}_{\lambda}^{(S_kE)}\circ \xi_{\delta t}\circ\mathcal{U}_{\lambda}^{(S_{k-1}E)}\circ\ldots \right.  \\
    &\left.\qquad\ldots \circ \xi_{\delta t}\circ\mathcal{U}_{\lambda}^{(S_{1}E)} \left(\rho^{(k)}\otimes \sigma_0\right) \right]\,,
    \ee
	where $\mathcal{U}_{\lambda}^{(S_iE)}$ is a quantum channel defined by $\mathcal{U}_{\lambda}^{(S_iE)}(\cdot)=U_{\lambda}^{(S_iE)}(\cdot) \left(U_{\lambda}^{(S_iE)}\right)^\dagger$.

Now, suppose that once Alice has sent the signal $S_k$, she waits for a time $\delta t$ and sends another signal $S$ initialised in $\rho$. The output state of $S$ that Bob receives is 
	\bb\label{out_of_env}
    	&\Tr_{S_1\ldots S_kE}\left[\mathcal{U}_{\lambda}^{(SE)}\circ\xi_{\delta t}\circ \mathcal{U}_{\lambda}^{(S_kE)}\circ \xi_{\delta t}\circ\mathcal{U}_{\lambda}^{(S_{k-1}E)}\circ\ldots \right.  \\
    &\left.\qquad\ldots \circ \xi_{\delta t}\circ\mathcal{U}_{\lambda}^{(S_{1}E)} \left(\rho\otimes\rho^{(k)}\otimes \sigma_0\right) \right]\\&=\Tr_E\left[\mathcal{U}_{\lambda}^{(SE)}\left(\rho\otimes\xi_{\delta t}(\sigma)\right)\right]=\Phi_{\lambda,\xi_{\delta t}(\sigma)}(\rho)\,.
    \ee
\end{ex}}}
Let us put these ideas into practice by introducing a protocol dubbed `noise attenuation protocol'. This protocol aims to obtain an effective memoryless channel $\Phi_{\lambda,\sigma}$ that is less noisy than $\Phi_{\lambda,\sigma_0}$. More precisely, the aim of the noise attenuation protocol is to obtain an environment state $\sigma$ such that the capacity of interest of $\Phi_{\lambda,\sigma}$ is larger than that of $\Phi_{\lambda,\sigma_0}$. The basic idea of the protocol is that Alice first attempts to manipulate the environment state by sending signals that do not encode information, and then transmits the actual message.

\bigskip
\textbf{\emph{Noise attenuation protocol}}:
\begin{itemize}
        \item \emph{step~1}: Alice waits for a time $t_E$ (in order to have the environment reset into $\sigma_0$);
		\item \emph{step~2}: Alice sends $k$ signals 
		so that, after the interaction with the environment $E$ has occurred, the latter transforms into a chosen state $\sigma$;
		\item \emph{step~3}: Alice sends the signal which encodes information. Then, she restarts from step~1 until the communication is complete.
\end{itemize}
Any signal sent during step~2 is dubbed `trigger signal'. Alice can encode information into a multipartite entangled state of the signals sent at step~3. Step~1 allows one to treat the input-output relations of the signals of step~3 as implemented by the memoryless channel $\Phi_{\lambda,\sigma}$. Hence, Alice and Bob have to apply encoding and decoding strategies to communicate via $\Phi_{\lambda,\sigma}$, instead of via $\Phi_{\lambda,\sigma_0}$. Bob has to take into account in his decoding strategy only the signals sent during step~3, throwing away the received trigger signals. We give a simple example where the noise attenuation protocol is beneficial in terms of entanglement-assisted communication in {the} Appendix~\ref{example_noiseatt}.

Let $\HH_{S_i}$ denote the Hilbert space of the $i$-th trigger signal and let $a_i$ denote the corresponding annihilation operator.  We will refer to $\sigma_0$ as the `stationary environment state', to the signal of 
step~3 as an `information-carrier signal', to $\Phi_{\lambda,\sigma_0}$ as the `original channel' and to $\Phi_{\lambda,\sigma}$ as the `resulting channel'.
A pictorial schematic of the noise attenuation protocol is provided in Fig.~\ref{protocol}.
\begin{figure}[t]
\centering
\includegraphics[scale=0.17]{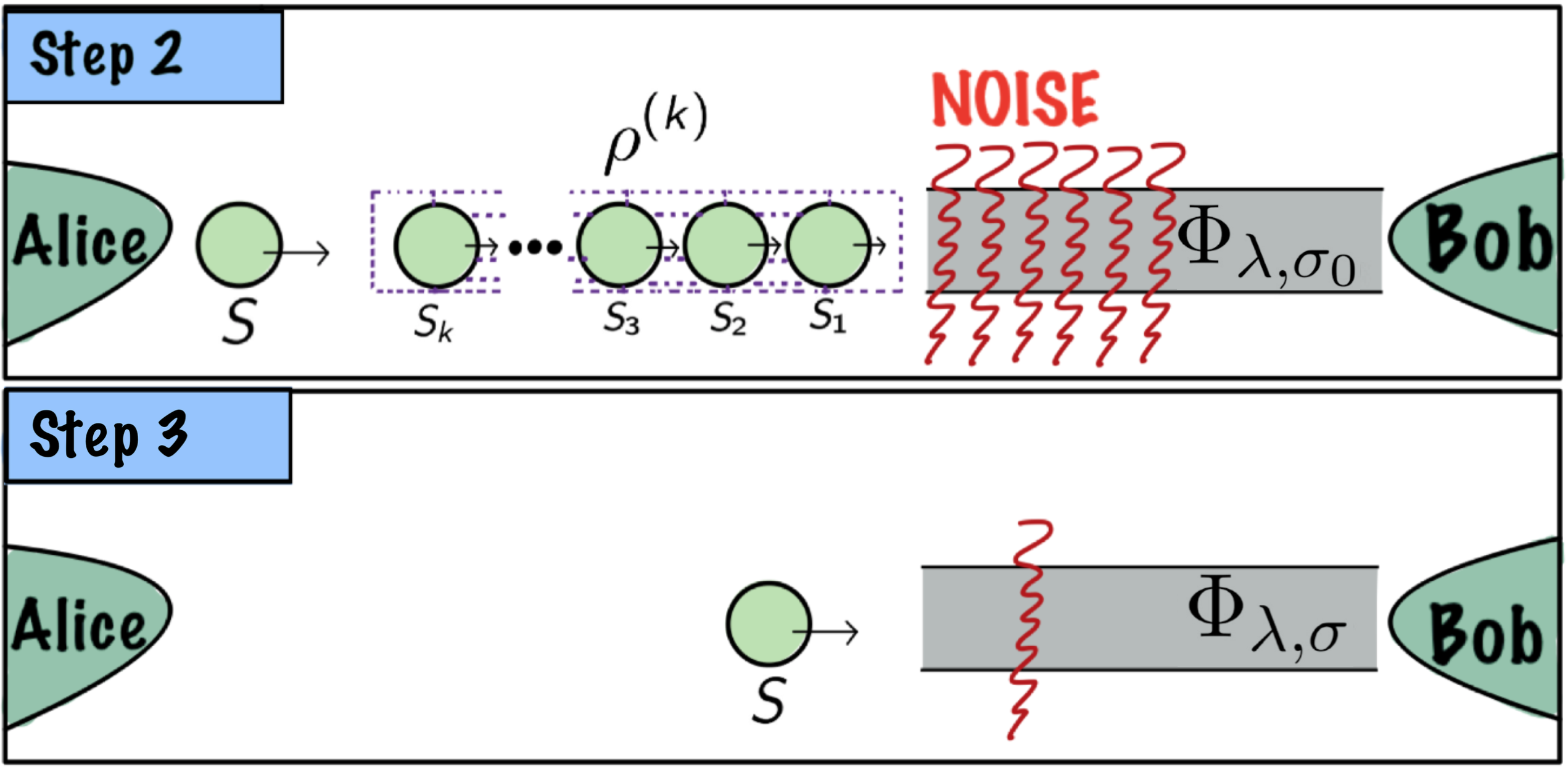}
\caption{Alice alternates between sequences of trigger signals (step~2), which do not carry information but are sacrificed to turn the environment into a state that facilitates communication, and information-carrying signals (step~3).}
\label{protocol}
\end{figure}

{
Suppose that the $k$ trigger signals are initialised in $\rho^{(k)}\in\mathfrak{S}(\HH_{S_{1}}\otimes\ldots\otimes\HH_{S_{k}})$ and are separated by a time interval $\delta t$. Suppose further that the time interval between the $k$th trigger signal and the information-carrier signal is $\delta t$. Consequently, at the beginning of step~3 the environment is in a state $\sigma$ given by~\eqref{expl_sigma}. From now on, suppose that $\delta t \ll t_E$. In this regime we can use the approximation $\xi_{\delta t}\simeq\Id$. In Theorem~\ref{teo_new} we will formalise such an approximation and we will see that it is consistent. 
By exploiting~\eqref{expl_sigma}, $\sigma$ can be approximated as}
 \bb
\sigma=\Delta_{\lambda,\sigma_0}^{(k)}(\rho^{(k)})\,,
\ee
where $\Delta_{\lambda,\sigma_0}^{(k)}:\mathfrak{S}(\HH_{S_{1}}\otimes\ldots\otimes\HH_{S_{k}})\mapsto\mathfrak{S}(\HH_{E})$
is defined by 
\bb\label{achievable}
    \Delta_{\lambda,\sigma_0}^{(k)}(\rho^{(k)})\coloneqq&\Tr_{S_1S_2\ldots S_k}  \left[U_{\lambda}^{(S_kE)}\ldots U_{\lambda}^{(S_1E)}\rho^{(k)}\otimes\sigma_0 \right.  \\
  &\left. \left(U_{\lambda}^{(S_1E)}\right)^\dagger\ldots     \left(U_{\lambda}^{(S_kE)}\right)^\dagger\right]\,.
\ee
Hence, the information-carrier signal is affected by the channel $\Phi_{\lambda,\Delta_{\lambda,\sigma_0}^{(k)}(\rho^{(k)})}$.
\begin{definition}
An environment state $\sigma\in\mathfrak{S}(\HH_{E})$ is said to be $(\lambda,\sigma_0)$-achievable if there exists $k\in\N$ and $\rho^{(k)}\in\mathfrak{S}(\HH_{S_{1}}\otimes\ldots\otimes\HH_{S_{k}})$ such that $ \sigma=\Delta_{\lambda,\sigma_0}^{(k)}(\rho^{(k)})$.
\end{definition}
Not all environment states can be obtained by Alice, but only those which are $(\lambda,\sigma_0)$-achievable.
 In order to activate the special communication performance discussed in Sec.~\ref{subsec_Cea} (e.g.\ obtaining a resulting channel with strictly positive quantum capacity, or with entanglement-assisted classical capacity larger than that of the noiseless channel), a natural question that arises is whether there exists a suitable choice of the trigger-signals state $\rho^{(k)}$ that creates an environmental Fock state. In other words, we may wonder whether the Fock states are $(\lambda,\sigma_0)$-achievable. Unfortunately, this is not the case if the stationary state $\sigma_0$ is thermal (as in the scheme of an optical fibre), except in the trivial case in which both $\sigma_0$ and the Fock state are the vacuum, as guaranteed by the forthcoming Theorem~\ref{NoTheorem}. Before stating Theorem~\ref{NoTheorem}, let us prove the following lemma.
\begin{lemma}\label{bHeisenberg}
For all $\lambda\in(0,1)$, $k\in\N$, it holds that
\bb\label{trb}
  b_H^{(k)}&\coloneqq\left(U_{\lambda}^{(S_1E)}\right)^\dagger\ldots\left(U_{\lambda}^{(S_kE)}\right)^\dagger b \,U_{\lambda}^{(S_kE)}\ldots U_{\lambda}^{(S_1E)} \\&=-\sqrt{1-\lambda^k}h_{\lambda,k}+\lambda^{k/2} b\,,
\ee
where
\begin{equation}
    h_{\lambda,k}\coloneqq \sqrt{\frac{1-\lambda}{1-\lambda^k}} \sum_{l=1}^k \lambda^{\frac{k-l}{2}}a_l\,,
\end{equation}
with $\left[h_{\lambda,k},\left(h_{\lambda,k}\right)^\dagger\right]=\mathbb{1}$.
\end{lemma}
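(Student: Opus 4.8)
\emph{The plan} is to move to the Heisenberg picture and conjugate $b$ through the $k$ beam splitters one factor at a time, exploiting that each $U_\lambda^{(S_iE)}$ acts nontrivially only on the pair $(S_i,E)$. The single ingredient I need is the elementary two-mode rule: setting $\theta\coloneqq\arccos\sqrt{\lambda}$, so that $\cos\theta=\sqrt{\lambda}$ and $\sin\theta=\sqrt{1-\lambda}$, and applying the Hadamard lemma $e^{-G}Xe^{G}=\sum_{n\ge0}\frac{1}{n!}\mathrm{ad}_{-G}^{\,n}(X)$ to the generator $G=\theta\,(a_i^\dagger b-a_i b^\dagger)$, the two commutators $\mathrm{ad}_{-G}(a_i)$ and $\mathrm{ad}_{-G}(b)$ close on each other and sum to a plain $2\times2$ rotation. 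This yields
\[
\left(U_\lambda^{(S_iE)}\right)^\dagger b\, U_\lambda^{(S_iE)}=\sqrt{\lambda}\,b-\sqrt{1-\lambda}\,a_i,\qquad \left(U_\lambda^{(S_iE)}\right)^\dagger a_i\, U_\lambda^{(S_iE)}=\sqrt{\lambda}\,a_i+\sqrt{1-\lambda}\,b,
\]
and it is the only commutator computation required.

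Next I would peel the conjugations from the inside out. Writing $V_i\coloneqq U_\lambda^{(S_iE)}$ and letting $c_j$ denote the operator obtained after conjugating $b$ by the innermost $j$ factors $V_k,V_{k-1},\ldots,V_{k-j+1}$, the \emph{key structural fact} is that $c_j$ is a linear combination of $b$ and of the signal operators $a_k,\ldots,a_{k-j+1}$ only. The next factor $V_{k-j}$ involves the fresh signal mode $S_{k-j}$, which does not appear in $c_j$; hence $V_{k-j}$ commutes with every signal term already present and acts solely on the $b$-component through the rule above. Consequently each new factor multiplies the current $b$-coefficient by $\sqrt{\lambda}$ and spawns exactly one new term $-\sqrt{1-\lambda}\,\lambda^{j/2}a_{k-j}$, while leaving the previously generated signal terms untouched. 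A one-line induction then gives $c_j=\lambda^{j/2}b-\sqrt{1-\lambda}\sum_{l=0}^{j-1}\lambda^{l/2}a_{k-l}$, and setting $j=k$ produces $b_H^{(k)}=\lambda^{k/2}b-\sqrt{1-\lambda}\sum_{l=0}^{k-1}\lambda^{l/2}a_{k-l}$.

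Finally I would reindex the sum by $l\mapsto k-l$, turning $\sum_{l=0}^{k-1}\lambda^{l/2}a_{k-l}$ into $\sum_{l=1}^{k}\lambda^{(k-l)/2}a_l$, and factor out $\sqrt{1-\lambda^k}$ to recognise the normalised collective mode $h_{\lambda,k}=\sqrt{\tfrac{1-\lambda}{1-\lambda^k}}\sum_{l=1}^{k}\lambda^{(k-l)/2}a_l$; this recovers $b_H^{(k)}=\lambda^{k/2}b-\sqrt{1-\lambda^k}\,h_{\lambda,k}$. The commutation relation is then immediate: since the $a_l$ act on distinct modes, $[a_l,a_{l'}^\dagger]=\delta_{ll'}$, and the geometric sum $\sum_{l=1}^{k}\lambda^{k-l}=\frac{1-\lambda^k}{1-\lambda}$ gives $[h_{\lambda,k},h_{\lambda,k}^\dagger]=\frac{1-\lambda}{1-\lambda^k}\cdot\frac{1-\lambda^k}{1-\lambda}=\mathbb{1}$, confirming that $h_{\lambda,k}$ is a bona fide annihilation operator.

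The only genuine subtlety, and the step I would be most careful about, is the mode bookkeeping underlying the recursion: everything hinges on the observation that once a signal operator $a_{k-l}$ has been produced it is frozen by all subsequent (more outward) conjugations, because those involve different signal modes and only the shared environment mode $b$ carries the coupling forward. Making this locality explicit is precisely what collapses what would otherwise look like a $k$-fold nested Baker--Campbell--Hausdorff expansion into a single induction. The normalisation $\sqrt{\tfrac{1-\lambda}{1-\lambda^k}}$ is then forced by canonicity: the overall map must send $b$ to $\lambda^{k/2}b-\sqrt{1-\lambda^k}\,h_{\lambda,k}$ with $(\lambda^{k/2})^2+(1-\lambda^k)=1$, i.e.\ $\lambda^{k/2}$ and $\sqrt{1-\lambda^k}$ play the role of the cosine and sine of an effective beam-splitter angle.
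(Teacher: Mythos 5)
Your proof is correct and follows essentially the same route as the paper: both rest on the elementary single–beam-splitter Heisenberg rule $\left(U_{\lambda}^{(S_iE)}\right)^\dagger b\, U_{\lambda}^{(S_iE)}=\sqrt{\lambda}\,b-\sqrt{1-\lambda}\,a_i$ together with the observation that each beam splitter commutes with the signal operators of the other modes, assembled into an induction (the paper inducts on $k$ via $b_H^{(k+1)}=-\sqrt{1-\lambda}\,a_{k+1}+\sqrt{\lambda}\,b_H^{(k)}$, which is the same bookkeeping as your inside-out peeling). Your reindexing, the identification of $h_{\lambda,k}$, and the geometric-sum verification of $\left[h_{\lambda,k},h_{\lambda,k}^\dagger\right]=\mathbb{1}$ all match the paper's argument.
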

\begin{proof}
Let us prove this by induction. 
The case $k=1$ is true since $h_{\lambda,1}=a_1$ and since it holds that
\begin{equation}
    \left({U_{\lambda}^{(S_1E)}}\right)^\dagger b\, U_{\lambda}^{(S_1E)}=-\sqrt{1-\lambda}a_1+\sqrt{\lambda}b\,,
\end{equation}
as established by Lemma~\ref{lemmatrasf} in the Appendix.
Assuming~\eqref{trb} to be true for $k$, we have to prove it for $k+1$. By using that
\begin{equation}
    \left({U_{\lambda}^{(S_{k+1}E)}}\right)^\dagger b\, U_{\lambda}^{(S_{k+1}E)}=-\sqrt{1-\lambda}a_{k+1}+\sqrt{\lambda}b\,
\end{equation}
and by exploiting the inductive hypothesis, one obtains that
	\bb
b_H^{(k+1)} &=-\sqrt{1-\lambda}a_{k+1}+\sqrt{\lambda}b_H^{(k)}\\&=-\sqrt{1-\lambda}a_{k+1}+\sqrt{\lambda}\left[-\sqrt{1-\lambda^k}h_{\lambda,k}+\lambda^{k/2} b\right]\\&= -\sqrt{1-\lambda^{k+1}}h_{\lambda,k+1}+\lambda^{(k+1)/2}  b \,.
	\ee
Finally, by using $\left[a_i,\left(a_j\right)^\dagger\right]=\delta_{ij}\mathbb{1}$, one concludes that $\left[h_{\lambda,k},\left(h_{\lambda,k}\right)^\dagger\right]=\mathbb{1}$.
\end{proof}
Lemma~\ref{bHeisenberg} helps in the calculation of mean values on the output environment state $\Delta_{\lambda,\sigma_0}^{(k)}(\rho^{(k)})$. For example, the mean photon number $\langle b^\dagger b\rangle_{\Delta_{\lambda,\sigma_0}^{(k)}(\rho^{(k)})}$ can be evaluated as 
\bb
    \langle b^\dagger b\rangle_{\Delta_{\lambda,\sigma_0}^{(k)}(\rho^{(k)})}&=\Tr\left[b^\dagger b\,\Delta_{\lambda,\sigma_0}^{(k)}(\rho^{(k)})\right] \\&=\Tr_{S_1S_2\ldots S_k E}  \left[b^\dagger b\,U_{\lambda}^{(S_kE)}\ldots U_{\lambda}^{(S_1E)} \right.  \\
  &\left. \qquad\rho^{(k)}\otimes\sigma_0\left(U_{\lambda}^{(S_1E)}\right)^\dagger\ldots     \left(U_{\lambda}^{(S_kE)}\right)^\dagger\right] \\&=\Tr_{S_1S_2\ldots S_k E}\left[\left(b_H^{(k)}\right)^\dagger b_H^{(k)} \rho^{(k)}\otimes \sigma_0\right] \\&=\left\langle \left(b_H^{(k)}\right)^\dagger b_H^{(k)}\right\rangle_{\rho^{(k)}\otimes\sigma_0}\,.
\ee
In addition, let us recall that the photon number variance of a state $\sigma$ is defined by
\begin{equation}
    V_\sigma\coloneqq\left\langle \left(b^\dagger b-\langle b^\dagger b\rangle_\sigma \right)^2\right\rangle_{\sigma}\,,
\end{equation}
and it holds that $V_\sigma\ge0$, $V_\sigma=\langle \left(b^\dagger b\right)^2 \rangle_\sigma -\langle b^\dagger b\rangle_\sigma^2$.
We are now ready to show Theorem~\ref{NoTheorem}. 
\begin{thm}\label{NoTheorem}
Let $n\in\N$, $\nu\ge0$, and $\lambda\in(0,1)$. The environmental Fock state $\ketbra{n}$ is not $(\lambda,\tau_\nu)$-achievable, except the trivial case in which $n=\nu=0$.
\end{thm}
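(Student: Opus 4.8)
The plan is to leverage the photon-number variance $V_\sigma$ introduced just above the statement, together with the elementary fact that every Fock state satisfies $V_{\ketbra{n}}=0$ (it is an eigenstate of $b^\dagger b$). Hence, to show that $\ketbra{n}$ is not $(\lambda,\tau_\nu)$-achievable outside the trivial case, it suffices to prove that any achievable state $\sigma=\Delta_{\lambda,\tau_\nu}^{(k)}(\rho^{(k)})$ with $k\ge 1$ has $V_\sigma>0$ unless $\nu=0$ and $\sigma$ is the vacuum. The case $k=0$ is immediate, since then $\sigma=\tau_\nu$, which is a Fock state only when $\nu=0$ and $n=0$.

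First I would compute $V_\sigma$ in the Heisenberg picture furnished by Lemma~\ref{bHeisenberg}. For any environment observable $O$ one has $\langle O\rangle_\sigma=\langle O_H\rangle_{\rho^{(k)}\otimes\tau_\nu}$, where $O_H$ is the conjugation of $O$ by the beam-splitter product; in particular $\langle b^\dagger b\rangle_\sigma=\left\langle\left(b_H^{(k)}\right)^\dagger b_H^{(k)}\right\rangle_{\rho^{(k)}\otimes\tau_\nu}$ and likewise for the square, so that $V_\sigma$ is exactly the variance of the observable $\left(b_H^{(k)}\right)^\dagger b_H^{(k)}$ in the product state $\rho^{(k)}\otimes\tau_\nu$. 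By Lemma~\ref{bHeisenberg}, $b_H^{(k)}=\lambda^{k/2}b-\sqrt{1-\lambda^k}\,h_{\lambda,k}$, where $h_{\lambda,k}$ acts only on the signals, so that $[h_{\lambda,k},b]=[h_{\lambda,k},b^\dagger]=0$ and $[h_{\lambda,k},h_{\lambda,k}^\dagger]=\mathbb{1}$.

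The computation is then tamed entirely by the phase invariance of the thermal state: $\langle b\rangle_{\tau_\nu}=\langle b^2\rangle_{\tau_\nu}=0$, while $\langle b^\dagger b\rangle_{\tau_\nu}=\nu$ and $\langle(b^\dagger b)^2\rangle_{\tau_\nu}=\nu(\nu+1)+\nu^2$. Expanding $\left(b_H^{(k)}\right)^\dagger b_H^{(k)}=\lambda^k b^\dagger b+(1-\lambda^k)h_{\lambda,k}^\dagger h_{\lambda,k}-\sqrt{\lambda^k(1-\lambda^k)}\,(b^\dagger h_{\lambda,k}+h_{\lambda,k}^\dagger b)$ and using that $b$ and $h_{\lambda,k}$ occupy a product state, every moment with an unequal number of $b$ and $b^\dagger$ vanishes; this kills all cross terms in the mean and nearly all of them in the second moment. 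Writing $m_h:=\langle h_{\lambda,k}^\dagger h_{\lambda,k}\rangle_{\rho^{(k)}}\ge 0$ and $V_h:=\langle(h_{\lambda,k}^\dagger h_{\lambda,k})^2\rangle_{\rho^{(k)}}-m_h^2\ge 0$, bookkeeping the surviving terms yields
\begin{equation}
V_\sigma=\lambda^{2k}\,\nu(\nu+1)+(1-\lambda^k)^2\,V_h+\lambda^k(1-\lambda^k)\big[(2\nu+1)m_h+\nu\big]\,.
\end{equation}

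Since $\lambda\in(0,1)$ and $k\ge 1$ force $\lambda^k\in(0,1)$, all three summands are nonnegative, so $V_\sigma\ge 0$ with equality only if simultaneously $\nu(\nu+1)=0$, $V_h=0$, and $(2\nu+1)m_h+\nu=0$. The first gives $\nu=0$, whereupon the third forces $m_h=0$. The mean formula $\langle b^\dagger b\rangle_\sigma=\lambda^k\nu+(1-\lambda^k)m_h$ then equals $0$, so the only achievable state with $V_\sigma=0$ is $\sigma=\ketbra{0}$. Consequently, if $\sigma$ coincides with a Fock state $\ketbra{n}$ (which has $V_\sigma=0$), we must have $\nu=0$ and $n=0$, precisely the excluded trivial case, completing the proof. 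The main obstacle is the careful evaluation of the fourth moment $\left\langle\left(\left(b_H^{(k)}\right)^\dagger b_H^{(k)}\right)^2\right\rangle$, but the phase invariance of $\tau_\nu$ annihilates every off-diagonal contribution and collapses it to the manifestly nonnegative expression above.
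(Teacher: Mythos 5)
Your proof is correct and follows essentially the same route as the paper's: both invoke Lemma~\ref{bHeisenberg} to express the photon-number variance of any achievable state as $(1-\lambda^k)^2V_h+\lambda^{2k}\nu(\nu+1)+\lambda^k(1-\lambda^k)\left[(2\nu+1)m_h+\nu\right]$, a sum of nonnegative terms whose vanishing forces $\nu=0$ and $m_h=0$, hence $n=0$. The only differences are cosmetic (you argue directly via positivity of the variance rather than by contradiction) plus one small omission: the paper also verifies the converse implicit in the ``except'' clause, namely that $\ketbra{0}$ \emph{is} $(\lambda,\tau_0)$-achievable via all-vacuum trigger signals, which your write-up leaves out.
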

\begin{proof}
Suppose that $n\ne0$ or $\nu\ne0$. Assume by contradiction that there exist $k\in\mathbb{N}$ and $\rho^{(k)}\in\mathfrak{S}(\HH_{S_{1}}\otimes\ldots\otimes\HH_{S_{k}})$ such that $\ketbra{n}=\Delta_{\lambda,\tau_\nu}^{(k)}(\rho^{(k)})$.
As a consequence, by applying Lemma~\ref{bHeisenberg}, the mean photon number can be evaluated as
\bb\label{energy_n}
	n&=\langle b^\dagger b\rangle_{\ketbra{n}}=\left\langle \left(b_H^{(k)}\right)^\dagger b_H^{(k)}\right\rangle_{\rho^{(k)}\otimes\tau_\nu} \\&=(1-\lambda^k)\left\langle \left(h_{\lambda,k}\right)^\dagger h_{\lambda,k}\right\rangle_{\rho^{(k)}}+\lambda^k\langle b^\dagger b\rangle_{\tau_\nu}\,,
	\ee
where we used that $\langle b\rangle_{\tau_\nu}=0$.
Moreover, by using Lemma~\ref{bHeisenberg}, one can show that
\bb
	&n^2=\langle (b^\dagger b)^2\rangle_{\ketbra{n}}=\left\langle \left(\left(b_H^{(k)}\right)^\dagger b_H^{(k)}\right)^2\right\rangle_{\rho^{(k)}\otimes\tau_\nu} \\&=\left\langle\left[(1-\lambda^k)\left(h_{\lambda,k}\right)^\dagger h_{\lambda,k}+\lambda^kb^\dagger b\right.\right.  \\
  &\left.\left. \quad-\lambda^{k/2}\sqrt{1-\lambda^k}\left(h_{\lambda,k}\right)^\dagger b-\lambda^{k/2}\sqrt{1-\lambda^k}h_{\lambda,k} b^\dagger\right]^2 \right\rangle_{\rho^{(k)}\otimes\tau_\nu} \\&=(1-\lambda^k)^2\left\langle \left(\left(h_{\lambda,k}\right)^\dagger h_{\lambda,k}\right)^2\right\rangle_{\rho^{(k)}}+\lambda^{2k}\langle (b^{\dagger}b)^2\rangle_{\tau_\nu} \\&\quad+\lambda^k(1-\lambda^k)\left[ \left(4\nu+1\right)\left\langle \left(h_{\lambda,k}\right)^\dagger h_{\lambda,k}\right\rangle_{\rho^{(k)}} +\nu\right]\,,
\ee
	where we used $\langle b^\dagger b\rangle_{\tau_\nu}=\nu$ and $\langle b b^\dagger b\rangle_{\tau_\nu}=0$. By setting 
	\begin{equation}\label{varianceRho}
	    V_{\rho^{(k)}}\coloneqq \left\langle \left(\left(h_{\lambda,k}\right)^\dagger h_{\lambda,k}\right)^2\right\rangle_{\rho^{(k)}}-\left\langle \left(h_{\lambda,k}\right)^\dagger h_{\lambda,k}\right\rangle_{\rho^{(k)}}^2\,,
	\end{equation}
	one obtains
	\bb
	0&=\langle (b^\dagger b)^2\rangle_{\ketbra{n}}-\langle b^\dagger b\rangle_{\ketbra{n}}^2 \\&=(1-\lambda^k)^2V_{\rho^{(k)}}+\lambda^{2k}V_{\tau_\nu} \\&\quad+\lambda^k(1-\lambda^k)\left[ \left(2\nu+1\right)\left\langle \left(h_{\lambda,k}\right)^\dagger h_{\lambda,k}\right\rangle_{\rho^{(k)}} +\nu\right]\,.
	\ee
	We deduce that $\nu= 0$ and, also, $\left\langle \left(h_{\lambda,k}\right)^\dagger h_{\lambda,k}\right\rangle_{\rho^{(k)}}=0$. By inserting the latter into~\eqref{energy_n}, we conclude that $n=0$. Hence, we have a contradiction.

	Now suppose that $\nu=n=0$. Fixed any $k\in\N$, the choice $$\rho^{(k)}=\ketbra{0}_{S_1}\otimes\ldots\otimes\ketbra{0}_{S_k}$$ 
	satisfies that
	\begin{equation}
	    \Delta_{\lambda,\tau_\nu}^{(k)}(\rho^{(k)})=\ketbra{0}_{E} \,.  
	\end{equation}
	This is a direct consequence of the fact that the {BS} preserves the vacuum i.e.~$$U_{\lambda}^{(S_iE)}\ket{0}_{S_i}\otimes\ket{0}_E=\ket{0}_{S_i}\otimes\ket{0}_E\,.$$
	This concludes the proof.
\end{proof}
As a consequence of the previous theorem, environmental Fock states can not be \emph{exactly} achieved at the output of the environment by feeding trigger pulses close to each other into an optical fibre. However, we will show that Fock states can be \emph{approximately} achieved. More precisely, by sending $k$ trigger signals initialised in an appropriate state, then the environment transforms into a state that is as close in trace norm to the chosen Fock state as desired if $k$ is large, as we will rigorously see later with Theorem~\ref{FockAchiev}.

Let us give a heuristic explanation of this fact. Since $\lambda\in(0,1)$ then Lemma~\ref{bHeisenberg} implies that for $k\rightarrow\infty$ it holds that $b_H^{(k)}\simeq-h_{\lambda,k}$. As a consequence, for $k\rightarrow\infty$ the mean photon number and the photon number variance of the output environment state are
\begin{equation}\label{MeanD}
    \left\langle b^\dagger b\right\rangle_{\Delta_{\lambda,\sigma_0}^{(k)}(\rho^{(k)})}\simeq   \left\langle \left(h_{\lambda,k}\right)^\dagger h_{\lambda,k}\right\rangle_{\rho^{(k)}}\,,
\end{equation}
\begin{equation}\label{VarD}
    V_{\Delta_{\lambda,\sigma_0}^{(k)}(\rho^{(k)})}\simeq   \left\langle \left(\left(h_{\lambda,k}\right)^\dagger h_{\lambda,k}\right)^2\right\rangle_{\rho^{(k)}}-\left\langle \left(h_{\lambda,k}\right)^\dagger h_{\lambda,k}\right\rangle_{\rho^{(k)}}^2\,,
\end{equation}
respectively. Fixed $k\gg 1$, our hope is to obtain $\Delta_{\lambda,\sigma_0}^{(k)}(\rho^{(k)})\simeq \ketbra{n}$ for a suitable choice of $\rho^{(k)}$. This is equivalent to requiring that $\Delta_{\lambda,\sigma_0}^{(k)}(\rho^{(k)})$ has mean photon number approximately equal to $n$ and vanishing photon number variance. By looking at~\eqref{MeanD} and~\eqref{VarD}, this is accomplished by choosing $\rho^{(k)}$ to be the $n$-th excited state of the harmonic oscillator defined by the annihilation operator $h_{\lambda,k}$ i.e.~
\begin{equation}
    \rho_{\lambda,n}^{(k)}\coloneqq \ketbra{n,\lambda}_{S_1\ldots S_k}\,,
\end{equation}
where
\begin{equation}\label{statenl}
 \ket{n,\lambda}_{S_1\ldots S_k}\coloneqq\frac{({h^\dagger_{\lambda,k}})^{n}}{\sqrt{n!}}\ket{0}_{S_1}\ket{0}_{S_2}\ldots\ket{0}_{S_k}\,.
\end{equation}
Hence, we expect that the output environment state $\Delta_{\lambda,\sigma_0}^{(k)}\left(\rho_{n,\lambda}^{(k)}\right)$ is as close to $\ketbra{n}$ as desired if $k$ is sufficiently large. A schematic of the interaction between the trigger signals and the environment in terms of {BS}s is shown in Fig.~\ref{inter}.
\begin{figure}[t]
\centering
\includegraphics[scale=0.40]{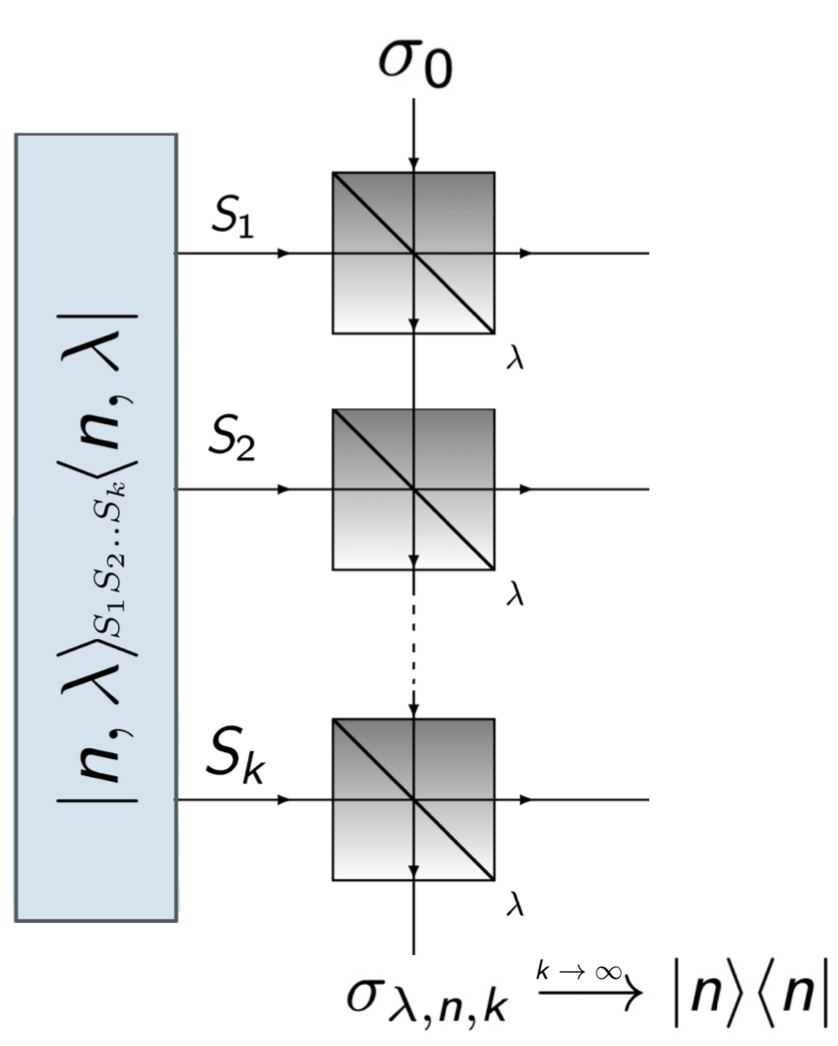}
\caption{A schematic of the interaction in terms of {BS}s. The initial environment state is $\sigma_0$. After the interaction with all the trigger signals $S_1$, $S_2$, .., $S_k$, the environment transforms into the state $\sigma_{\lambda,n,k}\coloneqq \Phi_{\lambda,\Delta_{\lambda,\sigma_0}^{(k)}\left(\rho_{n,\lambda}^{(k)}\right)} $. If $k$ is large, then $\sigma_{\lambda,n,k}$ is approximately the $n$-th Fock state.}
\label{inter}
\end{figure}

Intuitively, for $k$ large the fact that $$\Delta_{\lambda,\sigma_0}^{(k)}\left(\rho_{n,\lambda}^{(k)}\right)\approx \ketbra{n}$$ implies that $$\Phi_{\lambda,\Delta_{\lambda,\sigma_0}^{(k)}\left(\rho_{n,\lambda}^{(k)}\right)}\approx \Phi_{\lambda,\ketbrasub{n}}\,.$$ Therefore, we expect the capacities of $\Phi_{\lambda,\Delta_{\lambda,\sigma_0}^{(k)}\left(\rho_{n,\lambda}^{(k)}\right)}$ to be approximately equal to those of $\Phi_{\lambda,\ketbrasub{n}}$ for $k$ large. Lemma~\ref{Stability} assures that this is indeed the case.
\begin{lemma}\label{Stability}
	The energy-constrained quantum capacity and the energy-constrained entanglement-assisted classical capacity of the general attenuators are continuous with respect to the environment state, over the subset of environment states having a finite mean photon number.

	More explicitly, let $\sigma\in\mathfrak{S}(\HH_E)$ be an environment state such that $\langle b^\dagger b\rangle_\sigma<\infty$ and let $\{\sigma_k\}_{k\in\mathbb{N}}\subset\mathfrak{S}(\HH_E)$ be a sequence of environment states such that $$\lim\limits_{k\rightarrow\infty}\|\sigma_k-\sigma\|_1=0\,,$$
	then for all $N>0$ and $\lambda\in[0,1]$ it holds that:
	\bb
	\lim\limits_{k\rightarrow\infty}Q\left(\Phi_{\lambda,\sigma_k},N\right)&=Q\left(\Phi_{\lambda,\sigma},N\right)\,,\\
	\lim\limits_{k\rightarrow\infty}C_{\text{ea}}\left(\Phi_{\lambda,\sigma_k},N\right)&=C_{\text{ea}}\left(\Phi_{\lambda,\sigma},N\right)\,.
	\ee
\end{lemma}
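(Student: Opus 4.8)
The plan is to deduce the statement from the continuity bound of Lemma~\ref{continuityBound}, which asserts that both $Q(\cdot,N)$ and $C_{\text{ea}}(\cdot,N)$ are continuous with respect to the EC diamond norm over the set of energy-limited channels. Concretely, I would apply it to the pairs $\Phi_1=\Phi_{\lambda,\sigma_k}$ and $\Phi_2=\Phi_{\lambda,\sigma}$, so that everything reduces to two facts: (i) that $\tfrac12\|\Phi_{\lambda,\sigma_k}-\Phi_{\lambda,\sigma}\|_{\diamond N}$ can be made as small as we like, and (ii) that the channels $\Phi_{\lambda,\sigma_k}$ and $\Phi_{\lambda,\sigma}$ are energy-limited with constants $\alpha,N_0$ that can be chosen uniformly in $k$.

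For (i) I would prove the stronger statement that $\sigma\mapsto\Phi_{\lambda,\sigma}$ is a contraction from trace norm to diamond norm, i.e. $\|\Phi_{\lambda,\sigma_1}-\Phi_{\lambda,\sigma_2}\|_{\diamond}\le\|\sigma_1-\sigma_2\|_1$ (hence also for $\|\cdot\|_{\diamond N}$). This is immediate from~\eqref{def_genatt}: for any ancilla $C$ and any input $\omega_{SC}$, linearity in the environment gives $(\Phi_{\lambda,\sigma_1}-\Phi_{\lambda,\sigma_2})\otimes I_C(\omega_{SC})=\Tr_E[(U_\lambda^{(SE)}\otimes I_C)\,\omega_{SC}\otimes(\sigma_1-\sigma_2)\,(U_\lambda^{(SE)}\otimes I_C)^\dagger]$; since the partial trace is a trace-norm contraction, unitary conjugation preserves the trace norm, and the trace norm is multiplicative over tensor products, the right-hand side has trace norm at most $\|\omega_{SC}\|_1\,\|\sigma_1-\sigma_2\|_1=\|\sigma_1-\sigma_2\|_1$. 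Taking the supremum over $\omega_{SC}$ yields the bound, so that $\varepsilon_k:=\tfrac12\|\Phi_{\lambda,\sigma_k}-\Phi_{\lambda,\sigma}\|_{\diamond N}\le\tfrac12\|\sigma_k-\sigma\|_1\to0$.

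For (ii) I would use the output-energy formula~\eqref{energia_general}. Bounding the cross term by Cauchy--Schwarz ($|\langle a\rangle_\rho|\le\langle a^\dagger a\rangle_\rho^{1/2}$ and $|\langle b\rangle_\sigma|\le\langle b^\dagger b\rangle_\sigma^{1/2}$) followed by the AM--GM inequality $2\sqrt{\lambda(1-\lambda)}\,\langle a^\dagger a\rangle_\rho^{1/2}\langle b^\dagger b\rangle_\sigma^{1/2}\le\lambda\langle a^\dagger a\rangle_\rho+(1-\lambda)\langle b^\dagger b\rangle_\sigma$ gives, for all $\rho$, $\Tr[a^\dagger a\,\Phi_{\lambda,\sigma}(\rho)]\le2\lambda\,\Tr[a^\dagger a\,\rho]+2(1-\lambda)\langle b^\dagger b\rangle_\sigma$. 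Thus every $\Phi_{\lambda,\sigma}$ is energy-limited with $\alpha=2\lambda$ and $N_0=2(1-\lambda)\langle b^\dagger b\rangle_\sigma$, and the common choice $N_0=2(1-\lambda)\sup_k\max(\langle b^\dagger b\rangle_{\sigma_k},\langle b^\dagger b\rangle_\sigma)$ makes the constants uniform in $k$. With (i) and (ii), Lemma~\ref{continuityBound} bounds $|Q(\Phi_{\lambda,\sigma_k},N)-Q(\Phi_{\lambda,\sigma},N)|$ (and the analogous $C_{\text{ea}}$ difference) by the right-hand side of~\eqref{countboundq} with $\varepsilon=\varepsilon_k$; as $k\to\infty$ the prefactor $\sqrt{\varepsilon_k}\to0$ while the argument of $g$ grows only like $1/\sqrt{\varepsilon_k}$ times the fixed $N_0$, so the bound vanishes and both capacities converge.

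The main obstacle is precisely the uniformity required in (ii): trace-norm convergence $\sigma_k\to\sigma$ does not by itself control $\langle b^\dagger b\rangle_{\sigma_k}$ (mean photon number is only lower semicontinuous, not continuous, in trace norm), so a priori $N_0$ could diverge with $k$ and the estimate would be vacuous. I would therefore establish $\sup_k\langle b^\dagger b\rangle_{\sigma_k}<\infty$, which holds in every case relevant here --- in particular for the approximating sequences $\sigma_k=\Delta_{\lambda,\sigma_0}^{(k)}(\rho_{n,\lambda}^{(k)})$ generated by the trigger signals, whose energies converge to that of the target Fock state $\ketbra{n}$ --- and then (ii) and the limit argument close the proof. (For fully arbitrary trace-norm-convergent sequences the estimate $\sqrt{\varepsilon_k}\,g(4(\alpha N+N_0)/\sqrt{\varepsilon_k})\to0$ still survives as long as $\log\langle b^\dagger b\rangle_{\sigma_k}=o(\varepsilon_k^{-1/2})$, so a controlled growth of the environment energy is tolerated.)
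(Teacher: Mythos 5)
Your proof follows the same route as the paper's own: both reduce the statement to Lemma~\ref{continuityBound} by showing (via contractivity of the trace norm under partial traces and unitary invariance) that $\|\Phi_{\lambda,\sigma_k}-\Phi_{\lambda,\sigma}\|_{\diamond N}\le\|\sigma_k-\sigma\|_1\to0$, and (via~\eqref{energia_general} plus Cauchy--Schwarz) that the channels are energy-limited, the only cosmetic difference being that you close the cross term with AM--GM while the paper uses $\sqrt{x}\le x/4+1$. You are in fact more careful than the paper on one point: the paper asserts that trace-norm convergence of $\sigma_k$ forces $\langle b^\dagger b\rangle_{\sigma_k}$ to be eventually bounded, which is false in general since the mean photon number is only lower semicontinuous in trace norm, whereas you correctly flag that a uniform energy bound must be assumed (or verified, as it is for the trigger-signal sequences actually used) for the continuity estimate to be non-vacuous.
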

\begin{proof}
Thanks to Lemma~\ref{continuityBound}, it suffices to show that for all $\lambda\in[0,1]$ it holds that: (a)~$\Phi_{\lambda,\sigma}$ is energy-limited, (b)~$\Phi_{\lambda,\sigma_k}$ is energy-limited for $k$ sufficiently large, and (c)~$\lim\limits_{k\rightarrow\infty}\|\Phi_{\lambda,\sigma_k}-\Phi_{\lambda,\sigma}\|_{\diamond N}=0$ for all $N>0$.
    
Let $\lambda\in[0,1]$ and $\rho\in\mathfrak{S}(\HH_S)$. By using~\eqref{energia_general}, we have that
\bb \label{enerbound}
&\Tr_S\left[\Phi_{\lambda,\sigma}(\rho)\,a^\dagger a\right] \\&\le\lambda \langle a^\dagger a \rangle_\rho +(1-\lambda)\langle b^\dagger b\rangle_\sigma+2\sqrt{\lambda(1-\lambda)}|\langle a\rangle_\rho| |\langle b^\dagger\rangle_\sigma|\,.
\ee
The Cauchy--Schwarz inequality for the Hilbert--Schmidt inner product yields 
\bb
&|\langle a\rangle_\rho|=|\langle a^\dagger\rangle_\rho|=|\Tr[\sqrt{\rho}\sqrt{\rho} a^\dagger]|\le\sqrt{\Tr\rho}\sqrt{\Tr[a\rho a^\dagger]} \\
&\qquad\,=\sqrt{\langle a^\dagger a \rangle_\rho}\,, \\
&|\langle b^\dagger\rangle_\sigma|\le  \sqrt{\langle b^\dagger b\rangle_\sigma}\,.
\ee
By inserting this into~\eqref{enerbound} and by noting that 
\bb
\sqrt{\langle a^\dagger a \rangle_\rho}\le \frac{\langle a^\dagger a \rangle_\rho}{4}+1\,,
\ee
one obtains that
\begin{align}
&\Tr_S[\Phi_{\lambda,\sigma}(\rho)\,a^\dagger a] \nonumber\\&\le \lambda \langle a^\dagger a \rangle_\rho +(1-\lambda)\langle b^\dagger b\rangle_\sigma+2\sqrt{\lambda(1-\lambda)\langle b^\dagger b\rangle_\sigma}\sqrt{\langle a^\dagger a \rangle_\rho}\nonumber\\&\le \left(\lambda+\frac{1}{2}\sqrt{\lambda(1-\lambda)\langle b^\dagger b\rangle_\sigma}\right)\langle a^\dagger a \rangle_\rho\nonumber \\&\quad+\left[(1-\lambda)\langle b^\dagger b\rangle_\sigma+2\sqrt{\lambda(1-\lambda)\langle b^\dagger b\rangle_\sigma}\right]\label{engenatt4}\\&
=\alpha \langle a^\dagger a \rangle_\rho+N_0\,.\label{engenatt}
\end{align}
As a result, since $\langle b^\dagger b\rangle_\sigma<\infty$, it follows that $\Phi_{\lambda,\sigma}$ is energy-limited.
	
Since $\sigma_k$ tends to $\sigma$, then $\langle b^\dagger b\rangle_{\sigma_k}$ has to be upper bounded for $k$ sufficiently large. Therefore,~\eqref{engenatt4} implies that $\Phi_{\lambda,\sigma_k}$ is energy-limited for $k$ sufficiently large.
	
Furthermore, let $N>0$. Take an arbitrary Hilbert space $\HH_C$ and $\rho\in\mathfrak{S}(\HH_S\otimes\HH_C)$ such that $\Tr\left[\rho\,a^\dagger a\right]\le N$. By applying~\cite[Lemma 23]{QCLT}, we have that
\bb\label{diamond_dist_gen}
&\left\|\left(\left(\Phi_{\lambda,\sigma_k}-\Phi_{\lambda,\sigma}\right)\otimes I_C\right)(\rho)\right\|_1 \le \left\|\sigma_k-\sigma\right\|_1\,.
\ee
\eqref{diamond_dist_gen} can be derived from the contractivity of the trace norm under partial traces and from the invariance of the trace norm under unitary transformations. Thanks to~\eqref{diamond_dist_gen}, it holds that
\begin{equation}\label{diamondtrace}
\left\|\Phi_{\lambda,\sigma_k}-\Phi_{\lambda,\sigma}\right\|_{\diamond N}\le\left\|\sigma_k-\sigma\right\|_1\,.
\end{equation}
Since $\lim\limits_{k\rightarrow\infty}\left\|\sigma_k-\sigma\right\|_1=0$, one concludes that $\lim\limits_{k\rightarrow\infty}\left\|\Phi_{\lambda,\sigma_k}-\Phi_{\lambda,\sigma}\right\|_{\diamond N}=0$.
\end{proof}

Before continuing our analysis, let us formalise the fact that: if a sequence of single-mode states $\{\rho_k\}_{k\in\mathbb{N}}$ is such that its mean photon number $\langle a^\dagger a \rangle_{\rho_k}$ tends to $n$ and its photon number variance $V_{\rho_k}$ tends to $0$, then $\{\rho_k\}_{k\in\mathbb{N}}$ converges to $\ketbra{n}$ in trace norm. This property is guaranteed by the following lemma. 
\begin{lemma}\label{lemmavar}
For all single-mode state $\rho$ and all $n\in\N$, it holds that
\begin{equation}\label{ineq_var}
\| \rho-\ketbra{n} \|_1\le2\sqrt{V_\rho+\left(\langle a^\dagger a\rangle_\rho -n\right)^2}\,,
\end{equation}
where $V_\rho\coloneqq\left\langle\left(a^\dagger a-\left\langle a^\dagger a\right\rangle_\rho\mathbb{1} \right)^2\right\rangle_\rho$ is the photon number variance of $\rho$.
\end{lemma}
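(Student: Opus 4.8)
The plan is to combine the fidelity--trace-distance inequality~\eqref{fidel_trace_ineq} with an elementary Chebyshev-type estimate on the photon-number distribution of $\rho$. First I would apply the upper bound in~\eqref{fidel_trace_ineq} to $\rho_1=\rho$ and $\rho_2=\ketbra{n}$, obtaining $\tfrac12\|\rho-\ketbra{n}\|_1\le\sqrt{1-F^2(\rho,\ketbra{n})}$. Because $\ketbra{n}$ is a projector it equals its own square root, so $F(\rho,\ketbra{n})=\|\sqrt{\rho}\,\ket{n}\bra{n}\|_1$; the trace norm of this rank-one operator is $\|\sqrt{\rho}\ket{n}\|_2=\sqrt{\bra{n}\rho\ket{n}}$. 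Writing $p_k\coloneqq\bra{k}\rho\ket{k}$ for the photon-number distribution, this gives $F^2(\rho,\ketbra{n})=p_n$ and hence $\|\rho-\ketbra{n}\|_1\le 2\sqrt{1-p_n}$.

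It then suffices to establish the pointwise bound $1-p_n\le V_\rho+(\langle a^\dagger a\rangle_\rho-n)^2$. Setting $\bar N\coloneqq\langle a^\dagger a\rangle_\rho=\sum_k k\,p_k$, I would use a bias--variance decomposition: expanding $(k-n)^2=[(k-\bar N)+(\bar N-n)]^2$ and using $\sum_k(k-\bar N)p_k=0$ together with $\sum_k p_k=1$ yields the identity $\sum_k(k-n)^2 p_k=V_\rho+(\bar N-n)^2$. The key step is the integer-gap observation that $(k-n)^2\ge 1$ for every integer $k\neq n$, so that $\sum_k(k-n)^2 p_k\ge\sum_{k\neq n}p_k=1-p_n$. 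Chaining this with the identity gives exactly $1-p_n\le V_\rho+(\bar N-n)^2$, and substituting into the fidelity bound above proves~\eqref{ineq_var}.

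I do not expect a genuine obstacle; the one point meriting care is that $\bar N$ or $V_\rho$ may be infinite, in which case the right-hand side of~\eqref{ineq_var} is $+\infty$ and the inequality holds trivially since the trace distance never exceeds $2$. When both quantities are finite the series $\sum_k(k-n)^2 p_k$ converges absolutely and all the rearrangements above are legitimate; in fact every identity and inequality can be read in $[0,+\infty]$, so that no separate case split is strictly necessary. In essence the lemma is Chebyshev's inequality applied to the integer-valued number operator, packaged through the pure-state fidelity formula.
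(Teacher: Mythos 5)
Your proof is correct and is essentially the paper's own argument: the operator inequality $(a^\dagger a-n\mathbb{1})^2\ge \mathbb{1}-\ketbra{n}$ used in the paper is exactly your integer-gap observation $(k-n)^2\ge 1-\delta_{kn}$ read on the diagonal, and both proofs then chain the pure-state fidelity formula $F^2(\rho,\ketbra{n})=\bra{n}\rho\ket{n}$, the upper bound in~\eqref{fidel_trace_ineq}, and the bias--variance identity. No substantive difference.
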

\begin{proof}
Let us observe that
	$$\left(a^\dagger a-n\mathbb{1}\right)^2\ge \mathbb{1}-\ketbra{n} \,,$$
	since it can be rewritten as
	$$\sum_{k=0}^\infty\left[(k-n)^2-1+\delta_{kn}\right]\ketbra{k}\ge 0$$
	and $(k-n)^2-1+\delta_{kn}\ge0$ for all $n,k\in\mathbb{N}$. Hence, one obtains
	\bb\label{eq_fid}
	\Tr\left[\rho\left(a^\dagger a-n\mathbb{1}\right)^2\right]&\ge	\Tr\left[\rho\left(\mathbb{1}-\ketbra{n}\right)\right] \\&=1-F^2\left(\rho,\ketbra{n}\right)	 \,,
	\ee
	where 
	$F\left(\rho,\ketbra{n}\right)=\sqrt{\bra{n}\rho\ket{n}}$ is the fidelity between $\rho$ and the pure state $\ketbra{n}$.~\eqref{fidel_trace_ineq} establishes that $$\frac{1}{2}\left\|\rho-\ketbra{n}\right\|_1\le\sqrt{1-F^2(\rho,\ketbra{n})}\,.$$ Consequently,~\eqref{eq_fid} implies
	 \begin{equation}
	\| \rho-\ketbra{n} \|_1\le2\sqrt{\Tr\left[\rho\left(a^\dagger a-n\mathbb{1}\right)^2\right]} \,.
	 \end{equation}
	Finally, the identity
	\bb
	&\Tr\left[\rho\left(a^\dagger a-n\mathbb{1}\right)^2\right] \\&=\Tr\left[\rho\left(a^\dagger a-\langle a^\dagger a \rangle_\rho\mathbb{1}\right)^2\right]+\left(\langle a^\dagger a\rangle_\rho-n\right)^2 \,
	\ee
	concludes the proof.
\end{proof}
\begin{ex}
Let $n\in\N$, $\varepsilon\in(0,1)$, and $\rho_{\varepsilon,n}\coloneqq\ketbra{\psi_{\varepsilon,n}}$ be a single-mode state where
\bb
\ket{\psi_{\varepsilon,n}}\coloneqq \sqrt{1-\varepsilon}\ket{n}+\sqrt{\varepsilon}\ket{n+1}\,.
\ee
By using the explicit formula for the trace distance between pure states~\cite[Eq.~10.17]{HOLEVO-CHANNELS-2}, the left-hand side of~\eqref{ineq_var} is given by
\bb
\left\| \rho_{\varepsilon,n}-\ketbra{n} \right\|_1=2\sqrt{1-|\langle\psi_{\varepsilon,n}|n\rangle|^2}=2\sqrt{\varepsilon}\,.
\ee
Moreover, one can verify that the right-hand side of~\eqref{ineq_var} is
\bb
2\sqrt{V_{\rho_{\varepsilon,n}}+\left(\langle a^\dagger a\rangle_{\rho_{\varepsilon,n}} -n\right)^2}=2\sqrt{\varepsilon}\,.
\ee
Therefore, Lemma~\ref{lemmavar} is satisfied with equality. Hence, the inequality in~\eqref{ineq_var} is generally tight.
\end{ex}
Now, we are ready to state 
Theorem~\ref{FockAchiev}.
\begin{thm}\label{FockAchiev}
Let $\lambda\in(0,1)$, $n\in\N$, and let $\sigma_0$ be the stationary environment state. Suppose $\langle (b^\dagger b)^2\rangle_{\sigma_0}<\infty$. Suppose further that, during step~2 of the noise attenuation protocol, Alice sends $k$ trigger signals initialised in 
\begin{equation}\label{trigger_signals_k}
    \rho_{\lambda,n}^{(k)}\coloneqq \ketbra{n,\lambda}_{S_1\ldots S_k}\,.
\end{equation}
For $k\ge2$, the state $\ket{n,\lambda}_{S_1\ldots S_k}$, defined in~\eqref{statenl}, can be written as
\begin{equation}\label{recipe}
\ket{n,\lambda}_{S_1S_2\ldots S_k}=V^{(S_1S_2\ldots S_k)}_{k,\lambda}\ket{0}_{S_1}\ldots \ket{0}_{S_{k-1}}\ket{n}_{S_k}\,,
\end{equation}
where
\begin{equation} \label{V_unitary}
V^{(S_1S_2\ldots S_k)}_{k,\lambda}\coloneqq U^{(S_{1}S_2)}_{\frac{1-\lambda}{1-\lambda^2}}\ldots U^{(S_{k-1}S_k)}_{\frac{1-\lambda}{1-\lambda^k}}\,.
\end{equation}
Otherwise, for $k=1$ it holds that
\begin{equation}\label{case_k1}
    \ket{n,\lambda}_{S_1}=\ket{n}_{S_1}\,.
\end{equation}
Then, at the beginning of step~3, the environment transforms into the state
\begin{equation}
    \sigma_{\lambda,n,k}\coloneqq \Delta_{\lambda,\sigma_0}^{(k)}\left(\rho_{n,\lambda}^{(k)}\right)\,,
\end{equation}
(the dependence of $\sigma_{\lambda,n,k}$ on $\sigma_0$ is implicit) which satisfies
\begin{equation}\label{distance_from_Fock}
    \|\sigma_{\lambda,n,k}-\ketbra{n}\|_1 \le \eta\, \lambda^{k/2}\,,
\end{equation}
where $\eta\in\R^+ $ is a constant with respect to $\lambda$ and $k$ (but it is related to $n$, $\langle b^\dagger b\rangle_{\sigma_0}$, $\langle (b^\dagger b)^2\rangle_{\sigma_0}$). Hence
\begin{equation}\label{limit_to_fock}
	\lim\limits_{k\rightarrow\infty}\|\sigma_{\lambda,n,k}-\ketbra{n}\|_1=0\,.
\end{equation} 
Moreover, for all $\lambda\in (0,1/2)$ and for $k$ sufficiently large it holds that
\begin{equation}\label{achiev_quantum}
    Q\left(\Phi_{\lambda,\sigma_{\lambda,n_\lambda,k}}\right)\ge     Q\left(\Phi_{\lambda,\sigma_{\lambda,n_\lambda,k}},1/2\right)>0\,,
\end{equation}
where $n_\lambda\in\mathbb{N}$ with $$\frac{1}{\lambda}-1\le n_\lambda\le \frac{1}{\lambda} \,.$$ In addition, if Conjecture~\ref{congI} is valid, then for all $N>0$, $\lambda\in(0,1/2)$, and for $\bar{n}\in\N$ sufficiently large, it holds that
\bb\label{limit_k_Cea}
    \lim\limits_{k\rightarrow\infty}C_{\text{ea}}\left(\Phi_{\sigma_{\lambda,\bar{n},k}},N\right)&>C\left(\Id,N\right)\,,\\
    \lim\limits_{k\rightarrow\infty}Q_{\text{ea}}\left(\Phi_{\lambda,\sigma_{\lambda,\bar{n},k}},N\right)&>Q\left(\Id,N\right)/2\,,\\
    \lim\limits_{k\rightarrow\infty}Q\left(\Phi_{\lambda,\sigma_{\lambda,\bar{n},k}},N\right)&>0\,.
\ee
\end{thm}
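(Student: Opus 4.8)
The plan is to prove the theorem in three stages: first the operational recipe~\eqref{recipe}, then the quantitative closeness bound~\eqref{distance_from_Fock}, and finally the capacity statements~\eqref{achiev_quantum} and~\eqref{limit_k_Cea}, which will follow by combining~\eqref{distance_from_Fock} with the continuity Lemma~\ref{Stability} and the known die-hard results.

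For~\eqref{recipe} I would observe that each $U^{(S_iS_j)}_\mu$ is a passive (beam-splitter) unitary, so it preserves the global vacuum, $V^{(S_1\ldots S_k)}_{k,\lambda}\ket{0}_{S_1}\ldots\ket{0}_{S_k}=\ket{0}_{S_1}\ldots\ket{0}_{S_k}$, and acts linearly by conjugation on the mode operators. Writing $\ket{0}\ldots\ket{0}\ket{n}_{S_k}=\tfrac{1}{\sqrt{n!}}(a_k^\dagger)^n\ket{0}_{S_1}\ldots\ket{0}_{S_k}$ and pushing $V^{(S_1\ldots S_k)}_{k,\lambda}$ through, it therefore suffices to verify the single operator identity $V^{(S_1\ldots S_k)}_{k,\lambda}\,a_k^\dagger\,\bigl(V^{(S_1\ldots S_k)}_{k,\lambda}\bigr)^\dagger=h_{\lambda,k}^\dagger$, since then $V^{(S_1\ldots S_k)}_{k,\lambda}\ket{0}\ldots\ket{0}\ket{n}_{S_k}=\tfrac{1}{\sqrt{n!}}(h_{\lambda,k}^\dagger)^n\ket{0}_{S_1}\ldots\ket{0}_{S_k}=\ket{n,\lambda}_{S_1\ldots S_k}$ by~\eqref{statenl}. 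This identity I would establish by induction on $k$, applying the elementary single beam-splitter transformation (Lemma~\ref{lemmatrasf}) at each step; the choice of transmissivities $\mu_j=\tfrac{1-\lambda}{1-\lambda^j}$ for the splitter $U^{(S_{j-1}S_j)}_{\mu_j}$ is precisely what makes the coefficients telescope into the weights $\lambda^{(k-l)/2}$ and the normalisation $\sqrt{(1-\lambda)/(1-\lambda^k)}$ defining $h_{\lambda,k}$. The case $k=1$ is immediate since $h_{\lambda,1}=a_1$.

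The heart of the argument is~\eqref{distance_from_Fock}. Here I would invoke Lemma~\ref{lemmavar}, which bounds $\|\sigma_{\lambda,n,k}-\ketbra{n}\|_1$ by $2\sqrt{V_{\sigma_{\lambda,n,k}}+(\langle b^\dagger b\rangle_{\sigma_{\lambda,n,k}}-n)^2}$, so it remains to control the mean and variance of the photon number of $\sigma_{\lambda,n,k}$. The crucial observation is that $\rho^{(k)}_{n,\lambda}=\ketbra{n,\lambda}$ is exactly the $n$-th Fock state of the oscillator with annihilation operator $h_{\lambda,k}$ (because $h_{\lambda,k}\ket{0}_{S_1}\ldots\ket{0}_{S_k}=0$ and $[h_{\lambda,k},h_{\lambda,k}^\dagger]=\id$), so on this state $\langle h_{\lambda,k}^\dagger h_{\lambda,k}\rangle=n$, the variance of $h_{\lambda,k}^\dagger h_{\lambda,k}$ vanishes, and every photon-number-changing moment such as $\langle h_{\lambda,k}\rangle$, $\langle h_{\lambda,k}^2\rangle$ and $\langle h_{\lambda,k}^\dagger h_{\lambda,k}h_{\lambda,k}\rangle$ is zero. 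Substituting $b_H^{(k)}=-\sqrt{1-\lambda^k}\,h_{\lambda,k}+\lambda^{k/2}b$ from Lemma~\ref{bHeisenberg} into $\langle b^\dagger b\rangle_{\sigma_{\lambda,n,k}}=\langle (b_H^{(k)})^\dagger b_H^{(k)}\rangle_{\rho^{(k)}_{n,\lambda}\otimes\sigma_0}$ and into the analogous second moment, and factorising the expectation over the product $\rho^{(k)}_{n,\lambda}\otimes\sigma_0$, the leading weight $(1-\lambda^k)^2$ combines with $\langle b^\dagger b\rangle_{\sigma_{\lambda,n,k}}^2$ to leave exactly the vanishing variance of $h_{\lambda,k}^\dagger h_{\lambda,k}$, while the contributions carrying an odd power of $\lambda^{k/2}$ drop out since they contain a photon-number-changing $h$-moment. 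What survives is
\begin{equation*}
\langle b^\dagger b\rangle_{\sigma_{\lambda,n,k}}-n=-\lambda^k\bigl(n-\langle b^\dagger b\rangle_{\sigma_0}\bigr)\,,
\end{equation*}
together with a variance $V_{\sigma_{\lambda,n,k}}$ whose every surviving term carries a factor $\lambda^k$, the coefficients being expressible through $\langle b^\dagger b\rangle_{\sigma_0}$ and $\langle(b^\dagger b)^2\rangle_{\sigma_0}$. The hypothesis $\langle(b^\dagger b)^2\rangle_{\sigma_0}<\infty$ makes these coefficients finite, so Lemma~\ref{lemmavar} yields $\|\sigma_{\lambda,n,k}-\ketbra{n}\|_1\le\eta\,\lambda^{k/2}$ with $\eta$ depending only on $n$ and the first two photon-number moments of $\sigma_0$; letting $k\to\infty$ gives~\eqref{limit_to_fock}.

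With~\eqref{limit_to_fock} in hand the capacity statements follow by continuity. For~\eqref{achiev_quantum} I take $n_\lambda$ as in Theorem~\ref{diehard_th}, so that $Q(\Phi_{\lambda,\ketbrasub{n_\lambda}},1/2)>0$; since $\sigma_{\lambda,n_\lambda,k}\to\ketbra{n_\lambda}$ in trace norm and $\ketbra{n_\lambda}$ has finite mean photon number $n_\lambda$, Lemma~\ref{Stability} gives $Q(\Phi_{\lambda,\sigma_{\lambda,n_\lambda,k}},1/2)\to Q(\Phi_{\lambda,\ketbrasub{n_\lambda}},1/2)>0$, so the left-hand quantity is strictly positive for all sufficiently large $k$, and~\eqref{achiev_quantum} follows using $Q(\Phi_{\lambda,\sigma_{\lambda,n_\lambda,k}})\ge Q(\Phi_{\lambda,\sigma_{\lambda,n_\lambda,k}},1/2)$. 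For~\eqref{limit_k_Cea}, assuming Conjecture~\ref{congI}, Theorem~\ref{congCap} provides $C_{\text{ea}}(\Phi_{\lambda,\ketbrasub{\bar n}},N)>C(\Id,N)$, $Q_{\text{ea}}(\Phi_{\lambda,\ketbrasub{\bar n}},N)>Q(\Id,N)/2$ and $Q(\Phi_{\lambda,\ketbrasub{\bar n}},N)>0$ for $\bar n$ large; applying the trace-norm convergence $\sigma_{\lambda,\bar n,k}\to\ketbra{\bar n}$ together with the continuity of $Q(\cdot\,,N)$ and $C_{\text{ea}}(\cdot\,,N)$ from Lemma~\ref{Stability} then yields the three limits in~\eqref{limit_k_Cea}. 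I expect the main obstacle to be the second-moment bookkeeping for~\eqref{distance_from_Fock} with a general, non-thermal $\sigma_0$: the saving grace is precisely the Fock-state structure of $\rho^{(k)}_{n,\lambda}$ in the $h_{\lambda,k}$ mode, which annihilates the dangerous $O(1)$ contributions and leaves only terms suppressed by $\lambda^{k/2}$.
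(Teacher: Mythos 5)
Your proposal is correct and follows essentially the same route as the paper's proof: the inductive conjugation identity for the interferometer (the paper phrases it as $V^{(S_1\ldots S_k)}_{k,\lambda}a_k\bigl(V^{(S_1\ldots S_k)}_{k,\lambda}\bigr)^\dagger=h_{\lambda,k}$, trivially equivalent to your adjoint version), the moment computation via Lemma~\ref{bHeisenberg} exploiting that $\rho^{(k)}_{\lambda,n}$ is the $n$-th Fock state of the $h_{\lambda,k}$ oscillator, the application of Lemma~\ref{lemmavar}, and the final continuity argument via Lemma~\ref{Stability} combined with Theorem~\ref{diehard_th} and (conditionally) Theorem~\ref{congCap}. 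No gaps.
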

Before proving Theorem~\ref{FockAchiev}, let us explain its meaning. First, notice that Theorem~\ref{FockAchiev} can be applied when the stationary environment state is a thermal state $\tau_\nu$, since
\bb
\langle(b^\dagger b)^2\rangle_{\tau_\nu}=\nu(2\nu+1)<\infty\,.
\ee
Suppose that Alice and Bob share an optical fibre {with} 
very low transmissivity i.e.~$0<\lambda\ll 1/2$. The stationary environment state associated with the quantum channel description of the optical fibre is a thermal state i.e.~$\sigma_0=\tau_\nu$ with $\nu\ge0$. Since for $\lambda\ll 1/2$ the thermal attenuator $\Phi_{\lambda,\tau_\nu}$ has zero quantum capacity and vanishing entanglement-assisted capacities, Alice can not reliably transmit qubits to Bob via the channel $\Phi_{\lambda,\tau_\nu}$, and they can not efficiently communicate even if entanglement is pre-shared between them.

However, Theorem~\ref{diehard_th} guarantees that there exists a suitable environmental Fock state $\ketbra{n}$ that makes the quantum capacity of the corresponding channel $\Phi_{\lambda,\ketbrasub{n}}$ strictly positive. Moreover, 
if Conjecture~\ref{congI} is valid, then Theorem~\ref{congCap} guarantees that the channel $\Phi_{\lambda,\ketbrasub{n}}$ allows for very efficient communication performance for $n$ sufficiently large. Therefore, a reasonable strategy is to apply the noise attenuation protocol in order to communicate via a channel that is approximately equal to $\Phi_{\lambda,\ketbrasub{n}}$, for a choice of $n$ that activates the discussed properties 
of Theorem~\ref{diehard_th} or Theorem~\ref{congCap}.

Now, we apply Theorem~\ref{FockAchiev}. If Alice sends $k$ trigger signals separated by a sufficiently short temporal interval and initialised in $\rho_{\lambda,n}^{(k)}$ (defined in~\eqref{trigger_signals_k}), then the environment transforms into a state $\sigma_{\lambda,n,k}$ which satisfies $$\sigma_{\lambda,n,k}\sim \ketbra{n}$$ for $k$ large. More precisely, the trace distance between $\sigma_{\lambda,n,k}$ and $\ketbra{n} $ goes exponentially to zero for $k\rightarrow\infty$. In other words, Alice is able to achieve the desired environmental Fock state if she sends a sufficiently large number of trigger signals.

The information-carrier signals, sent at the Steps 3, are affected by the quantum channel $\Phi_{\lambda,\sigma_{\lambda,n,k}}$. If $n$ is chosen to be such that $1/\lambda-1\le n\le 1/\lambda$, then $\Phi_{\lambda,\sigma_{\lambda,n,k}}$ has strictly positive quantum capacity for $k$ sufficiently large (i.e.\ the phenomenon of {D-HQCOM} is activated). 
Moreover, under the assumption that Conjecture~\ref{congI} is valid, for $n$ and $k$ sufficiently large the channel $\Phi_{\lambda,\sigma_{\lambda,n,k}}$ allows for an efficient entanglement-assisted communication and satisfies the same interesting inequalities as the environmental $n$-th Fock state of Theorem~\ref{congCap}:   
\bb
C_{\text{ea}}\left(\Phi_{\sigma_{\lambda,n,k}},N\right)&>C\left(\Id,N\right)\,,\\
Q_{\text{ea}}\left(\Phi_{\lambda,\sigma_{\lambda,n,k}},N\right)&>Q\left(\Id,N\right)/2\,,\\
Q\left(\Phi_{\lambda,\sigma_{\lambda,n,k}},N\right)&>0\,.
\ee

Notice that Theorem~\ref{FockAchiev} provides an explicit procedure to construct the state $\ket{n,\lambda}_{S_1S_2\ldots S_k}$ of the $k$ trigger signals. Indeed,~\eqref{recipe} implies that it suffices to input $k-1$ vacuum states and the $n$-th Fock state into the interferometer depicted in Fig.~\ref{InterfK}.

Moreover, since the trigger $\ket{n,\lambda}_{S_1S_2\ldots S_k}$ state does not depend on $\sigma_0$, Alice is able to put (asymptotically in $k$) the environment in a Fock state even if she does not know the initial environment state $\sigma_0$.
\begin{figure}[t]
\centering
\includegraphics[scale=0.20]{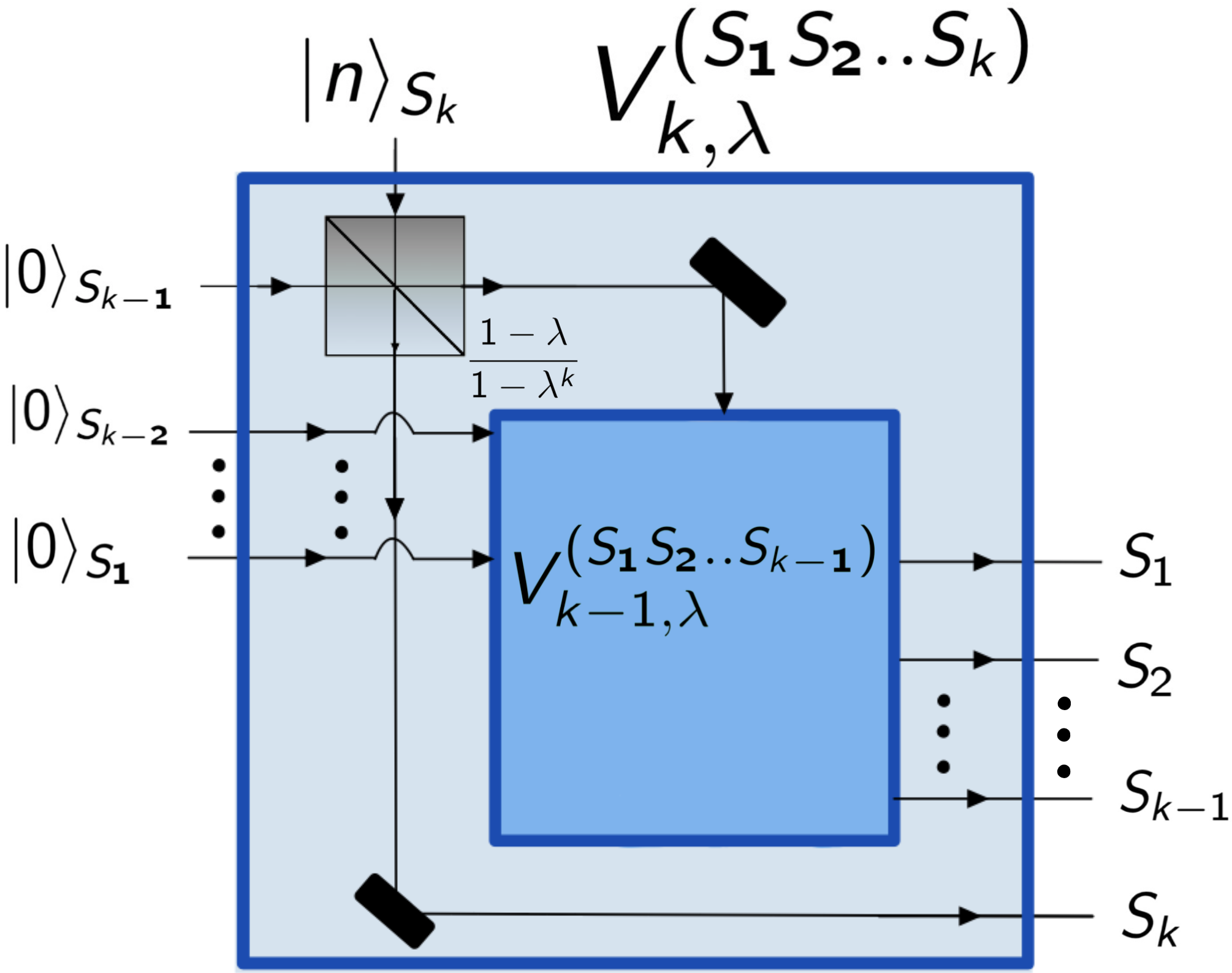}
\caption{A recursive schematic of the interferometer which implements the unitary operator $V^{(S_1S_2\ldots S_k)}_{k,\lambda}$ defined in~\eqref{V_unitary}. The base case $k=2$ is depicted in Fig.~\ref{Interf2}. If the $n$-th Fock state and $k-1$ vacuum states are input into this interferometer, the $k$ output signals are set in $\ket{n,\lambda}_{S_1S_2\ldots S_k}$. These $k$ signals can be sent through the channel to alter the environment into a state which is as close to the environmental $n$-th Fock state as desired if $k$ is large, as guaranteed by Theorem~\ref{FockAchiev}.}
\label{InterfK}
\end{figure}
\begin{figure}[t]
\centering
\includegraphics[scale=0.20]{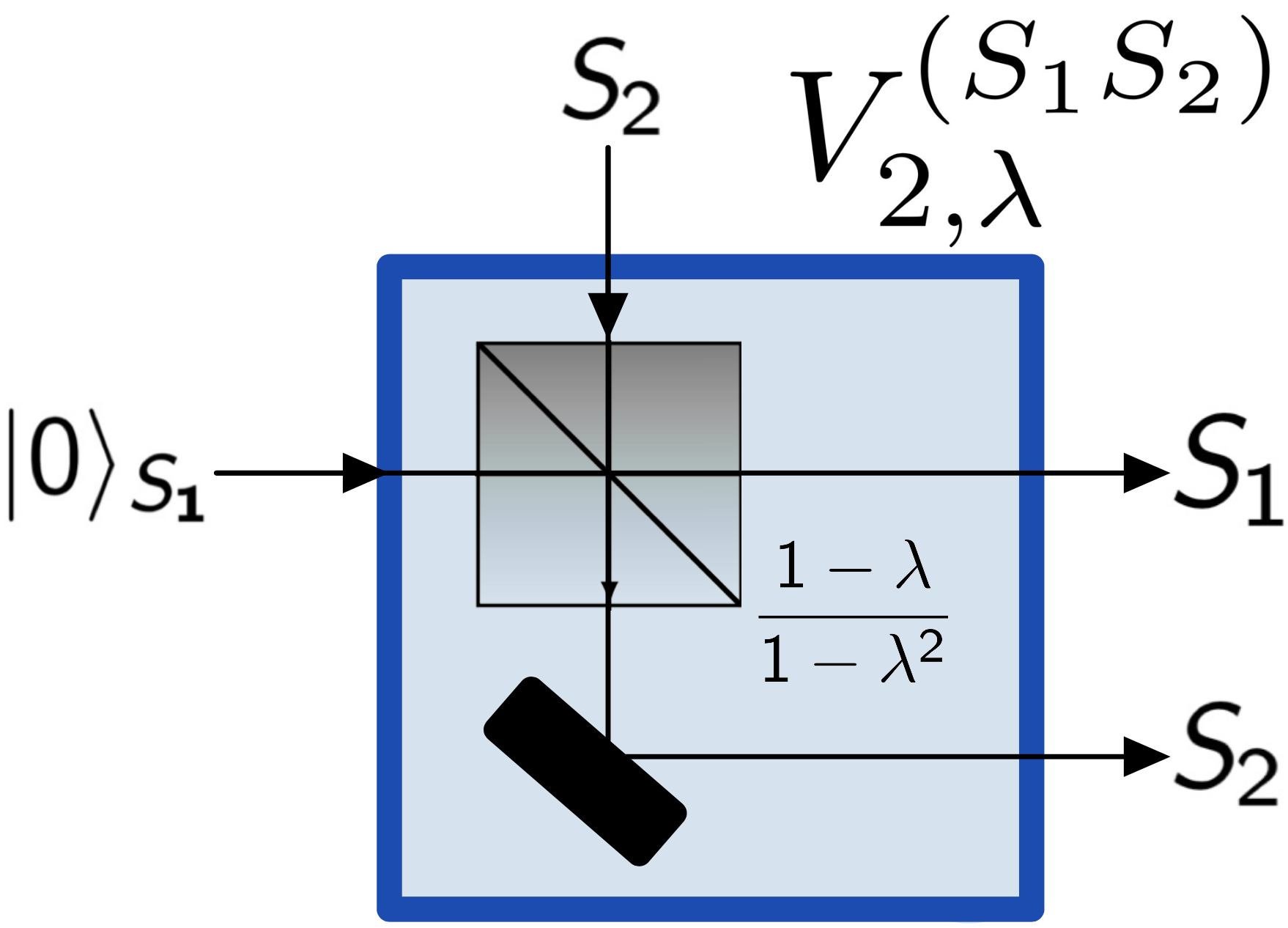}
\caption{A schematic of the interferometer which implements the unitary operator $V^{(S_1S_2)}_{2,\lambda}$ defined in~\eqref{V_unitary}. If the $n_\lambda$-th Fock state with $1/\lambda-1\le n_\lambda\le 1/\lambda$ and a vacuum state are input into this interferometer, the two output signals are set in $\ket{n_\lambda,\lambda}_{S_1S_2}$. As guaranteed by Theorem~\ref{theorem_lambda_small}, if Alice sends these two output signals into the channel, the environment transforms into a state $\sigma_\lambda$ such that for $\lambda>0$ sufficiently small the corresponding general attenuator $\Phi_{\lambda,\sigma_\lambda}$ has quantum capacity bounded away from $0$.}
\label{Interf2}
\end{figure}

We are now ready to prove Theorem~\ref{FockAchiev}.
\begin{proof}[Proof of Theorem~\ref{FockAchiev}]
Notice that~\eqref{case_k1} is a consequence of~\eqref{statenl} and of the fact that $h_{\lambda,1}=a_1$.
Now, let us show that the state $\ket{n,\lambda}_{S_1\ldots S_k}$ can be rewritten as in~\eqref{recipe} for $k\ge2$. For induction it can be shown that
\begin{equation}\label{hvav}
h_{\lambda,k}=V^{(S_1S_2\ldots S_k)}_{k,\lambda}a_k\left({V^{(S_1S_2\ldots S_k)}_{k,\lambda}}\right)^\dagger\,.
\end{equation}
It holds that
\begin{equation}\label{trasf_a2}
    U_{\lambda}^{(S_{1}S_{2})}a_{2}\left({U_{\lambda}^{(S_{1}S_{2})}}\right)^\dagger=\sqrt{1-\lambda}a_{1}+\sqrt{\lambda}a_2\,,
\end{equation}
as guaranteed by~\eqref{trasfheisb2} in {the} Appendix.
Consequently, one obtains that
\bb
&V^{(S_1S_2)}_{2,\lambda}a_2\left({V^{(S_1S_2)}_{2,\lambda}}\right)^\dagger=U^{(S_{1}S_2)}_{\frac{1-\lambda}{1-\lambda^2}}a_2\left({U^{(S_{1}S_2)}_{\frac{1-\lambda}{1-\lambda^2}}}\right)^\dagger  \\&=\sqrt{\frac{1-\lambda}{1-\lambda^2}}\left(\sqrt{\lambda}a_1+a_2\right)=h_{\lambda,2}\text{ .}
\ee
Hence,~\eqref{hvav} holds for $k=2$. Assuming~\eqref{hvav} to be true for $k$, we have to prove it for $k+1$. By using~\eqref{trasf_a2} again we obtain
\bb
&V^{(S_1S_2\ldots S_kS_{k+1})}_{k+1,\lambda}a_{k+1}\left({V^{(S_1S_2\ldots S_kS_{k+1})}_{k+1,\lambda}}\right)^\dagger\\ &=V^{(S_1S_2\ldots S_k)}_{k,\lambda}U^{(S_{k}S_{k+1})}_{\frac{1-\lambda}{1-\lambda^{k+1}}}a_{k+1}\left({U^{(S_{k}S_{k+1})}_{\frac{1-\lambda}{1-\lambda^{k+1}}}}\right)^\dagger\left({V^{(S_1S_2\ldots S_k)}_{k,\lambda}}\right)^\dagger \\&=V^{(S_1S_2\ldots S_k)}_{k,\lambda}\left(\sqrt{1-\frac{1-\lambda}{1-\lambda^{k+1}}}a_{k}\right.  \\
  &\left.\quad+\sqrt{\frac{1-\lambda}{1-\lambda^{k+1}}}a_{k+1}\right)\left({V^{(S_1S_2\ldots S_k)}_{k,\lambda}}\right)^\dagger \\&=\sqrt{\frac{1-\lambda}{1-\lambda^{k+1}}}\left(\sqrt{\lambda}\sqrt{\frac{1-\lambda^k}{1-\lambda}}h_{\lambda,k}+a_{k+1}\right) \\&= h_{\lambda,k+1}\text{ .}
\ee
Hence, by using that 
$$\left({V^{(S_1S_2\ldots S_k)}_{k,\lambda}}\right)^\dagger\ket{0}_{S_1}\ket{0}_{S_2}\ldots\ket{0}_{S_k}=\ket{0}_{S_1}\ket{0}_{S_2}\ldots\ket{0}_{S_k}\text{ ,}$$
\eqref{hvav} implies
\bb
&\ket{n,\lambda}_{S_1S_2\ldots S_k}=\frac{{\left(h^\dagger_{\lambda,k}\right)}^n}{\sqrt{n!}}\ket{0}_{S_1}\ket{0}_{S_2}\ldots\ket{0}_{S_k}\\&=V^{(S_1S_2\ldots S_k)}_{k,\lambda}\frac{{\left(a_k^\dagger\right)}^n}{\sqrt{n!}}{V^{(S_1S_2\ldots S_k)}_{k,\lambda}}^\dagger\ket{0}_{S_1}\ket{0}_{S_2}\ldots\ket{0}_{S_k}\\&=V^{(S_1S_2\ldots S_k)}_{k,\lambda}\ket{0}_{S_1}\ket{0}_{S_2}\ldots\ket{0}_{S_{k-1}}\ket{n}_{S_k}\,.
\ee
Hence,~\eqref{recipe} is proved. Now, let us show~\eqref{distance_from_Fock}.
Since $\rho_{\lambda,n}^{(k)}= \ketbra{n,\lambda}_{S_1\ldots S_k}$ and since $\ket{n,\lambda}_{S_1\ldots S_k}$ is the $n$-th excited state of the quantum harmonic oscillator defined by the annihilation operator $h_{\lambda,k}$, then the following identities hold:
\bb
    \left\langle \left(h_{\lambda,k}\right)^\dagger h_{\lambda,k}\right\rangle_{\rho^{(k)}_{\lambda,n}}&=n\,,\\
    \left\langle \left(\left(h_{\lambda,k}\right)^\dagger h_{\lambda,k}\right)^2\right\rangle_{\rho^{(k)}_{\lambda,n}}&=n^2\,,\\
    \left\langle  h_{\lambda,k}\right\rangle_{\rho^{(k)}_{\lambda,n}}&=0\,,\\
    \left\langle  {h_{\lambda,k}}^2\left(h_{\lambda,k}\right)^\dagger\right\rangle_{\rho^{(k)}_{\lambda,n}}&=0\,.
\ee
By using these identities and Lemma~\ref{bHeisenberg}, one can obtain that
\bb
\left\langle b^\dagger b\right\rangle_{\sigma_{\lambda,n,k}}&=\left\langle \left(b_H^{(k)}\right)^\dagger b_H^{(k)}\right\rangle_{\rho^{(k)}_{\lambda,n}\otimes\sigma_0}
 \\&=(1-\lambda^k)n+\lambda^k\langle b^\dagger b\rangle_{\sigma_0}\text{}\label{bb}
\ee
and
\bb
&\left\langle \left(b^\dagger b\right)^2\right\rangle_{\sigma_{\lambda,n,k}}=\left\langle \left(\left(b_H^{(k)}\right)^\dagger b_H^{(k)}\right)^2\right\rangle_{\rho^{(k)}_{\lambda,n}\otimes\sigma_0}
 \\&=(1-\lambda^k)^2n^2+4n\lambda^k(1-\lambda^k)\langle b^\dagger b\rangle_{\sigma_0}+\lambda^{2k}\left\langle (b^\dagger b)^2\right\rangle_{\sigma_0}+ \\&\quad+\lambda^k(1-\lambda^k)n+\lambda^k(1-\lambda^k)\langle b^\dagger b\rangle_{\sigma_0}\text{ .}\label{bb2}
\ee
Hence, from~\eqref{bb} and~\eqref{bb2}, the photon number variance of $\sigma_{\lambda,n,k}$ can be evaluated as
\bb
V_{\sigma_{\lambda,n,k}}&= 2n\lambda^k(1-\lambda^k)\langle b^\dagger b\rangle_{\sigma_0} \\&\quad+\lambda^{2k}\left(\left\langle (b^\dagger b)^2\right\rangle_{\sigma_0}-\langle b^\dagger b\rangle_{\sigma_0}^2\right) \\&\quad+\lambda^k(1-\lambda^k)n+\lambda^k(1-\lambda^k)\langle b^\dagger b\rangle_{\sigma_0}\,.\label{variances22}
\ee
By applying Lemma~\ref{lemmavar}, one obtains
\begin{align}
&\frac{1}{4}\left\|\sigma_{\lambda,n,k}-\ketbra{n} \right\|_1^2\le V_{\sigma_{\lambda,n,k}}+\left(\langle b^\dagger b\rangle_{\sigma_{\lambda,n,k}} -n\right)^2\label{inreqcop}\\&= \left[n^2-(4n+1)\langle b^\dagger b\rangle_{\sigma_0}+\langle (b^\dagger b)^2\rangle_{\sigma_0}-n\right]\lambda^{2k} \nonumber\\&\quad+\left[(2n+1)\langle b^\dagger b\rangle_{\sigma_0}+n\right]\lambda^{k}\label{distance_from_Fock2}\\&\le  \left[n^2+\left\langle (b^\dagger b)^2\right\rangle_{\sigma_0}+(2n+1)\langle b^\dagger b\rangle_{\sigma_0}+n\right]\lambda^{k}\text{ .}\label{concpr2}
\end{align}
Hence, by setting
\begin{equation}
\eta\coloneqq 2\sqrt{n^2+\langle (b^\dagger b)^2\rangle_{\sigma_0}+(2n+1)\langle b^\dagger b\rangle_{\sigma_0}+n}\,,
\end{equation}
it holds that
\begin{equation}\label{distrasp}
\left\|\sigma_{\lambda,n,k}-\ketbra{n} \right\|_1\le \eta\lambda^{k/2}\,.
\end{equation}
Since by hypothesis $\langle (b^\dagger b)^2\rangle_{\sigma_0}< \infty$ (notice that this implies $\langle b^\dagger b\rangle_{\sigma_0}< \infty$), then $\eta$ is finite and, as a consequence,~\eqref{distance_from_Fock} is proved. In the limit $k\rightarrow\infty$ the right hand side of~\eqref{distance_from_Fock} vanishes and hence~\eqref{limit_to_fock} holds.
Let $\lambda\in(0,1/2)$ and $n_\lambda\in \N$ with $1/\lambda-1\le n_\lambda\le 1/\lambda$. Lemma~\ref{Stability} and Theorem~\ref{diehard_th} guarantee that
\begin{equation}
    \lim\limits_{k\rightarrow\infty}Q\left(\Phi_{\lambda,\sigma_{\lambda,n_\lambda,k}},1/2\right)=Q\left(\Phi_{\lambda,\ketbrasub{n_\lambda} },1/2\right)>0\,.
\end{equation}
Since $Q\left(\Phi\right)\ge Q\left(\Phi,1/2\right)$ for any quantum channel $\Phi$, we deduce that $Q\left(\Phi_{\lambda,\sigma_{\lambda,n_\lambda,k}}\right)>0$ for $k$ sufficiently large.

In addition, under the assumption that Conjecture~\ref{congI} is valid, we can apply Theorem~\ref{congCap}.
The latter, together with~\eqref{limit_to_fock} and Lemma~\ref{Stability}, would imply the validity of the limits in~\eqref{limit_k_Cea}.
\end{proof}

By adopting the noise attenuation protocol with $k$ trigger signals, the best rate of qubits (i.e.~number of qubits reliably transferred per number of transmission line uses) is equal to the quantum capacity of the resulting channel (that obtained at step~3) divided by $k+1$. Hence, this rate is lower and lower as $k$ increases, and~\eqref{achiev_quantum} of Theorem~\ref{FockAchiev} may require a large $k$ to be valid.
Fortunately, Theorem~\ref{theorem_lambda_small} guarantees that sending just $k=2$ trigger signals is enough for $\lambda>0$ sufficiently small (which is the physically interesting case) to obtain a resulting channel with strictly positive quantum capacity. This is reasonable, since~\eqref{distance_from_Fock} implies that the lower $\lambda$ the faster the convergence in $k$ of $\{\sigma_{\lambda,n,k}\}_{k\in\mathbb{N}}$ to $\ketbra{n}$.  
\begin{thm}\label{theorem_lambda_small}
Let $\sigma_0$ be the stationary environment state and $\lambda\in(0,1]$ the transmissivity of the channel. Suppose $\langle (b^\dagger b)^2\rangle_{\sigma_0}<\infty$. Suppose further that Alice sends two trigger signals initialised in 
\begin{equation}
    \ket{n_\lambda,\lambda}_{S_1S_2}=U_{\frac{1}{1+\lambda}}^{(S_1S_2)}\ket{0}_{S_1}\ket{n_\lambda}_{S_2}\,,
\end{equation}
where $n_\lambda\in\mathbb{N}$ such that $1/\lambda-1\le n_\lambda\le 1/\lambda$. Then the environment transforms into a state $\sigma_\lambda$ such that
\begin{equation}\label{Fock_lambda}
	\left\|\sigma_{\lambda}-\ketbra{n_\lambda}\right\|_1\le \eta_0\,\lambda^{1/2}\,,
\end{equation}
where $\eta_0>0$ is a constant with respect to $\lambda$ (but it is related to $\langle b^\dagger b\rangle_{\sigma_0}$ and $\langle (b^\dagger b)^2\rangle_{\sigma_0}$). Moreover, for $\lambda>0$ sufficiently small it holds that 
\begin{equation}\label{die_hard_achiev}
    Q\left(\Phi_{\lambda,\sigma_{\lambda}}\right)\ge Q\left(\Phi_{\lambda,\sigma_{\lambda}},1/2\right)\ge c\,,
\end{equation}
where $c>0$ is a universal constant.
\end{thm}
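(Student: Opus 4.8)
The plan is to recognise this statement as the $k=2$ specialisation of Theorem~\ref{FockAchiev}, but with the essential twist that the target Fock level $n_\lambda\sim 1/\lambda$ itself grows as $\lambda\to 0$. Consequently the crude estimate $\|\sigma_{\lambda,n,k}-\ketbra{n}\|_1\le\eta\,\lambda^{k/2}$ of~\eqref{distance_from_Fock}, whose prefactor $\eta$ scales like $n_\lambda\sim 1/\lambda$, degenerates to a constant at $k=2$ and must be replaced by the sharper intermediate bound~\eqref{distance_from_Fock2}. First I would note that $\frac{1-\lambda}{1-\lambda^2}=\frac{1}{1+\lambda}$, so that the displayed trigger state $\ket{n_\lambda,\lambda}_{S_1S_2}=U^{(S_1S_2)}_{1/(1+\lambda)}\ket{0}_{S_1}\ket{n_\lambda}_{S_2}$ is exactly the $k=2$ instance of the recipe~\eqref{recipe}--\eqref{V_unitary}, whence $\sigma_\lambda=\sigma_{\lambda,n_\lambda,2}=\Delta_{\lambda,\sigma_0}^{(2)}(\rho^{(2)}_{\lambda,n_\lambda})$. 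I can then carry over the photon-number mean and variance computations~\eqref{bb}, \eqref{bb2}, \eqref{variances22} together with Lemma~\ref{lemmavar} to reach~\eqref{distance_from_Fock2} with $k=2$, i.e.\ $\tfrac14\|\sigma_\lambda-\ketbra{n_\lambda}\|_1^2\le A\lambda^{4}+B\lambda^{2}$ with $A=n_\lambda^2-(4n_\lambda+1)\langle b^\dagger b\rangle_{\sigma_0}+\langle(b^\dagger b)^2\rangle_{\sigma_0}-n_\lambda$ and $B=(2n_\lambda+1)\langle b^\dagger b\rangle_{\sigma_0}+n_\lambda$.

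Instead of bounding crudely as in~\eqref{concpr2}, I would substitute $n_\lambda\le 1/\lambda$ directly: then $A\lambda^4\lesssim n_\lambda^2\lambda^4+\langle(b^\dagger b)^2\rangle_{\sigma_0}\lambda^4=O(\lambda^2)$, while $B\lambda^2\lesssim(2n_\lambda+1)\langle b^\dagger b\rangle_{\sigma_0}\lambda^2+n_\lambda\lambda^2=O(\lambda)$. The right-hand side is therefore $O(\lambda)$ with a constant $\eta_0$ depending only on $\langle b^\dagger b\rangle_{\sigma_0}$ and $\langle(b^\dagger b)^2\rangle_{\sigma_0}$, which yields~\eqref{Fock_lambda}, namely $\|\sigma_\lambda-\ketbra{n_\lambda}\|_1\le\eta_0\lambda^{1/2}$.

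For~\eqref{die_hard_achiev} I would combine this proximity with the die-hard bound and a \emph{quantitative} continuity estimate. Fix the value $\bar\varepsilon$ of Theorem~\ref{diehard_th}, so that $Q(\Phi_{\lambda,\ketbra{n_\lambda}},1/2)>c(\bar\varepsilon)\ge 5.133\times10^{-6}$ for all $\lambda\in(0,1/2-\bar\varepsilon)$ with $1/\lambda-1\le n_\lambda\le1/\lambda$. Because both $\sigma_\lambda$ and the target $\ketbra{n_\lambda}$ vary with $\lambda$, the qualitative Lemma~\ref{Stability} does not suffice, and I would instead invoke Lemma~\ref{continuityBound} with $\Phi_1=\Phi_{\lambda,\sigma_\lambda}$, $\Phi_2=\Phi_{\lambda,\ketbra{n_\lambda}}$ and $N=1/2$. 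The estimate~\eqref{diamondtrace} gives $\tfrac12\|\Phi_1-\Phi_2\|_{\diamond N}\le\tfrac12\|\sigma_\lambda-\ketbra{n_\lambda}\|_1\le\tfrac{\eta_0}{2}\lambda^{1/2}\eqqcolon\varepsilon$, while the energy-limited bound~\eqref{engenatt4} applied to both channels---whose environment mean photon numbers are $n_\lambda$ and $(1-\lambda^2)n_\lambda+\lambda^2\langle b^\dagger b\rangle_{\sigma_0}$, both $O(1/\lambda)$---furnishes common constants $\alpha=O(1)$ and $N_0=O(1/\lambda)$. I would then show the continuity gap vanishes as $\lambda\to0^+$: since $g(x)\approx\log_2 x$ grows only logarithmically, the first term $56\sqrt{\varepsilon}\,g\!\left(4\frac{\alpha N+N_0}{\sqrt{\varepsilon}}\right)=O(\lambda^{1/4}\log(1/\lambda))$ and the second term $6g(4\sqrt{\varepsilon})=O(g(\lambda^{1/4}))$ both tend to $0$. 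Hence $Q(\Phi_{\lambda,\sigma_\lambda},1/2)\ge c(\bar\varepsilon)-o(1)\ge c(\bar\varepsilon)/2\eqqcolon c>0$ for $\lambda$ small, and $Q(\Phi_{\lambda,\sigma_\lambda})\ge Q(\Phi_{\lambda,\sigma_\lambda},1/2)\ge c$ follows from monotonicity of the capacity under the energy constraint.

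The delicate step---and the one I would treat most carefully---is this last capacity bound, precisely because the energy budget $N_0\sim\langle b^\dagger b\rangle_{\sigma_\lambda}\sim 1/\lambda$ blows up exactly as the target becomes a high Fock level, whereas the diamond-norm distance $\varepsilon\sim\lambda^{1/2}$ shrinks only polynomially. A naive appeal to $\|\sigma_\lambda-\ketbra{n_\lambda}\|_1\to0$ plus per-$\lambda$ continuity would be insufficient; what rescues the argument is the particular form of the bound in Lemma~\ref{continuityBound}, in which the diverging energy enters only inside the slowly growing $g$, so that the product $\sqrt{\varepsilon}\,g(\,\cdot/\sqrt{\varepsilon})$ still goes to zero. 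Keeping track of this competition between the growing energy and the shrinking distance is the crux; the photon-number moment computations of the first part are routine by comparison.
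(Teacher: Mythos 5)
Your proposal is correct and follows essentially the same route as the paper: specialise the moment computations behind~\eqref{distance_from_Fock2} to $k=2$, substitute $n_\lambda\le 1/\lambda$ to get the $O(\sqrt{\lambda})$ trace-norm bound, and then combine~\eqref{diamondtrace} with the quantitative continuity bound of Lemma~\ref{continuityBound} (rather than the merely qualitative Lemma~\ref{Stability}) and Theorem~\ref{diehard_th} to conclude. Your identification of the crux --- the competition between the diverging energy budget $N_0\sim 1/\lambda$ and the shrinking diamond distance $\varepsilon\sim\sqrt{\lambda}$, resolved by the logarithmic growth of $g$ --- is exactly the content of the paper's estimate~\eqref{distquantum3}.
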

\begin{proof}
Notice that we can apply the formulae obtained in the proof of Theorem~\ref{FockAchiev} with $\sigma_{\lambda,n_\lambda,2}= \sigma_\lambda$. In particular,~\eqref{distance_from_Fock2} implies that
\bb
\frac{1}{4}\left\|\sigma_{\lambda}-\ketbra{n_\lambda}\right\|_1^2 &\le \lambda^4 n_\lambda^2+\lambda^4 \langle (b^\dagger b)^2\rangle_{\sigma_0}\\
&\quad +\lambda^2(2n_\lambda+1)\langle b^\dagger b\rangle_{\sigma_0}+\lambda^2n_\lambda\,.
\ee
By using $n_\lambda\le 1/\lambda$ and $\lambda\le1$, one obtains
\begin{equation}\label{tracenorm2}
\frac{1}{2}\left\|\sigma_\lambda-\ketbra{n_\lambda}\right\|_1\le k_0 \sqrt{\lambda}\,,
\end{equation}
where
\begin{equation}
k_0\coloneqq \left[2+3\langle b^\dagger b\rangle_{\sigma_0}+\langle(b^\dagger b)^2\rangle_{\sigma_0}\right]^{1/2}\,,
\end{equation}
which is finite since $\langle (b^\dagger b)^2\rangle_{\sigma_0}< \infty$. This concludes the proof of~\eqref{Fock_lambda}.

In order to prove~\eqref{die_hard_achiev}, we want to exploit the continuity bound with respect to the energy-constrained diamond norm of Lemma~\ref{continuityBound}. 
First, we need to find $\alpha(\lambda)$ and $N_0(\lambda)$ such that for any state $\rho$ it holds that
\bb
\Tr\left[a^\dagger a\, \Phi_{\lambda,\sigma_\lambda}(\rho)\right]&\le \alpha(\lambda) \Tr\left[a^\dagger a\,\rho\right]+N_0(\lambda)\,,\\
\Tr\left[a^\dagger a\, \Phi_{\lambda,\ketbrasub{n_\lambda}}(\rho)\right]&\le \alpha(\lambda) \Tr\left[a^\dagger a\,\rho\right]+N_0(\lambda)\,.
\ee
Notice that 
\begin{equation}
\langle b^\dagger b\rangle_{\ketbrasub{n_\lambda}}=n_\lambda \le 1/\lambda\,,
\end{equation}
and that for $\lambda>0$ sufficiently small it holds that
\bb
\langle b^\dagger b \rangle_{\sigma_\lambda}&=(1-\lambda^2)n_\lambda+\lambda^2\langle b^\dagger b \rangle_{\sigma_0} \\&=  n_\lambda-\lambda^2\left(n_\lambda-\langle b^\dagger b \rangle_{\sigma_0}\right)\le n_\lambda\le 1/\lambda\,,
\ee
where we used~\eqref{bb}. As a consequence,~\eqref{engenatt} guarantees that for $\lambda>0$ sufficiently small we can take 
\bb
    \alpha(\lambda)&\equiv 1\,,\\
    N_0(\lambda)&\equiv \frac{1}{\lambda}+2\,.
\ee
Second, we need to find an upper bound on the energy-constrained diamond distance between $\Phi_{\lambda,\ketbrasub{n_\lambda}}$ and $\Phi_{\lambda,\sigma_\lambda}$. By using~\eqref{diamondtrace} and~\eqref{tracenorm2}, we obtain
\begin{equation}
\frac{1}{2}\left\|\Phi_{\lambda,\sigma_{\lambda}}-\Phi_{\lambda,\ketbrasub{n_\lambda}}\right\|_{\diamond N}\le k_0\sqrt{\lambda}\,,
\end{equation}
for all $N>0$.
By setting $\varepsilon(\lambda)\coloneqq k_0\sqrt{\lambda}$, we apply Lemma~\ref{continuityBound} to obtain
\bb
&\left|Q\left(\Phi_{\lambda,\sigma_{\lambda}},N\right)-Q\left(\Phi_{\lambda,\ketbrasub{n_\lambda}},N\right)\right| \\&\le 56\sqrt{\varepsilon(\lambda)}\, g\left(4\frac{\alpha(\lambda) N + N_0(\lambda)}{\sqrt{\varepsilon(\lambda)}}\right)+6g\left(4\sqrt{\varepsilon(\lambda)}\right) \\&=56\sqrt{k_0}\lambda^{1/4}\text{ } g\left(4\frac{1+\lambda(N + 2)}{\sqrt{k_0}\lambda^{5/4}}\right)+6g\left(4\sqrt{k_0}\lambda^{1/4}\right)\,, \label{distquantum3}
\ee
for all $N>0$ and $\lambda>0$ sufficiently small.
By using that
\bb
\lim\limits_{x\rightarrow0^+}g(x)&=0\,,\\
\lim\limits_{x\rightarrow0+}x\log_2 x&=0\,,\\
g(x)&=\log_2(ex)+o(1)\quad\text{($x\rightarrow\infty$)}\,,
\ee
it follows that for all $N>0$ it holds that
\begin{equation}\label{lim_lambda_Q}
    \lim\limits_{\lambda\rightarrow 0^+}\left|Q\left(\Phi_{\lambda,\sigma_{\lambda}},N\right)-Q\left(\Phi_{\lambda,\ketbrasub{n_\lambda}},N\right)\right|=0\,.
\end{equation}
In addition, Theorem~\ref{diehard_th} establishes that there exists $\bar{\epsilon}>0$ and $\bar{c}>0$ such that for all $\lambda\in(0,1/2-\bar{\varepsilon})$ it holds that $Q\left(\Phi_{\lambda,\ketbrasub{n_\lambda}},1/2\right)\ge \bar{c}$. As a consequence,~\eqref{lim_lambda_Q} guarantees that for $\lambda>0$ sufficiently small it holds that
\begin{equation}
    Q\left(\Phi_{\lambda,\sigma_{\lambda}},1/2\right)\ge Q\left(\Phi_{\lambda,\ketbrasub{n_\lambda}},1/2\right)-\frac{\bar{c}}{2}\ge \frac{\bar{c}}{2}\,.
\end{equation}
We conclude that the inequality
\begin{equation}
Q\left(\Phi_{\lambda,\sigma_{\lambda}}\right)\ge Q\left(\Phi_{\lambda,\sigma_{\lambda}},1/2\right)\ge \bar{c}/2\,,
\end{equation}
holds for $\lambda>0$ sufficiently small.
\end{proof}
Theorem~\ref{theorem_lambda_small} shows that qubits can be transmitted reliably across a very lossy optical fibre. This is accomplished if one follows the noise attenuation protocol, by using the two signals produced at the output of the interferometer depicted in Fig.~\ref{Interf2} as trigger signals. The key idea of the proof of Theorem~\ref{theorem_lambda_small} was that $\sigma_\lambda$ becomes closer and closer to the $n_\lambda$-th Fock state as $\lambda$ approaches $0$~\footnote{By looking at the signal $S_2$ path in Fig.~\ref{k2approx}, one can be tempted to think that the approximation  $\sigma_\lambda\stackrel{\lambda\rightarrow0^+}{\simeq}\ketbra{n_\lambda}$ is trivial. Indeed, in the limit $\lambda\rightarrow0^+$, the {BS} with transmissivity $\frac{1}{1+\lambda}$ is totally transmitting and the {BS}s with transmissivity $\lambda$ are totally reflective. Hence, in this limit, if we send a state $\bar{\rho}$ (independent of $\lambda$) in the $S_2$ arm of the {BS} with transmissivity $\frac{1}{1+\lambda}$ it is trivial that the final environment state satisfies $\sigma_\lambda\stackrel{\lambda\rightarrow0^+}{\simeq}\bar{\rho}$. However, since $\ket{n_\lambda}_{S_2}$ depends on $\lambda$, the fact that $\sigma_\lambda\stackrel{\lambda\rightarrow0^+}{\simeq}\ketbra{n_\lambda}$ is not trivial.}. In {the} Appendix~\ref{Appendix_one_trigger} we explain why sending just one trigger signal is not enough to obtain an environment state which is close to the $n_\lambda$-th Fock state. Moreover, notice that Theorem~\ref{theorem_lambda_small} holds in the interesting limit $\lambda\rightarrow 0^+$, where the quantum capacity of the thermal attenuator is zero and its energy-constrained entanglement-assisted capacities tend to zero. A schematic of the interaction between the two trigger signals and the environment, as analysed in Theorem~\ref{theorem_lambda_small}, is shown in Fig.~\ref{k2approx}. 
\begin{figure}[t]
	\centering
	\includegraphics[width=0.9\linewidth]{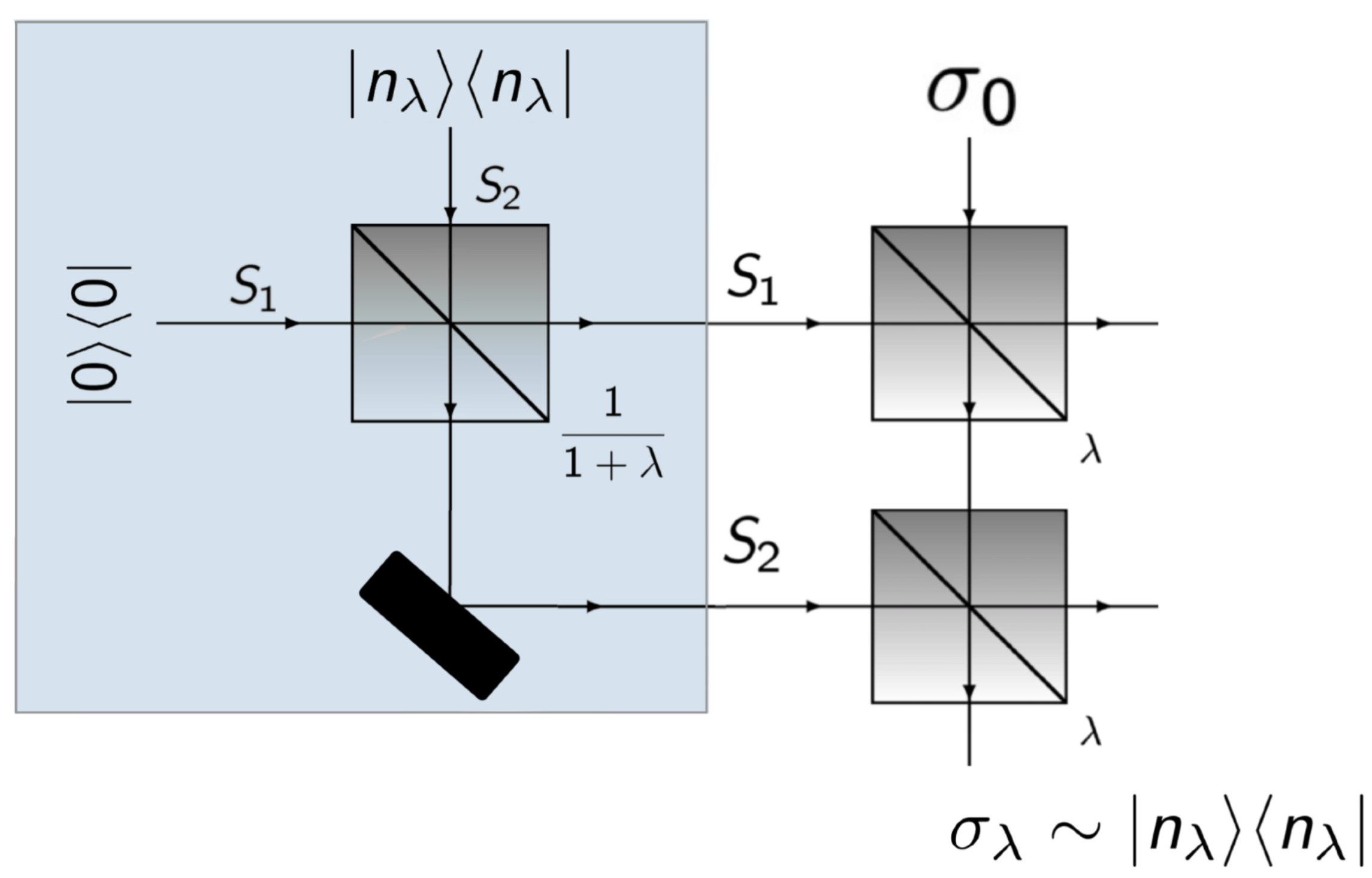}
	\caption{A schematic of what described in Theorem~\ref{theorem_lambda_small}. The blue box builds the two-signals trigger state $\ket{n_\lambda,\lambda}_{S_1S_2}$ which transforms the environment state in a state $\sigma_\lambda$ closer and closer to $\ketbra{n_\lambda}$ as $\lambda$ decreases. The approximation $\sigma_\lambda\sim \ketbra{n_\lambda}$ is valid if $\lambda>0$ is sufficiently small.}
	\label{k2approx}
\end{figure}

{The results of this section are based on the approximate expression in~\eqref{achievable} of the environment state $\sigma$ at the beginning of step~3 of the noise attenuation protocol. This approximation is justified when the time interval $\delta t$ between subsequent signals is sufficiently short so that we can use $\xi_{\delta t}\simeq \Id$. The exact expression of $\sigma$ is given by~\eqref{expl_sigma}. The expressions in~\eqref{achievable} and~\eqref{expl_sigma} coincide if $\xi_{\delta t}=\Id$.
In the forthcoming Theorem~\ref{teo_new} we formalise this approximation and we show that it is consistent. More precisely, we show that if the time interval $\delta t$ is sufficiently short, any scheme for quantum communication derived with the approximate expression in~\eqref{achievable} will also work with exact expression in~\eqref{expl_sigma}.

Theorem~\ref{FockAchiev} and Theorem~\ref{theorem_lambda_small} establish that if Alice sends a suitable number $k$ of trigger signals initialised in the state in~\eqref{trigger_signals_k}, then at the beginning of step~3 the environment is in a state $\sigma$, given by the approximate expression~\eqref{achievable}, such that
\begin{equation}\label{boundQ_inexacT}
    Q(\Phi_{\lambda,\sigma})\ge Q(\Phi_{\lambda,\sigma},1/2)>0\,.
\end{equation}
In the forthcoming Theorem~\ref{teo_new} we provide a generalisation of this result where we take into account the thermalisation process $\xi_{\delta t}$ in the regime where $\delta t$ is sufficiently short. Indeed, Theorem~\ref{teo_new} guarantees that if Alice sends a suitable number $k$ of trigger signals initialised in the state in~\eqref{trigger_signals_k} and separated by a sufficiently short time interval $\delta t$, then at the beginning of step~3 the environment is in a state $\sigma'_{\delta t}$, given by the exact expression~\eqref{expl_sigma}, such that 
\begin{equation}\label{boundQ_exacT}
    Q(\Phi_{\lambda,\sigma'_{\delta t}})\ge Q(\Phi_{\lambda,\sigma'_{\delta t}},1/2)>0\,.
\end{equation}
We show this result under suitable hypotheses on $\xi_{\delta t}$. Namely, we require that for $\delta t$ sufficiently short $\xi_{\delta t}$ is close in energy-constrained diamond norm to the identity channel and it maps the subset of states with a finite mean photon number in itself.

\begin{thm}\label{teo_new}
Let $\sigma_0\in\mathfrak{S}(\HH_E)$ such that $\langle b^\dagger b\rangle_{\sigma_0}<\infty$. Let $\lambda\in(0,1)$, $k\in\mathbb{N}$. Let $\rho^{(k)}\in\mathfrak{S}\left(\HH_{S_1}\otimes\ldots\otimes\HH_{S_k}\right)$ such that: $$\left\langle \sum_{i=1}^k(a_i)^\dagger a_i\right\rangle_{\rho^{(k)}}<\infty\,,$$
$$Q(\Phi_{\lambda,\sigma},1/2)>0\,,$$
where
\begin{align}\label{sigma_expr}
	\sigma&\coloneqq \Tr_{S_1\ldots S_k}\left[  \mathcal{U}_{\lambda}^{(S_kE)} \circ\mathcal{U}_{\lambda}^{(S_{k-1}E)}\ldots \right. \nonumber \\
    &\left.\qquad\ldots \circ\mathcal{U}_{\lambda}^{(S_{1}E)} \left(\rho^{(k)}\otimes \sigma_0\right) \right]\,,
\end{align}
with $\mathcal{U}_{\lambda}^{(S_iE)}$ being a quantum channel defined by $\mathcal{U}_{\lambda}^{(S_iE)}(\cdot)=U_{\lambda}^{(S_iE)}(\cdot) \left(U_{\lambda}^{(S_iE)}\right)^\dagger$.
Let $\{\xi_{\delta t}\}_{\delta t\ge 0}$ a one-parameter family of quantum channels on $\mathfrak{S}(\HH_E)$ such that:
\begin{enumerate}[(1)]
    \item For all $N>0$ there exists a function $\varepsilon_N:\mathbb{R}\mapsto\mathbb{R}$, which is continuous in $\delta t=0$ and satisfies $\varepsilon_N(0)=0$, for which the energy-constrained diamond distance between $\xi_{\delta t}$ and $\Id$ satisfies
\begin{equation}\label{vareps}
\left\|\xi_{\delta t}-\Id\right\|_{\diamond N}\le \varepsilon_N(\delta t)\,.
\end{equation}
    \item For $\delta t$ sufficiently short it holds that $\langle b^\dagger b\rangle_{\xi_{\delta t}(\tilde{\sigma})}<\infty$ for all $\tilde{\sigma}\in\mathfrak{S}(\HH_E)$ such that $\langle b^\dagger b \rangle_{\tilde{\sigma}}<\infty$.
\end{enumerate}
Let us define $\sigma'_{\delta t}\in\mathfrak{S}(\HH_E)$ as
\begin{align}\label{sigma'}
	\sigma'_{\delta t} &\coloneqq \Tr_{S_1\ldots S_k}\left[\xi_{\delta t}\circ \mathcal{U}_{\lambda}^{(S_kE)}\circ \xi_{\delta t}\circ\mathcal{U}_{\lambda}^{(S_{k-1}E)}\circ\ldots \right. \nonumber \\
    &\left.\qquad\ldots \circ \xi_{\delta t}\circ\mathcal{U}_{\lambda}^{(S_{1}E)} \left(\rho^{(k)}\otimes \sigma_0\right) \right]
\end{align}
For $\delta t$ sufficiently short it holds that
\begin{equation}\label{tesi_teo_nuovo}
    Q(\Phi_{\lambda,\sigma_{\delta t}})\ge Q(\Phi_{\lambda,\sigma_{\delta t}},1/2)> \frac{Q(\Phi_{\lambda,\sigma},1/2)}{2} >0\,.
\end{equation}
\end{thm}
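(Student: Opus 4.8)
The plan is to reduce the statement to the continuity of the capacity in the environment state. First I would record that $\sigma$ in \eqref{sigma_expr} has finite mean photon number: each beam-splitter unitary $U_\lambda^{(S_iE)}$ conserves the total photon number $\sum_i a_i^\dagger a_i+b^\dagger b$, so on the state obtained from $\rho^{(k)}\otimes\sigma_0$ every environment marginal has energy at most the conserved total $N^\ast\coloneqq\langle \sum_i a_i^\dagger a_i\rangle_{\rho^{(k)}}+\langle b^\dagger b\rangle_{\sigma_0}<\infty$ (this can also be read off Lemma~\ref{bHeisenberg}); in particular $\langle b^\dagger b\rangle_\sigma\le N^\ast$. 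Property~(2), applied $k$ times, moreover guarantees $\langle b^\dagger b\rangle_{\sigma'_{\delta t}}<\infty$ for $\delta t$ short, so that $\{\sigma'_{\delta t}\}$ lies in the finite-energy domain where Lemma~\ref{Stability} operates. Hence, once I establish $\|\sigma'_{\delta t}-\sigma\|_1\to 0$ as $\delta t\to 0^+$, Lemma~\ref{Stability} gives $\lim_{\delta t\to 0^+}Q(\Phi_{\lambda,\sigma'_{\delta t}},1/2)=Q(\Phi_{\lambda,\sigma},1/2)$; since the right-hand side is $>0$ by hypothesis, for all sufficiently short $\delta t$ we get $Q(\Phi_{\lambda,\sigma'_{\delta t}},1/2)>\tfrac12 Q(\Phi_{\lambda,\sigma},1/2)>0$, and the general bound $Q(\Phi_{\lambda,\sigma'_{\delta t}})\ge Q(\Phi_{\lambda,\sigma'_{\delta t}},1/2)$ (the unconstrained capacity dominates the constrained one) completes \eqref{tesi_teo_nuovo}.

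Everything therefore hinges on the trace-norm convergence $\sigma'_{\delta t}\to\sigma$, which I would obtain by a hybrid (telescoping) estimate between \eqref{sigma'} and \eqref{sigma_expr}. The two expressions differ only in that \eqref{sigma'} inserts a copy of $\xi_{\delta t}$ on $E$ after each of the $k$ channels $\mathcal{U}_\lambda^{(S_iE)}$. For $0\le j\le k$ I define the interpolating environment state $\sigma_{\delta t}^{(j)}$ obtained by replacing $\xi_{\delta t}$ with $\Id$ after the first $j$ beam-splitters while keeping $\xi_{\delta t}$ after the remaining ones, so that $\sigma_{\delta t}^{(0)}=\sigma'_{\delta t}$ and $\sigma_{\delta t}^{(k)}=\sigma$. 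The decisive bookkeeping point is to telescope in exactly this order: the state entering the $j$-th beam-splitter in both $\sigma_{\delta t}^{(j)}$ and $\sigma_{\delta t}^{(j-1)}$ is produced by beam-splitters alone, since every preceding $\xi_{\delta t}$ has been removed. It is therefore the ideal, $\delta t$-independent intermediate state $\omega_j$, whose environment marginal has energy at most $N^\ast$ by the same conservation law as above.

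Now $\sigma_{\delta t}^{(j)}$ and $\sigma_{\delta t}^{(j-1)}$ differ only by whether $\Id$ or $\xi_{\delta t}$ acts on $E$ immediately after the $j$-th beam-splitter, and everything that follows — the trailing beam-splitters, the trailing copies of $\xi_{\delta t}$, and the final partial trace $\Tr_{S_1\ldots S_k}$ — is a single CPTP map $\mathcal{R}_j$ common to both. As $\mathcal{R}_j$ is trace-norm contractive, $\|\sigma_{\delta t}^{(j-1)}-\sigma_{\delta t}^{(j)}\|_1\le \|((\xi_{\delta t}-\Id)\otimes I)(\omega_j)\|_1$, where the perturbed factor acts on the environment of the $\delta t$-independent state $\omega_j$. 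By the definition \eqref{ednorm} of the energy-constrained diamond norm, and because the environment marginal of $\omega_j$ has energy $\le N^\ast$, this is bounded by $\|\xi_{\delta t}-\Id\|_{\diamond N^\ast}\le\varepsilon_{N^\ast}(\delta t)$ via property~(1) and \eqref{vareps}. Summing the triangle inequality over $j=1,\ldots,k$ yields $\|\sigma'_{\delta t}-\sigma\|_1\le k\,\varepsilon_{N^\ast}(\delta t)$, which tends to $0$ as $\delta t\to 0^+$ because $\varepsilon_{N^\ast}$ is continuous at $0$ with $\varepsilon_{N^\ast}(0)=0$.

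The step I expect to require the most care is precisely the choice of telescoping order. Inserting the identities from the front rather than the back is what forces the single perturbed factor $(\xi_{\delta t}-\Id)$ to act on a $\delta t$-independent state whose energy is controlled by the fixed constant $N^\ast$. Had one instead peeled off the $\xi_{\delta t}$'s from an arbitrary position, the relevant intermediate state would itself depend on $\delta t$ through earlier applications of $\xi_{\delta t}$, and one would then need a uniform-in-$\delta t$ bound on the energy of states produced by $\xi_{\delta t}$, which does \emph{not} follow from properties~(1)--(2) alone, since closeness to $\Id$ in energy-constrained diamond norm does not control the unbounded number operator. In this scheme the role of property~(2) is confined to placing $\sigma'_{\delta t}$ in the finite-energy domain required by Lemma~\ref{Stability}, while property~(1) drives the telescoped error to zero; combining the two gives \eqref{tesi_teo_nuovo}.
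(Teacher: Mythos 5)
Your argument is correct, and it reaches the same endpoint as the paper (trace-norm convergence $\sigma'_{\delta t}\to\sigma$ followed by Lemma~\ref{Stability}), but the telescoping is organised in the opposite order, and this is a genuine and in fact advantageous difference. The paper's proof peels off the copies of $\xi_{\delta t}$ by induction on the \emph{inner} compositions: its key inequality bounds each step by $\bigl\|(\xi_{\delta t}-\Id)\bigl(\Tr_{S_1\ldots S_j}[\mathcal{U}_\lambda^{(S_jE)}\circ\mathcal{U}_{\lambda,j-1}(\delta t)(\rho^{(k)}\otimes\sigma_0)]\bigr)\bigr\|_1$, so the perturbed factor acts on states that still contain $j-1$ earlier applications of $\xi_{\delta t}$. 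This is what forces the paper's separate induction establishing finiteness of the intermediate energies $N_j$ and the constant $\bar N=\max_j N_j$ entering $k\,\varepsilon_{\bar N}(\delta t)$ --- a constant which, as you correctly observe, depends on $\delta t$ through those earlier $\xi_{\delta t}$'s; the paper then treats $\bar N$ as fixed when sending $\delta t\to 0$, which strictly speaking would require a uniform-in-$\delta t$ energy bound that hypotheses (1)--(2) do not supply. Your reversed interpolation makes the perturbed factor act only on the ideal, $\delta t$-independent states $\omega_j$, whose environment energy is controlled by the single conserved constant $N^\ast$, so you get the clean uniform bound $\|\sigma'_{\delta t}-\sigma\|_1\le k\,\varepsilon_{N^\ast}(\delta t)$ and relegate property (2) to its proper role of placing $\sigma'_{\delta t}$ in the finite-energy domain of Lemma~\ref{Stability}. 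Two cosmetic points: the operator $(\xi_{\delta t}-\Id)\otimes I$ in your step acts on the state \emph{after} the $j$-th beam splitter, i.e.\ on $\mathcal{U}_\lambda^{(S_jE)}(\omega_j)$ rather than on $\omega_j$ itself (photon-number conservation gives the same energy bound $N^\ast$, so nothing changes); and when invoking Lemma~\ref{Stability} you should, as the paper does, verify explicitly that $\langle b^\dagger b\rangle_{\sigma'_{\delta t}}<\infty$ by propagating property (2) through the $k$ alternating steps --- your one-line appeal to ``property (2) applied $k$ times'' is exactly the paper's first induction in compressed form.
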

 
\begin{proof}
Let us define $\mathcal{U}_{\lambda,0}(\delta t)\coloneqq \Id$ and
\begin{equation}
    \mathcal{U}_{\lambda,j}(\delta t)\coloneqq \xi_{\delta t}\circ \mathcal{U}_{\lambda}^{(S_jE)}\circ \xi_{\delta t}\circ\mathcal{U}_{\lambda}^{(S_{j-1}E)}\ldots \circ \xi_{\delta t}\circ\mathcal{U}_{\lambda}^{(S_{1}E)}\,
\end{equation}
for all $j\in\{1,2,\ldots,k\}$. By exploiting the hypothesis (2) on $\xi_{\delta t}$, we can fix $\delta t$ sufficiently short such that $\langle b^\dagger b\rangle_{\xi_{\delta t}(\tilde{\sigma})}<\infty$ for all $\tilde{\sigma}\in\mathfrak{S}(\HH_E)$ such that $\langle b^\dagger b \rangle_{\tilde{\sigma}}<\infty$.

We start by proving by induction that for all $j\in\{0,1,\ldots,k\}$ it holds that
\begin{small}
\bb\label{induction_ener}
    &N_j\coloneqq \Tr_{ES_1\ldots S_k}\left[\mathcal{U}_{\lambda,j}(\delta t) \left(\rho^{(k)}\otimes \sigma_0\right)\right.  \\
    &\left.\qquad \times \left(b^\dagger b+\sum_{i=1}^k(a_i)^\dagger a_i\right)\right]<\infty\,.
\ee
\end{small}
For $j=0$, the left hand side reduces to $\left\langle \sum_{i=1}^k(a_i)^\dagger a_i\right\rangle_{\rho^{(k)}}+\langle b^\dagger b\rangle_{\sigma_0}$, which is finite thanks to the hypotheses on $\rho^{(k)}$ and $\sigma_0$.

By assuming~\eqref{induction_ener} valid for $j-1$, we need to prove it for $j$. By using 
\begin{small}
\begin{equation}\label{recursive_U}
    \mathcal{U}_{\lambda,j}(\delta t)=\xi_{\delta t}\circ\mathcal{U}_{\lambda}^{(S_jE)}\circ \mathcal{U}_{\lambda,j-1}(\delta t)\,,
\end{equation}
\end{small}
in order to prove~\eqref{induction_ener} it suffices to prove that 
\begin{small}
\begin{equation}\label{first_term}
    \Tr_{ES_1\ldots S_k}\left[\xi_{\delta t}\circ\mathcal{U}_{\lambda}^{(S_jE)}\circ \mathcal{U}_{\lambda,j-1}(\delta t) \left(\rho^{(k)}\otimes \sigma_0\right) b^\dagger b \right]<\infty\,
\end{equation}
\end{small}
 and
\begin{small}
\bb\label{second_term}
    &\Tr_{ES_1\ldots S_k}\left[\xi_{\delta t}\circ\mathcal{U}_{\lambda}^{(S_jE)}\circ \mathcal{U}_{\lambda,j-1}(\delta t)\left(\rho^{(k)}\otimes \sigma_0\right) \right.  \\
    &\left.\qquad\times \left(\sum_{i=1}^k(a_i)^\dagger a_i\right)\right]<\infty\,.
\ee
\end{small}First, let us show~\eqref{first_term}. By exploiting the hypothesis (2) on $\xi_{\delta t}$,~\eqref{first_term} is valid if 
\begin{small}
\begin{equation}\label{eq_first_termine}
    \Tr_{ES_1\ldots S_k}\left[\mathcal{U}_{\lambda}^{(S_jE)}\circ \mathcal{U}_{\lambda,j-1}(\delta t) \left(\rho^{(k)}\otimes \sigma_0\right) b^\dagger b \right]<\infty\,.
\end{equation}
\end{small}
\eqref{eq_first_termine} is true since
\begin{small}
\begin{align*}
    & \Tr_{ES_1\ldots S_k}\left[\mathcal{U}_{\lambda}^{(S_jE)}\circ \mathcal{U}_{\lambda,j-1}(\delta t) \left(\rho^{(k)}\otimes \sigma_0\right) b^\dagger b\right]\\&\le  \Tr_{ES_1\ldots S_k}\left[\mathcal{U}_{\lambda}^{(S_jE)}\circ \mathcal{U}_{\lambda,j-1}(\delta t) \left(\rho^{(k)}\otimes \sigma_0\right)\right.\nonumber  \\
    &\left.\qquad\times \left(b^\dagger b+\sum_{i=1}^k(a_{i})^\dagger a_i\right) \right]\\&= \Tr_{ES_1\ldots S_k}\left[ \mathcal{U}_{\lambda,j-1}(\delta t) \left(\rho^{(k)}\otimes \sigma_0\right)\right.\nonumber  \\
    &\left.\qquad\times \left(b^\dagger b+\sum_{i=1}^k(a_{i})^\dagger a_i\right) \right]<\infty\,,
\end{align*}
\end{small}
where we used the fact that $\mathcal{U}_{\lambda}^{(S_jE)}$ preserves the mean of the total photon number operator $b^\dagger b+\sum_{i=1}^k(a_{i})^\dagger a_i$ (as one can show by exploiting Lemma~\ref{lemmatrasf} in the Appendix) and we used the inductive hypothesis.

Second, let us show~\eqref{second_term}. By exploiting the fact that $\xi_{\delta t}$ is trace-preserving, it follows that  
\begin{small}
\begin{align*}
    &\Tr_{ES_1\ldots S_k}\left[\xi_{\delta t}\circ\mathcal{U}_{\lambda}^{(S_jE)}\circ \mathcal{U}_{\lambda,j-1}(\delta t) \left(\rho^{(k)}\otimes \sigma_0\right)\right.\nonumber  \\
    &\left.\qquad\times  \left(\sum_{i=1}^k(a_i)^\dagger a_i\right)\right]\\&=\Tr_{ES_1\ldots S_k}\left[\mathcal{U}_{\lambda}^{(S_jE)}\circ \mathcal{U}_{\lambda,j-1}(\delta t) \left(\rho^{(k)}\otimes \sigma_0\right)\right.\nonumber  \\
    &\left.\qquad\times \left(\sum_{i=1}^k(a_i)^\dagger a_i\right)\right]\\&\le  \Tr_{ES_1\ldots S_k}\left[\mathcal{U}_{\lambda}^{(S_jE)}\circ \mathcal{U}_{\lambda,j-1}(\delta t) \left(\rho^{(k)}\otimes \sigma_0\right)\right.\nonumber  \\
    &\left.\qquad\times  \left(b^\dagger b+\sum_{i=1}^k(a_{i})^\dagger a_i\right) \right]\\&= \Tr_{ES_1\ldots S_k}\left[ \mathcal{U}_{\lambda,j-1}(\delta t) \left(\rho^{(k)}\otimes \sigma_0\right)\right.\nonumber  \\
    &\left.\qquad\times  \left(b^\dagger b+\sum_{i=1}^k(a_{i})^\dagger a_i\right) \right]<\infty\,.
\end{align*}
\end{small}Hence, we have proved~\eqref{induction_ener}. Now, let us show that 
\begin{equation}\label{diff_sigma}
    \|\sigma'_{\delta t}-\sigma\|_1\le k\,\varepsilon_{\bar{N}}(\delta t)\,,
\end{equation}
where 
\begin{equation}
    \bar{N}\coloneqq \max_{j\in\{0,1,\ldots,k\}}N_j\,
\end{equation}
and $N_j$ is defined in~\eqref{induction_ener}. Notice that~\eqref{induction_ener} implies that $\bar{N}$ is finite. By noting that
\begin{small}
\bb
    	\sigma&= \Tr_{S_1\ldots S_k}\left[  \mathcal{U}_{\lambda,k}(0) \left(\rho^{(k)}\otimes \sigma_0\right) \right]\,,\\
    	\sigma'_{\delta t} &= \Tr_{S_1\ldots S_k}\left[  \mathcal{U}_{\lambda,k}(\delta t) \left(\rho^{(k)}\otimes \sigma_0\right) \right]\,,
\ee
\end{small}
it follows that in order to prove~\eqref{diff_sigma} it suffices to prove by induction that for all $j\in\{0,1,\ldots,k\}$ it holds that
\begin{small}
\bb\label{induction_eq}
    &\left\|\Tr_{S_1\ldots S_j}\left[  \mathcal{U}_{\lambda,j}(\delta t) \left(\rho^{(k)}\otimes \sigma_0\right) \right]\right.  \\
    &\left.\qquad -\Tr_{S_1\ldots S_j}\left[  \mathcal{U}_{\lambda,j}(0) \left(\rho^{(k)}\otimes \sigma_0\right) \right]\right\|_1\le j\,\varepsilon_{\bar{N}}(\delta t)\,,
\ee
\end{small}
where $\Tr_{S_1\ldots S_j}$ is understood to be $\Id$ if $j=0$.
For $j=0$,~\eqref{induction_eq} is valid since both its left hand side and its right hand side are equal to $0$. By assuming~\eqref{induction_eq} valid for $j-1$, we need to prove it for $j$. By using~\eqref{recursive_U} and the triangular inequality of the trace norm, we have that
\begin{small}
\begin{align}
      &\left\|\Tr_{S_1\ldots S_j}\left[  \mathcal{U}_{\lambda,j}(\delta t) \left(\rho^{(k)}\otimes \sigma_0\right) \right]\right. \nonumber \\
    &\left.\qquad-\Tr_{S_1\ldots S_j}\left[  \mathcal{U}_{\lambda,j}(0) \left(\rho^{(k)}\otimes \sigma_0\right) \right]\right\|_1\nonumber\\& \le
    \left\|(\xi_{\delta t}-\Id)\left(\Tr_{S_1\ldots S_j}\left[ \mathcal{U}_{\lambda}^{(S_jE)}\circ \mathcal{U}_{\lambda,j-1}(\delta t) \left(\rho^{(k)}\otimes \sigma_0\right) \right]\right)\right\|_1 \nonumber\\& \quad +\left\|\Tr_{S_1\ldots S_j}\left[ \mathcal{U}_{\lambda}^{(S_jE)}\left( \mathcal{U}_{\lambda,j-1}(\delta t) \left(\rho^{(k)}\otimes \sigma_0\right) \right.\right.\right. \nonumber \\
    &\left.\left.\left.\qquad- \mathcal{U}_{\lambda,j-1}(0) \left(\rho^{(k)}\otimes \sigma_0\right) \right)\right]\right\|_1\,.\label{ineq_new_th}
\end{align}
\end{small}
We can upper bound the first term with $\varepsilon_{\bar{N}}(\delta t)$. Indeed, the mean environmental photon number of the state $\Tr_{S_1\ldots S_j}\left[ \mathcal{U}_{\lambda}^{(S_jE)}\circ \mathcal{U}_{\lambda,j-1}(\delta t) \left(\rho^{(k)}\otimes \sigma_0\right) \right]$ satisfies
\begin{small}
\begin{align*}
    &\Tr_{ES_{j+1}\ldots S_k}\left[\Tr_{S_1\ldots S_j}\left[ \mathcal{U}_{\lambda}^{(S_jE)}\circ \mathcal{U}_{\lambda,j-1}(\delta t) \left(\rho^{(k)}\otimes \sigma_0\right) \right] b^\dagger b\right]\\&= \Tr_{ES_1\ldots S_k}\left[\mathcal{U}_{\lambda}^{(S_jE)}\circ \mathcal{U}_{\lambda,j-1}(\delta t) \left(\rho^{(k)}\otimes \sigma_0\right) b^\dagger b\right]\\&\le \Tr_{ES_1\ldots S_k}\left[\mathcal{U}_{\lambda}^{(S_jE)}\circ \mathcal{U}_{\lambda,j-1}(\delta t) \left(\rho^{(k)}\otimes \sigma_0\right)\right.\nonumber  \\
    &\left.\qquad\times \left(b^\dagger b+\sum_{i=1}^k (a_i)^\dagger a_i\right)\right] \\&= \Tr_{ES_1\ldots S_k}\left[\mathcal{U}_{\lambda,j-1}(\delta t)\left(\rho^{(k)}\otimes \sigma_0\right) \right.\nonumber  \\
    &\left.\qquad\times\left(b^\dagger b+\sum_{i=1}^k (a_i)^\dagger a_i\right)\right] =N_{j-1}\le \bar{N}\,,
\end{align*}
\end{small}
where we used the fact that $\mathcal{U}_{\lambda}^{(S_jE)}$ preserves the mean of the total photon number operator $b^\dagger b+\sum_{i=1}^k(a_{i})^\dagger a_i$.
Hence, the state $\Tr_{S_1\ldots S_j}\left[ \mathcal{U}_{\lambda}^{(S_jE)}\circ \mathcal{U}_{\lambda,j-1}(\delta t) \left(\rho^{(k)}\otimes \sigma_0\right) \right]$ can be used to lower bound the supremum problem which defines $\left\|\xi_{\delta t}-\Id\right\|_{\diamond \bar{N}}$, which is less or equal to $\varepsilon_{\bar{N}}(\delta t)$ thanks to~\eqref{vareps}.

In addition, by exploiting the contractivity of the trace norm under partial traces and the invariance of the trace norm under unitary transformations,~\eqref{ineq_new_th} implies that
\begin{small}
\begin{align*}
      &\left\|\Tr_{S_1\ldots S_j}\left[  \mathcal{U}_{\lambda,j}(\delta t) \left(\rho^{(k)}\otimes \sigma_0\right) \right]\right.  \\
    &\left.\qquad-\Tr_{S_1\ldots S_j}\left[  \mathcal{U}_{\lambda,j}(0) \left(\rho^{(k)}\otimes \sigma_0\right) \right]\right\|_1\\& \le
      \varepsilon_{\bar{N}}(\delta t)+\left\|\Tr_{S_1\ldots S_{j-1}}\left[ \mathcal{U}_{\lambda,j-1}(\delta t) \left(\rho^{(k)}\otimes \sigma_0\right)\right.\right. \\
    &\left.\left.\qquad - \mathcal{U}_{\lambda,j-1}(0) \left(\rho^{(k)}\otimes \sigma_0\right) \right]\right\|_1\,.
\end{align*}
\end{small}
By using the inductive assumption, we have proved~\eqref{induction_eq}. Consequently, also~\eqref{diff_sigma} is proved.

Now, we invoke Lemma~14 which states that the energy-constrained quantum capacity of the general attenuators is continuous
with respect to the environment state over the subset of environment states having a finite mean photon number. Consequently, by exploiting~\eqref{diff_sigma}, the continuity of $\varepsilon_{\bar{N}}(\delta t)$ in $\delta t=0$, and the fact that $\varepsilon_{\bar{N}}(0)=0$, it follows that $Q(\Phi_{\lambda,\sigma'_{\delta t}},1/2)>Q(\Phi_{\lambda,\sigma},1/2)/2$ for $\delta t$ sufficiently small. Hence,~\eqref{tesi_teo_nuovo} is finally proved.

Note that we can apply Lemma~14 since we can prove that $\sigma$ and $\sigma'_{\delta t}$ have finite mean photon number. Indeed, the mean photon number of $\sigma'_{\delta t}$ satisfies
\begin{small}
\begin{align*}
    \Tr_E \left[\sigma'_{\delta t}b^\dagger b\right]&\le \Tr_{ES_1\ldots S_k}\left[\mathcal{U}_{\lambda,k}(\delta t) \left(\rho^{(k)}\otimes \sigma_0\right)\right.\nonumber  \\
    &\left.\qquad\times \left(b^\dagger b+\sum_{i=1}^k(a_i)^\dagger a_i\right)\right]=N_k<\infty\,,
\end{align*}
\end{small}where we used~\eqref{induction_ener}. 
Since $\sigma=\sigma'_{\delta t=0}$, also the mean photon number of $\sigma$ is finite.
\end{proof}

\begin{ex}
Let $\sigma_0=\ketbra{0}$. An example of thermalisation process $\xi_{\delta t}$ which satisfies the hypotheses of Theorem~\ref{teo_new} is $\xi_{\delta t}\coloneqq \Phi_{\eta(\delta t),\ketbra{0}}$, where $\eta(\cdot):[0,\infty)\mapsto[0,1]$ is such that: it is continuous in $\delta t=0$, $\eta(0)=1$, and $\eta(\delta t)=0$ for $\delta t\ge t_E$. Notice that $\xi_{0}=\Id$ and $\xi_{\delta t}(\sigma)=\ketbra{0}$ for all $\sigma$ and all $\delta t\ge t_E$.

The hypothesis (1) of Theorem~\ref{teo_new} is satisfied since for all $N>0$ it holds that~\cite[Section 4.B]{VV-diamond}
\bb\label{ECdiamon_pureloss}
\lim\limits_{\lambda\rightarrow 1^-}\|\Phi_{\lambda,\ketbra{0}}-\Id\|_{\diamond N}=0\,.
\ee
The reason why we expressed hypothesis (1) of Theorem~\ref{teo_new} in terms of the energy-constrained diamond norm and not simply in terms of the unconstrained diamond norm is that the latter has undesirable properties in the context of continuous variables systems. For example, the unconstrained version of~\eqref{ECdiamon_pureloss}, i.e.~ $\lim\limits_{\lambda\rightarrow 1^-}\|\Phi_{\lambda,\ketbra{0}}-\Id\|_{\diamond }=0$, is false~\cite[Proposition 1]{VV-diamond}.

The hypothesis (2) of Theorem~\ref{teo_new} is satisfied since it holds that $\langle b^\dagger b\rangle_{\Phi_{\lambda,\ketbra{0}}(\sigma)}=\lambda \langle b^\dagger b\rangle_{\sigma}$, thanks to Lemma~\ref{LemmaEner} in the Appendix.
\end{ex}

}

\section{Discussion and conclusions}
In Sec.~\ref{sec_Cea}, we {have} studied the transmission of classical and quantum information on general attenuators. Most notably, we {have} found that for arbitrarily small non-zero values of the transmissivity $\lambda$, if the environment is in a suitable state $\sigma$, then entanglement-assisted classical communication through $\Phi_{\lambda,\sigma}$ achieves better performance than unassisted classical communication through the noiseless channel. In other words, by controlling the environment state and by consuming pre-shared entanglement, it is possible to reliably transmit bits with better performance than in the ideal case of absence of noise. This property holds even when the channel is so noisy that $\lambda>0$ is very small. 

In mathematical terms, from numerical investigations regarding the positivity of $I_{\text{coh}}\left(\Phi_{\lambda,\ketbrasub{n}},\tau_N\right)$, we {have been} 
led to conjecture that
$$\sup_{n\in\mathbb{N}}C_{\text{ea}}\left(\Phi_{\lambda,\ketbrasub{n}},N\right)> C\left(\Id,N\right)$$
for any $N>0$ and $\lambda\in(0,1/2)$. In Sec.~\ref{subsec_Cea} we {have} provided a proof of this in the (seemingly worst) case where $\lambda>0$ is sufficiently small for a wide range of values of $N$ (those depicted in the plot in Fig.~\ref{entropy_diff}).

Since $C_{\text{ea}}=2Q_{\text{ea}}$, all the conclusions we have analysed so far can be reformulated in the context in which the task is the transmission of qubits.


On the technical level, we {have} introduced the `master equation trick', that {has} allowed us to simplify the calculations involving thermal attenuators. We believe that this trick may be of independent interest: for example, {it provides a simple Kraus representation of the thermal attenuator (see Theorem~\ref{TeoKraus}), allowing one to obtain simple expressions for the action of the thermal attenuator on a generic operator. In addition,} it can also be useful in estimating the capacities of those general attenuators which have a Fock-diagonal environment state (see Lemma~\ref{lowcap} in the Appendix).

In Sec.~\ref{subsec_mon} we {have} shown that whenever $\sigma$ is a pure state of definite parity, either even or odd, we have that $$C_{\text{ea}}\left(\Phi_{1/2,\sigma},N\right)=g(N)\quad\forall\,N>0\,.$$ Then we {have} found examples of Fock states $\ketbra{n}$ such that $C_{\text{ea}}\left(\Phi_{\lambda,\ketbrasub{n}},N\right)$ is not monotonic in $\lambda$.

In Sec.~\ref{sec_control}, we {have} explained how to implement in a operational way the control of the environment state. This can be done by exploiting the memory effects that arise in a realistic model of communication when the temporal intervals at which signals are fed into the optical fibre are sufficiently short. This fact {has} led us to devise the `noise attenuation protocol', whose goal is to manipulate the environment state in order to facilitate communication. Basically, one could transmit using sequences of $k+1$ very close signals: the first $k$ --- dubbed `trigger signals' --- are used to induce suitable modifications of the environment state, while the last one carries the actual information. We {have} shown that (Theorem~\ref{FockAchiev}) the noise attenuation protocol allows one to achieve not only the entanglement-assisted communication performance discussed in Sec.~\ref{sec_Cea}, but also the quantum communication performance discovered in~\cite{die-hard} (i.e.\ the phenomenon of {D-HQCOM}). In addition, the multipartite state of $k$ trigger signals needed to activate these enhanced capabilities can be produced by using the interferometer in Fig.~\ref{InterfK}. In particular (Theorem~\ref{theorem_lambda_small}), for sufficiently low non-zero values of the transmissivity, the noise attenuation protocol works with only two trigger signals. 
This implies that sufficiently long optical fibres can transmit qubits at a constant rate. 

We have proved that arbitrarily long optical fibres can have very efficient communication performance, provided that the noise attenuation protocol is applied with the trigger signals produced by the interforemeter in Fig.~\ref{InterfK}. Such an inteferometer requires a large Fock state in input. This is a problem since nowadays it is experimentally not known how to generate deterministically Fock states with high fidelity and Fock number larger than five. However, it is expected that this will be known in the near future~\cite{generation_fock_state}. From the theoretical point of view, one can try to overcome this problem by finding other environment states $\sigma(\lambda)$ such that $Q(\Phi_{\lambda,\sigma(\lambda)})>c>0$ and such that $\sigma(\lambda)$ is $(\lambda,\tau_\nu)-$achievable by means of trigger signals which can be produced without large Fock states.

Another intriguing open problem, relevant for physical realisations, is to consider the case in which Alice is not able to send signals separated by $\delta t\ll t_E$, where $t_E$ is the time after which the thermalisation process resets the environment state into the stationary environment state. In this case, we should take into account the resetting dynamics $\xi_{\delta t}$ during step~2 of the noise attenuation protocol and hence in~\eqref{achievable}.

{Furthermore, including the queuing framework, that has been recently
proposed for quantum communication networks~\cite{queue1,queue2,queue3}, can be an interesting development of the present paper. The queuing framework takes into account the decoherence, due to imperfections of Alice's quantum memory, which affects the signals as they wait to be fed into the optical fibre. }

An interesting experimental research would be to study memory effects in optical fibres, e.g.~to estimate $t_E$.

To summarise, our work shows that repeaterless quantum communication across arbitrarily long optical fibres is possible at a constant rate, provided that one exploits the noise attenuation protocol. This protocol, combined with entanglement-assistance, allows a sender to transfer bits or qubits across an arbitrarily long optical fibre at a rate of the same order of the maximum achievable in the unassisted noiseless scenario.

\smallskip
\textbf{\em Acknowledgements.}--- {FAM and VG acknowledge financial support by MIUR (Ministero dell'Istruzione, dell'Universit\`a e della Ricerca) via project PRIN 2017 ``Taming complexity via Quantum Strategies: a Hybrid Integrated Photonic approach'' (QUSHIP) Id.\ 2017SRNBRK, and via project PRO3 ``Quantum Pathfinder''.} LL acknowledges financial support from the Alexander von Humboldt Foundation.

\bibliographystyle{unsrt}
\bibliography{biblio}
\onecolumngrid
\appendix
\counterwithin{lemma_app}{section}
\counterwithin{definition_app}{section}
\counterwithin{remark}{section}
\section{Useful lemmas}
\begin{lemma_app}\label{subadd}
	Let $\rho\in\mathfrak{S}(\HH_S)$, $\sigma\in\mathfrak{S}(\HH_E)$ and $\lambda\in[0,1]$. It holds that
	$$S\left(\tilde{\Phi}_{\lambda,\sigma}(\rho)\right)\le S(\sigma)+S\left(\tilde{\Phi}_{\lambda,\sigma}^{\text{wc}}(\rho)\right)\,.$$
\end{lemma_app}
\begin{proof}
	Let $\ket{0}_{EE'}\in\HH_E\otimes\HH_{E'}$ be a purification of $\sigma$, where $\HH_{E'}$ is a fictitious purifying Hilbert space: $\Tr_{E'}\ketbra{0}_{EE'}=\sigma\,.$
	Then the map $\tilde{\Phi}_{\lambda,\sigma}: \mathfrak{S}(\HH_S)\mapsto\mathfrak{S}(\HH_E\otimes\HH_{E'})$, defined by
	$${\tilde{\Phi}}_{\lambda,\sigma}(\rho)\coloneqq\Tr_{S}\left[U_{\lambda}^{(SE)}\otimes\mathbb{1}_{E'}\left(\rho\otimes\ketbra{0}_{EE'}\right)\left(U_{\lambda}^{(SE)}\right)^\dagger\otimes\mathbb{1}_{E'}\right]\,,$$
	is a complementary quantum channel of $\Phi_{\lambda,\sigma}$, indeed: 
	\bb
	    &\Tr_{EE'}\left[U_{\lambda}^{(SE)}\otimes\mathbb{1}_{E'}\left(\rho\otimes\ketbra{0}_{EE'}\right)\left(U_{\lambda}^{(SE)}\right)^\dagger\otimes\mathbb{1}_{E'}\right] =\Tr_E\left[U_{\lambda}^{(SE)}\rho\otimes\sigma\left(U_{\lambda}^{(SE)}\right)^\dagger\right]=\Phi_{\lambda,\sigma}(\rho)\,. 
	\ee
	Moreover, it holds that
	\begin{equation}\label{trE'}
	\Tr_{E'}{\tilde{\Phi}}_{\lambda,\sigma}(\rho)={\tilde{\Phi}^\text{wc}}_{\lambda,\sigma}(\rho) 
	\end{equation}
	and
	\bb\label{smidt}
	S\left(\Tr_{E}{\tilde{\Phi}}_{\lambda,\sigma}(\rho)\right) &=S\left(\Tr_{ES}\left[U_{\lambda}^{(SE)}\otimes\mathbb{1}_{E'}\left(\rho\otimes\ketbra{0}_{EE'}\right)\left(U_{\lambda}^{(SE)}\right)^\dagger\otimes\mathbb{1}_{E'}\right]\right) \\&=S\left(\Tr_{E}\ketbra{0}_{EE'}\right)=S\left(\Tr_{E'}\ketbra{0}_{EE'}\right)=S(\sigma)\,,
	\ee
	where we used the fact that $\ketbra{0}_{EE'}$ is pure. Exploiting the subadditivity inequality of the von Neumann entropy and subsequently the identities~\eqref{trE'} and~\eqref{smidt}, we conclude that
	\bb
	&S\left({\tilde{\Phi}}_{\lambda,\sigma}(\rho)\right)\le S\left(\Tr_{E}{\tilde{\Phi}}_{\lambda,\sigma}(\rho) \right) +S\left(\Tr_{E'}{\tilde{\Phi}}_{\lambda,\sigma}(\rho) \right)= S(\sigma)+S\left(\tilde{\Phi}_{\lambda,\sigma}^{\text{wc}}(\rho)\right)\,.
	\ee
	
\end{proof}

\begin{lemma_app}\label{lemmatrasf}
	The annihilation operators in Heisenberg representation are given by:
	\begin{equation}\label{trasfheisa}
	\left(U_{\lambda}^{(SE)}\right)^\dagger a\,U_{\lambda}^{(SE)}=\sqrt{\lambda}a+\sqrt{1-\lambda}b\,;
	\end{equation}
	\begin{equation}\label{trasfheisb}
	\left(U_{\lambda}^{(SE)}\right)^\dagger b\,U_{\lambda}^{(SE)}=-\sqrt{1-\lambda}a+\sqrt{\lambda}b\,;
	\end{equation}
	\begin{equation}\label{trasfheisa2}
	{U_{\lambda}^{(SE)}} a\left(U_{\lambda}^{(SE)}\right)^\dagger=\sqrt{\lambda}a-\sqrt{1-\lambda}b\,;
	\end{equation}
	\begin{equation}\label{trasfheisb2}
	{U_{\lambda}^{(SE)}} b\left(U_{\lambda}^{(SE)}\right)^\dagger=\sqrt{1-\lambda}a+\sqrt{\lambda}b\,.
	\end{equation}
\end{lemma_app}
\begin{proof}
	By setting $\eta\coloneqq \arccos\sqrt{\lambda}$ i.e.~$\lambda(\eta)=\cos^2(\eta
	)$, and $$\hat{f}(\eta)\coloneqq	\left(U_{\lambda(\eta)}^{(SE)}\right)^\dagger a\,U_{\lambda(\eta)}^{(SE)}=e^{-\eta\left(a^\dagger b-ab^\dagger\right)}a\,e^{\eta\left(a^\dagger b-ab^\dagger\right)}\,,$$
	one obtains:
	\bb\label{f'}
	\hat{f}'(\eta)&=-\left(U_{\lambda(\eta)}^{(SE)}\right)^\dagger [a^\dagger b-ab^\dagger,a]\,U_{\lambda(\eta)}^{(SE)}=	\left(U_{\lambda(\eta)}^{(SE)}\right)^\dagger b\,U_{\lambda(\eta)}^{(SE)}\,,
	\ee
	\bb\label{f'}
	\hat{f}'(\eta)&=-\left(U_{\lambda(\eta)}^{(SE)}\right)^\dagger [a^\dagger b-ab^\dagger,b]\,U_{\lambda(\eta)}^{(SE)}=	-\left(U_{\lambda(\eta)}^{(SE)}\right)^\dagger a\,U_{\lambda(\eta)}^{(SE)}=-\hat{f}(\eta)\,.
	\ee
	Therefore, there exist two operators $\hat{c_0}$ and $\hat{c_1}$ such that
	\begin{equation}\label{formfeta}
	\hat{f}(\eta)=\cos\eta \,\hat{c_0}+\sin\eta \,\hat{c_1}\,.
	\end{equation}
	By imposing $\hat{f}(0)=a$ and $\hat{f}'(0)=b$, we arrive at
	$$\hat{f}(\eta)=\cos\eta \,a+\sin\eta\, b$$
	which implies~\eqref{trasfheisa} to be valid. Furthermore,~\eqref{trasfheisb} follows from~\eqref{f'}. To conclude the proof,~\eqref{trasfheisa2} follows from~\eqref{trasfheisa} by substituting $b\rightarrow-b$, while~\eqref{trasfheisb2} follows from~\eqref{trasfheisb} by substituting $a\rightarrow-a$.
\end{proof}

\begin{lemma_app}\label{LemmaEner}
The mean photon number of the output system and environment are given by
\begin{equation}\label{enphi}
	\langle a^\dagger a\rangle_{\Phi_{\lambda,\sigma}(\rho)} =\lambda\langle a^\dagger a\rangle_\rho +(1-\lambda)\langle b^\dagger b\rangle_\sigma+2\sqrt{\lambda(1-\lambda)}\Re\left(\langle a\rangle_\rho \langle b^\dagger\rangle_\sigma\right)
\end{equation}
and
\begin{equation}\label{enweak}
\langle b^\dagger b\rangle_{\tilde{\Phi}^{\text{wc}}_{\lambda,\sigma}(\rho)} =(1-\lambda)\langle a^\dagger a\rangle_\rho +\lambda\langle b^\dagger b\rangle_\sigma-2\sqrt{\lambda(1-\lambda)}\Re\left(\langle a\rangle_\rho \langle b^\dagger\rangle_\sigma\right)\,,
\end{equation}
respectively.
\end{lemma_app}
\begin{proof}
From Lemma~\ref{lemmatrasf} we have
\bb
\langle a^\dagger a\rangle_{\Phi_{\lambda,\sigma}(\rho)} &=	\Tr_S\left[\Phi_{\lambda,\sigma}(\rho)\,a^\dagger a\right]=\Tr_{SE}\left[U_\lambda^{(SE)} \rho\otimes\sigma \left({U_\lambda^{(SE)}}\right)^\dagger a^\dagger a\right]\\&=\Tr_{SE}\left[\rho\otimes\sigma \left(\left({U_\lambda^{(SE)}}\right)^\dagger a^\dagger U_\lambda^{(SE)}\right)\left(\left({U_\lambda^{(SE)}}\right)^\dagger a U_\lambda^{(SE)}\right)\right]\\&=\Tr_{SE}\left[\rho\otimes\sigma \left(\sqrt{\lambda}a^\dagger+\sqrt{1-\lambda}b^\dagger\right)\left(\sqrt{\lambda}a+\sqrt{1-\lambda}b\right)\right]\\&=\lambda\langle a^\dagger a\rangle_\rho +(1-\lambda)\langle b^\dagger b\rangle_\sigma+\sqrt{\lambda(1-\lambda)}\left(\langle a\rangle_\rho \langle b^\dagger\rangle_\sigma+\langle a^\dagger\rangle_\rho \langle b\rangle_\sigma\right)\\&=\lambda\langle a^\dagger a\rangle_\rho +(1-\lambda)\langle b^\dagger b\rangle_\sigma+2\sqrt{\lambda(1-\lambda)}\Re\left(\langle a\rangle_\rho \langle b^\dagger\rangle_\sigma\right)\,.
\ee
The formula~\eqref{enweak} follows from a similar calculation. 
\end{proof}

\begin{lemma_app}\label{teobeamexp}
Let $i,j\in\N$. It holds that
	\begin{equation}\label{espU1}
U_{\lambda}^{(SE)}\ket{i}_S\ket{j}_E=\sum_{m=0}^{i+j}c_m^{(i,j)}(\lambda)\ket{i+j-m}_S\ket{m}_E \,,
\end{equation}
where for all $m=0\text{, }1\text{, \ldots , }i+j$ we have
\bb\label{cijm}
c_m^{(i,j)}(\lambda)\coloneqq\frac{1}{\sqrt{i!j!}}\sum_{k=\max(0,m-j)}^{\min(i,m)}(-1)^k\binom{i}{k}\binom{j}{m-k}\lambda^{\frac{i+m-2k}{2}}(1-\lambda)^{\frac{j+2k-m}{2}}\sqrt{m!(i+j-m)!}\,.
\ee
As a consequence, for all $i,i',j,j'\in\N$ it holds that:
\bb
\Phi_{\lambda,\ketbraasub{j}{j'}_E}\left(\ketbraasub{i}{i'}_S\right)=\sum_{m=0}^{\min(i+j,i'+j')}c_{m}^{(i,j)}(\lambda)c_{m}^{(i',j')}(\lambda)\ketbraa{i+j-m}{i'+j'-m}_S\,
\ee
\bb\label{formula_wc}
\tilde{\Phi}^{\text{wc}}_{\lambda,\ketbraasub{j}{j'}_E}\left(\ketbraa{i}{i'}_S\right)=\sum_{m=0}^{\min(i+j,i'+j')}c_{i+j-m}^{(i,j)}(\lambda)c_{i'+j'-m}^{(i',j')}(\lambda)\ketbraa{i+j-m}{i'+j'-m}_E\,
\ee
where 
\bb
\Phi_{\lambda,\ketbraasub{j}{j'}_E}\left(\ketbraa{i}{i'}_S\right)&\coloneqq\Tr_E\left[U_{\lambda}^{(SE)}\ketbraa{i}{i'}_S\otimes\ketbraa{j}{j'}_E\left(U_{\lambda}^{(SE)}\right)^\dagger\right]\,,\\ 
\tilde{\Phi}^{\text{wc}}_{\lambda,\ketbraasub{j}{j'}_E}\left(\ketbraa{i}{i'}_S\right)&\coloneqq\Tr_S\left[U_{\lambda}^{(SE)}\ketbraa{i}{i'}_S\otimes\ketbraa{j}{j'}_E\left(U_{\lambda}^{(SE)}\right)^\dagger\right]\,,
\ee
\end{lemma_app}

\begin{proof}
By exploiting Lemma~\ref{lemmatrasf}, it holds that
\bb
&U_{\lambda}^{(SE)}\ket{i}_S\ket{j}_E=\frac{1}{\sqrt{i!j!}}U_{\lambda}^{(SE)} (a^\dagger)^i(b^\dagger)^j\ket{0}_S\ket{0}_E\\&=\frac{1}{\sqrt{i!j!}}\left(U_{\lambda}^{(SE)} a^\dagger \left(U_{\lambda}^{(SE)}\right)^\dagger\right)^i\left(U_{\lambda}^{(SE)} b^\dagger\left(U_{\lambda}^{(SE)}\right)^\dagger\right)^j\ket{0}_S\ket{0}_E\\&=\frac{1}{\sqrt{i!j!}}(\sqrt{\lambda}a^\dagger -\sqrt{1-\lambda}b^\dagger)^i(\sqrt{1-\lambda}a^\dagger +\sqrt{\lambda}b^\dagger)^j\ket{0}_S\ket{0}_E\\&=\frac{1}{\sqrt{i!j!}}\sum_{k=0}^{i}\sum_{w=0}^{j} (-1)^k \binom{i}{k} \binom{j}{w} \lambda^{\frac{i+w-k}{2}}(1-\lambda)^{\frac{j+k-w}{2}} \sqrt{(k+w)!(i+j-k-w)!} \ket{i+j-k-w}_S\ket{k+w}_E\\&=\sum_{m=0}^{i+j}\frac{1}{\sqrt{i!j!}}\sum_{k=\max(0,m-j)}^{\min(i,m)}(-1)^k\binom{i}{k}\binom{j}{m-k}\lambda^{\frac{i+m-2k}{2}}(1-\lambda)^{\frac{j+2k-m}{2}}\sqrt{m!(i+j-m)!}\ket{i+j-m}_S\ket{m}_E\\&=\sum_{m=0}^{i+j}c_m^{(i,j)}(\lambda)\ket{i+j-m}_S\ket{m}_E\,.
\ee
\end{proof}

\begin{definition_app}
For every trace class operator $T$ on $L^2(\mathbb{R}^m)$, its \emph{characteristic function} $\chi_{T}:\mathbb{C^m}\rightarrow\mathbb{C}$ is defined by
\begin{equation}
	\chi_{T}(z)\coloneqq \Tr\left[T\,D(z)\right]\qquad\forall\, z=(z_1,z_2,\ldots,z_m)\in\mathbb{C}^m \,,
\end{equation}
	where $$D(z)\coloneqq\exp{\left[\sum_{i=0}^m \left(z_i a_i^\dagger-z_i^\ast a_i\right)\right] }$$ is the displacement operator, with $a_i$ being the annihilation operator corresponding to the i-th mode.
\end{definition_app}
Conversely, it turns out that every trace class operator $T$ can be reconstructed from its characteristic functions $\chi_{T}$ via the following identity~\cite{HOLEVO-CHANNELS-2,BUCCO}:
\begin{equation}
T=\int_{\mathbb{C}} \frac{d^m z}{\pi^m}\text{ }D(-z)\chi_{T}(z)\,.
\end{equation}

\begin{lemma_app}
The action of a {BS} of transmissivity $\lambda$ on the a two-mode state of the form $\rho\otimes\sigma$ can be cast in the language of characteristic functions as
\begin{equation}\label{caract}
\chi_{U_\lambda^{(SE)}  \rho\otimes\sigma  \left(U_\lambda^{(SE)}\right)^\dagger} \left(z,w\right)=\chi_{\rho}\left(\sqrt{\lambda}z-\sqrt{1-\lambda}w\right)\chi_{\sigma}\left(\sqrt{1-\lambda}z+\sqrt{\lambda}w\right)\quad\forall\, z,w\in\mathbb{C}\,.
\end{equation}
Consequently, it holds that
\begin{equation}\label{caract3}
\chi_{\Phi_{\lambda,\sigma}(\rho)} \left(z\right)=\chi_{\rho}\left(\sqrt{\lambda}z\right)\chi_{\sigma}\left(\sqrt{1-\lambda}z\right)\quad\forall\, z\in\mathbb{C}\,,
\end{equation}
\begin{equation}\label{caract_weak}
\chi_{\tilde{\Phi}^{\text{wc}}_{\lambda,\sigma}(\rho)} \left(z\right)=\chi_{\rho}\left(-\sqrt{1-\lambda}w\right)\chi_{\sigma}\left(\sqrt{\lambda}w\right)\quad\forall\, w\in\mathbb{C}\,.
\end{equation}
\end{lemma_app}
\begin{proof}
Denote the displacement operators on $\HH_S$ and $\HH_E$ respectively as $D_S(z)$ and $D_E(z)$.
\eqref{trasfheisa} and~\eqref{trasfheisb} imply that
\bb
 \left(U_\lambda^{(SE)}\right)^\dagger D_S(z)\, U_\lambda^{(SE)}&=D_S(\sqrt{\lambda}z)\, D_E(\sqrt{1-\lambda}z)\\
 \left(U_\lambda^{(SE)}\right)^\dagger D_E(w)\, U_\lambda^{(SE)}&=D_S(-\sqrt{1-\lambda}w)\, D_E(\sqrt{\lambda}w)
\ee
Consequently, we obtain
\begin{equation}
 \left(U_\lambda^{(SE)}\right)^\dagger D_S(z)\, D_E(w)\, U_\lambda^{(SE)}=D_S(\sqrt{\lambda}z-\sqrt{1-\lambda}w)\, D_E(\sqrt{1-\lambda}z+\sqrt{\lambda}w)\,.
\end{equation}
Hence, it holds that
\bb
\chi_{U_\lambda^{(SE)}  \rho\otimes\sigma  \left(U_\lambda^{(SE)}\right)^\dagger} \left(z,w\right)&=\Tr_{SE}\left[U_\lambda^{(SE)}  \rho\otimes\sigma  \left(U_\lambda^{(SE)}\right)^\dagger \,D_S(z)\, D_E(z) \right] \\&=\Tr_{SE}\left[\rho\otimes\sigma\,D_S(\sqrt{\lambda}z-\sqrt{1-\lambda}w)\, D_E(\sqrt{1-\lambda}z+\sqrt{\lambda}w) \right] \\&=\chi_{\rho}\left(\sqrt{\lambda}z-\sqrt{1-\lambda}w\right)\chi_{\sigma}\left(\sqrt{1-\lambda}z+\sqrt{\lambda}w\right)\,.
\ee
\eqref{caract3} and~\eqref{caract_weak} follow from the fact that for all $\rho_{SE}\in\mathfrak{S}(\HH_S\otimes\HH_E)$ it holds that
\bb
    \chi_{\Tr_S\rho_{SE}}(w)=\chi_{\rho_{SE}}(z=0,w)\,,
    \chi_{\Tr_E\rho_{SE}}(z)=\chi_{\rho_{SE}}(z,w=0)\,.
\ee
\end{proof}

\begin{lemma_app}\label{lemma_invert}
For all $\lambda\in[0,1]$ and all $\rho$, $\sigma$ single-mode states, it holds that
\begin{equation}\label{invert2}
\Phi_{\lambda,\sigma}(\rho)=\Phi_{1-\lambda,\,\rho}(\sigma)\,.
\end{equation}
\end{lemma_app}
\begin{proof}
The characteristic function associated with an output of a general attenuator can be expressed as in~\eqref{caract3} and hence it holds that
\begin{equation}
	\chi_{\Phi_{\lambda,\sigma}(\rho)} \left(z\right)=\chi_{\Phi_{1-\lambda,\rho}(\sigma)} \left(z\right)\quad\forall\, z\in\mathbb{C}\,.
\end{equation}
Since quantum states are in one-to-one correspondence with characteristic functions,~\eqref{invert2} is proved.
\end{proof}

\begin{lemma_app}\label{thermal_composition}
It holds that
\begin{equation}\label{compos}
\Phi_{\lambda_1\lambda_2,\tau_N}=\Phi_{\lambda_1,\tau_N}\circ\Phi_{\lambda_2,\tau_N}\quad\quad\forall\,\lambda_1\text{, }\lambda_2\in[0,1]\text{, }N>0\,.
\end{equation}
\end{lemma_app}
\begin{proof}
Since the characteristic function of $\tau_\nu$ is $\chi_{\tau_\nu}(z)=e^{-\left(\nu+\frac{1}{2}\right)|z|^2}$~\cite[Proof of formula (12.33)]{HOLEVO-CHANNELS-2}, the mapping~\eqref{caract3} becomes $\chi_{\Phi_{\lambda,\tau_N}(\rho)} \left(z\right)=\chi_{\rho}\left(\sqrt{\lambda}z\right)e^{-(1-\lambda)\left(N+\frac{1}{2}\right)|z|^2}$.
Hence, the characteristic functions of $\Phi_{\lambda_1,\tau_N}\circ\Phi_{\lambda_2,\tau_N}(\rho)$ is
\bb
\chi_{\Phi_{\lambda_1,\tau_N}\circ\Phi_{\lambda_2,\tau_N}(\rho)} \left(\mu\right)&=\chi_{\Phi_{\lambda_2,\tau_N}(\rho)}\left(\sqrt{\lambda_1}\mu\right)\exp\left[-(1-\lambda_1)\left(N+\frac{1}{2}\right)|\mu|^2\right]\\&=\chi_{\rho}\left(\sqrt{\lambda_1\lambda_2}\mu\right)\exp\left[-(\lambda_1-\lambda_1\lambda_2)\left(N+\frac{1}{2}\right)|\mu|^2\right] \exp\left[-(1-\lambda_1)\left(N+\frac{1}{2}\right)|\mu|^2\right]\\&=\chi_{\rho}\left(\sqrt{\lambda_1\lambda_2}\mu\right)\chi_{\tau_N}\left(\sqrt{1-\lambda_1\lambda_2}\mu\right)=\chi_{\Phi_{\lambda_1\lambda_2,\tau_N}(\rho)} \left(\mu\right)\quad\forall\, \rho\in\mathfrak{S}(\HH_S)\text{ , }\mu\in\mathbb{C}\,.
\ee
Since quantum states are in one-to-one correspondence with characteristic functions, this concludes the proof.
\end{proof}
\begin{lemma_app}\label{LemmaMaster}
By defining $\rho_N(t)\coloneqq\Phi_{\exp(-t),\tau_N}(\rho)$ with $t\in[0,\infty)$, it holds that
\begin{equation*}
    \frac{d}{dt}\rho_N(t)=(N+1)\left[a\rho_N(t) a^\dagger-\frac{1}{2}\{\rho_N(t),a^\dagger a\}\right]+N\left[a^\dagger\rho_N(t) a-\frac{1}{2}\{\rho_N(t),a a^\dagger \}\right]\,.
\end{equation*}
\end{lemma_app}
\begin{proof}
The identity~\eqref{compos} becomes
\begin{equation}\label{dynsemi}
\Phi_{\exp(-t_1-t_2),\tau_N}=\Phi_{\exp(-t_1),\tau_N}\circ\Phi_{\exp(-t_2),\tau_N}\quad\quad\forall\, t_1,t_2\ge0\,.
\end{equation}
In addition, for $t=0$ one obtains the identity superoperator $\Phi_{1,\tau_N}=I$. As a consequence, one arrives at the \emph{master equation}
\begin{equation}\label{Master}
	\frac{d}{dt}\rho_N(t)=\mathcal{L}_N[\rho_N(t)]\,,
\end{equation}
where $\mathcal{L}_N$, called \emph{Lindbladian superoperator}, is defined by
\begin{equation}\label{Limb}
\mathcal{L}_N\coloneqq \lim\limits_{\varepsilon\rightarrow0^+}\frac{\Phi_{\exp(-\varepsilon),\tau_N}-I}{\varepsilon}\,.
\end{equation}
Now, let us prove that the Lindbladian superoperator is given by
	\begin{equation}\label{Liouv}
	\mathcal{L}_N[{\Theta}]=(N+1)\left[a\Theta a^\dagger-\frac{1}{2}\{\Theta,a^\dagger a\}\right]+N\left[a^\dagger\Theta a-\frac{1}{2}\{\Theta,a a^\dagger \}\right] \,.
	\end{equation}
	By definition, we have that
	\begin{equation}\label{PhiEps}
	\Phi_{\exp(-\varepsilon),\tau_N}[{\Theta}]=\Tr_E\left(U_{\exp(-\varepsilon)}^{(SE)}{\Theta}\otimes\tau_N\left(U_{\exp(-\varepsilon)}^{(SE)}\right)^\dagger\right)\,.
	\end{equation}
	By using that
	\bb
	\arccos( e^{-\varepsilon/2})&=\sqrt{\varepsilon}+O(\varepsilon^{3/2})\,,
	\ee
	one can verify that
	\bb
	  U_{\exp(-\varepsilon)}^{(SE)}&=\exp\left(\arccos( e^{-\varepsilon/2})(a^\dagger b-a b^\dagger)\right)=\mathbb{1}+\sqrt{\varepsilon}(a^\dagger b-ab^\dagger)+\frac{\varepsilon}{2}(a^\dagger b-ab^\dagger)^2+O(\varepsilon^{3/2})\,.\label{noth}
	\ee
	Inserting~\eqref{noth} into~\eqref{PhiEps} yields
	\begin{equation}\label{espress}
	\Phi_{\exp(-\varepsilon),\tau_N}[{\Theta}]=\Theta+\varepsilon\left\{ (N+1)\left[a\Theta a^\dagger-\frac{1}{2}\{\Theta,a^\dagger a\}\right]+N\left[a^\dagger\Theta a-\frac{1}{2}\{\Theta,a a^\dagger \}\right] \right\}+O(\varepsilon^{3/2})\,,
	\end{equation}
	where we used $\Tr_E\left[\tau_N b^\dagger b\right]=N\,,$ and $\Tr_E\left[\tau_N b b^\dagger\right]=N+1\,.$
	By substituting the expression~\eqref{espress} in $\mathcal{L}_N[\Theta]=\lim\limits_{\varepsilon\rightarrow0^+}\frac{\Phi_{\exp(-\varepsilon),\tau_N}[{\Theta}]-\Theta}{\varepsilon}$, we finally arrive at~\eqref{Liouv}.
\end{proof}

\begin{lemma_app}\label{lemmaweakp}
Let $\lambda\in[0,1]$ and let $\sigma\in\mathfrak{S}(\HH_E)$. The following superoperator identity holds
\begin{equation}
\tilde{\Phi}^{\text{wc}}_{\lambda,\sigma}=\mathcal{V}\circ\Phi_{1-\lambda,\mathcal{V}(\sigma)}\text{ ,}
\end{equation}
where $\mathcal{V}$ is the phase space inversion superoperator defined in~\eqref{tV}. 
\end{lemma_app}
\begin{proof}
To distinguish the phase space inversion operators which act on the environment and on the system, let us denote them respectively as $V_E\coloneqq (-1)^{b^\dagger b}$ and $V_S\coloneqq (-1)^{a^\dagger a}$.
From the identity $(-1)^{a^\dagger a} a (-1)^{a^\dagger a} =-a$, one obtains
\begin{equation}
    V_SD_S(z){V_S}^\dagger=D_S(-z)\,.
\end{equation}
As a consequence, in the language of characteristic functions, we obtain
\bb
\chi_{V_S\Phi_{1-\lambda,V_E \sigma {V_E}^\dagger}(\rho){V_S}^\dagger} \left(z\right)&=\Tr_S\left[V_S\Phi_{1-\lambda,V_E \sigma {V_E}^\dagger}(\rho){V_S}^\dagger D_S(z)\right]=\Tr_{SE}\left[\left({U^{(SE)}_{1-\lambda}}\right)^\dagger {V_S}^\dagger D_S(z)V_SU_{1-\lambda}^{(SE)}\, \rho\otimes V_E \sigma {V_E}^\dagger\right]\\&=\Tr_{SE}\left[\left(U_{1-\lambda}^{(SE)}\right)^\dagger D_S(-z)U_{1-\lambda}^{(SE)}\, \rho\otimes V_E \sigma {V_E}^\dagger\right]\\&=\Tr_{SE}\left[D_S(-\sqrt{1-\lambda} z)D_E(-\sqrt{\lambda}z)\, \rho\otimes V_E \sigma {V_E}^\dagger\right]\\&=\Tr_{S}\left[D_S(-\sqrt{1-\lambda} z)\rho\right] \Tr_{E}\left[D_E(\sqrt{\lambda}z)\sigma \right] =\chi_{\rho}\left(-\sqrt{1-\lambda}z\right)\chi_{\sigma}\left(\sqrt{\lambda}z\right)=\chi_{\tilde{\Phi}^{\text{wc}}_{\lambda,\sigma}(\rho)} \left(z\right)\,,
\ee
where we used~\eqref{caract_weak}. This concludes the proof.
\end{proof}

\begin{lemma_app}\label{lowcap}
	If $\sigma$ is diagonal in Fock basis i.e.\ $$\sigma\coloneqq\sum_{i=0}^{\infty}q_i\ketbra{i}\,,$$then for all $N>0$ and $\lambda\in[0,1]$ the following inequalities hold:
	\bb
	    Q\left(\Phi_{\lambda,\sigma},N\right)&\ge f(N,\lambda)\,,\\
	    C_{\text{ea}}\left(\Phi_{\lambda,\sigma},N\right)&\ge g(N)+f(N,\lambda)\,,
	\ee
	where 
	\begin{equation}\label{def_f}
	    f(N,\lambda)\coloneqq H\left(\left\{\sum_{i=0}^{\infty}q_iP_l(N,i,1-\lambda)\right\}_{l\in\mathbb{N}}\right)-H\left(\left\{\sum_{i=0}^{\infty}q_iP_l(N,i,\lambda)\right\}_{l\in\mathbb{N}}\right)-H\left(\{q_i\}_{i\in\mathbb{N}}\right)\,,
	\end{equation}
	 $\left\{P_l(N,i,\lambda)\right\}_{l\in\mathbb{N}}$ is expressed in~\eqref{simpler}, $g(\cdot)$ is expressed in~\eqref{bosonicent}, and $H(\cdot)$ denotes the Shannon entropy.
\end{lemma_app}
\begin{proof}
Lemma~\ref{subadd} implies that
\bb\label{coh_diagonal}
I_{\text{coh}}\left(\Phi_{\lambda,\sigma},\tau_N\right)&=S\left(\Phi_{\lambda,\sigma}(\tau_N)\right)-S\left(\tilde{\Phi}_{\lambda,\sigma}(\tau_N)\right) \ge S\left(\Phi_{\lambda,\sigma}(\tau_N)\right)-S\left(\tilde{\Phi}^{\text{wc}}_{\lambda,\sigma}(\tau_N)\right)-S(\sigma)\,.
\ee
Furthermore, Lemma~\ref{lemmaweakp} and the fact that von Neumann entropy is invariant under unitary transformations imply that
\begin{equation}\label{entropy_wc}
S\left(\tilde{\Phi}^{\text{wc}}_{\lambda,\sigma}(\tau_N)\right)=S\left(\Phi_{1-\lambda,\mathcal{V}(\sigma)}(\tau_N)\right)\text{ .}
\end{equation}
Since $\sigma$ is diagonal in Fock basis, it holds $\mathcal{V}(\sigma)=\sigma$. By substituting~\eqref{entropy_wc} in~\eqref{coh_diagonal}, we obtain
$$I_{\text{coh}}\left(\Phi_{\lambda,\sigma},\tau_N\right)\ge S\left(\Phi_{\lambda,\sigma}(\tau_N)\right)-S\left(\Phi_{1-\lambda,\sigma}(\tau_N)\right)-S(\sigma)\,.$$
Moreover, by using~\eqref{invert2}, we arrive at
\begin{equation}\label{key}
I_{\text{coh}}\left(\Phi_{\lambda,\sigma},\tau_N\right)\ge S\left(\Phi_{1-\lambda,\tau_N}(\sigma)\right)-S\left(\Phi_{\lambda,\tau_N}(\sigma)\right)- S(\sigma)\,.
\end{equation}
Moreover,~\eqref{defP} implies that
\bb
S\left(\Phi_{\lambda,\tau_N}(\sigma)\right)&=S\left(\sum_{i=0}^{\infty}q_i\Phi_{\lambda,\tau_N}(\ketbra{i})\right)=S\left(\sum_{l,i=0}^{\infty}q_iP_l(N,i,\lambda)\ketbra{l}\right)=H\left(\left\{\sum_{i=0}^{\infty}q_iP_l(N,i,\lambda)\right\}_{l\in\mathbb{N}}\right)
\ee
and we have \begin{equation}\label{twoeq}
S(\sigma)=H\left(\{q_i\}_{i\in\mathbb{N}}\right)\,.
\end{equation}
By substituting in~\eqref{key}, we obtain
\begin{equation}
I_{\text{coh}}\left(\Phi_{\lambda,\sigma},\tau_N\right)\ge f(N,\lambda)\,.
\end{equation}
As a consequence, it holds that
\bb
Q\left(\Phi_{\lambda,\sigma},N\right)&\ge I_{\text{coh}}\left(\Phi_{\lambda,\sigma},\tau_N\right)\ge f(N,\lambda)\,,\\
C_{\text{ea}}\left(\Phi_{\lambda,\sigma},N\right)&\ge g(N)+ I_{\text{coh}}\left(\Phi_{\lambda,\sigma},\tau_N\right)\ge g(N)+f(N,\lambda)\,.
\ee
\end{proof}

{\begin{lemma_app}\label{lemmakraus}
\emph{(Kraus representation of the thermal attenuator via explicit formula of {BS})}\\ For all $\lambda\in[0,1],\nu\ge 0$, the thermal attenuator $\Phi_{\lambda,\tau_\nu}$ admits the following Kraus representation:
\begin{equation}
    \Phi_{\lambda,\tau_\nu}(\rho)=\sum_{k,m=0}^\infty \tilde{M}_{k,m}\rho \tilde{M}_{k,m}^\dagger\,,
\end{equation}
where 
\bb
    \tilde{M}_{k,m} =\sqrt{\frac{\nu^k}{(\nu+1)^{k+1}}} \sum_{l=max(m-k,0)}^\infty c_m^{(l,k)}(\lambda) \ketbraa{l+k-m}{l}\,,
\ee
with the coefficients $c_m^{l,k}(\lambda)$ being defined in~\eqref{cijm}.
\end{lemma_app}
\begin{proof}
It holds that
\bb
\Phi_{\lambda,\tau_\nu}=\Tr_E\left[U_{\lambda}^{(SE)}\rho\otimes\tau_\nu\left(U_{\lambda}^{(SE)}\right)^\dagger\right]=\sum_{k,m=0}^\infty \frac{\nu^k}{(\nu+1)^{k+1}}\bra{m}_E U_{\lambda}^{(SE)}\ket{k}_E\rho \left(\bra{m}_E U_{\lambda}^{(SE)}\ket{k}_E\right)^\dagger\,.
\ee
Hence, by defining
\begin{equation}
    \tilde{M}_{k,m}\coloneqq \sqrt{\frac{\nu^k}{(\nu+1)^{k+1}}} \bra{m}_E U_{\lambda}^{(SE)}\ket{k}_E\,,
\end{equation}
we have found a Kraus representation of $\Phi_{\lambda,\tau_\nu}$. By using the explicit formula of {BS} in~\eqref{espU1}, it holds that
\begin{equation}
    \tilde{M}_{k,m}=\sqrt{\frac{\nu^k}{(\nu+1)^{k+1}}} \sum_{l=0}^\infty\bra{m}_E U_{\lambda}^{(SE)}\ket{l}_S\ket{k}_E\bra{l}_S=\sqrt{\frac{\nu^k}{(\nu+1)^{k+1}}} \sum_{l=max(m-k,0)}^\infty c_m^{(l,k)}(\lambda) \ketbraa{l+k-m}{l}_S\,,
\end{equation}
\end{proof}

\begin{lemma_app}\label{lemmakraus2}
\emph{(Kraus representation of the thermal attenuator via master equation trick)}\\ For all $\lambda\in[0,1],\nu\ge 0$, the thermal attenuator $\Phi_{\lambda,\tau_\nu}$ admits the following Kraus representation:
\begin{equation}
    \Phi_{\lambda,\tau_\nu}(\rho)=\sum_{k,m=0}^\infty M_{k,m}\rho M_{k,m}^\dagger\,,
\end{equation}
where 
\bb
    M_{k,m}=\sqrt{\frac{\nu^k(\nu+1)^m(1-\lambda)^{m+k}}{k!m![(1-\lambda)\nu+1]^{m+k+1}}}(a^\dagger)^k\left(\frac{\sqrt{\lambda}}{(1-\lambda)\nu+1}\right)^{a^\dagger a}a^m\,.
\ee
\end{lemma_app}
\begin{proof}
It immediately follows by substituting $\lambda=\exp{(-t)}$ and~\eqref{fF_ambda} into~\eqref{master_sol}.
\end{proof}

\begin{remark}\label{lemmakraus3}
The Kraus representation derived via master equation trick in Lemma~\ref{lemmakraus2} can allow one to obtain significantly simpler expressions for the action of the thermal attenuator on a generic operator, than that obtainable by exploiting the Kraus representation derived via explicit formula of {BS} in Lemma~\ref{lemmakraus}. For example, let us calculate the action of the thermal attenuator on a operator of the form $\ketbraa{n}{i}$, where $\ket{n}$ and $\ket{i}$ are Fock states.
The Kraus representation derived via explicit formula of {BS} in Lemma~\ref{lemmakraus} easily yields:
\begin{equation}\label{expansion_ni}
        \Phi_{\lambda,\tau_\nu}(\ketbraa{n}{i})=\sum_{l=\max(i-n,0)}^\infty     f_{n,i,l}(\lambda,\nu)\ketbraa{l+n-i}{l}\,,
\end{equation}
where
\begin{equation}\label{compl_series}
    f_{n,i,l}(\lambda,\nu)\coloneqq \sum_{k=\max(l-i,0)}^\infty \frac{\nu^k}{(\nu+1)^{k+1}}c_{k+i-l}^{(n,k)}(\lambda) c_{k+i-l}^{(i,k)}(\lambda)\,,
\end{equation}
with the coefficients $c_m^{l,k}(\lambda)$ being defined in~\eqref{cijm}.
While, the Kraus representation derived via master equation trick in Lemma~\ref{lemmakraus2} yields the same expansion in~\eqref{expansion_ni} (as expected), but with a much simpler expression for the coefficients $f_{n,i,l}(\lambda,\nu)$ (see proof below):
\begin{equation}\label{simplified_compl_series}
    f_{n,i,l}(\lambda,\nu)=\sum_{m=\max(i-l,0)}^{\min(n,i)}\frac{\sqrt{n!i!l!(l+n-i)!}\nu^{l+m-i}(\nu+1)^m(1-\lambda)^{2m+l-i}\lambda^{\frac{n+i-2m}{2}}}{(n-m)!(i-m)!m!(l+m-i)!\left((1-\lambda)\nu+1\right)^{l+n+1}}\,.
\end{equation}
Hence, our trick allows one to simplify the expression of $\Phi_{\lambda,\tau_\nu}(\ketbraa{n}{i})$, by finding a closed formula for the sum of the complicated series in~\eqref{compl_series}.
\end{remark}
\begin{proof}[Proof of~\eqref{simplified_compl_series}]
Lemma~\ref{lemmakraus2} implies that $\Phi_{\lambda,\tau_\nu}(\ketbraa{n}{i})=\sum_{k,m=0}^\infty M_{k,m} \ketbraa{n}{i} M_{k,m}^\dagger\,,$. For $m>n$ it holds that $M_{k,m}\ket{n}=0$, otherwise for $m\le n$ it holds that
\begin{equation}
    M_{k,m}\ket{n}=\frac{1}{(n-m)!}\sqrt{\frac{n!(n-m+k)!}{k!m!}}\sqrt{\frac{\nu^k (\nu+1)^m (1-\lambda)^{k+m}\lambda^{n-m}}{\left((1-\lambda)\nu+1\right)^{k-m+1+2n}}}\ket{n-m+k}\,.
\end{equation}
Consequently, we conclude that
\bb
    \Phi_{\lambda,\tau_\nu}(\ketbraa{n}{i})&=\sum_{k=0}^{\infty}\sum_{m=0}^{\min(n,i)}\frac{\sqrt{n!(n-m+k)!i!(i-m+k)!}\nu^k(\nu+1)^m (1-\lambda)^{k+m}\lambda^{\frac{n+i-2m}{2}}}{(n-m)!(i-m)!k!m!\left((1-\lambda)\nu+1\right)^{k-m+1+n+i}}\ketbraa{n-m+k}{i-m+k}\\&=\sum_{l=\max(i-n,0)}^\infty \sum_{m=\max(i-l,0)}^{\min(n,i)}\frac{\sqrt{n!i!l!(l+n-i)!}\nu^{l+m-i}(\nu+1)^m(1-\lambda)^{2m+l-i}\lambda^{\frac{n+i-2m}{2}}}{(n-m)!(i-m)!m!(l+m-i)!\left((1-\lambda)\nu+1\right)^{l+n+1}}\ketbraa{l+n-i}{l}\,.
\ee
\end{proof}
}

\section{Example of noise attenuation protocol with one trigger signal}\label{example_noiseatt}
In this section we give a simple example where the noise attenuation protocol is advantageous. Let us suppose that the stationary environment state is the vacuum $\sigma_0=\ketbra{0}$. Hence, the original channel is the pure loss channel $\Phi_{\lambda,\ketbrasub{0}}$. The energy-constrained entanglement-assisted capacity of $\Phi_{\lambda,\ketbrasub{0}}$ is given by~\eqref{pureloss}.
In addition, let us suppose that Alice sends only one trigger signal initialised in the Fock state $\ketbra{n}_{S_1}$ during step~2 of the noise attenuation protocol. Consequently, at the beginning of step~3 the environment state is 
\begin{equation}\label{sigmaraggln}
\sigma_{\lambda,n}\coloneqq\tilde{\Phi}^{\text{wc}}_{\lambda,\ketbrasub{0}}\left(\ketbra{n}\right)=\Tr_E\left[U_\lambda^{(S_1E)}  \ketbra{n}\otimes\ketbra{0}  \left(U_\lambda^{(S_1E)}\right)^\dagger\right]\,.
\end{equation}
Moreover, suppose that the trigger signal and the information-carrier signal are separated by a temporal interval $\delta t\ll t_E$ (we recall that $t_E$ is the time after which the thermalisation process resets the environment state into the stationary environment state). Consequently, we can state that the information-carrier signal is affected by the channel $\Phi_{\lambda,\sigma_{\lambda,n}}$.
We want to see whether this simple application of the noise attenuation protocol implies the energy-constrained entanglement-assisted capacity of the resulting channel $\Phi_{\lambda,\sigma_{\lambda,n}}$ to be larger than that of the original channel $\Phi_{\lambda,\ketbrasub{0}}$.
By using~\eqref{formula_wc}, one obtains
\bb
\sigma_{\lambda,n}& =\sum_{l=0}^{n} \mathcal{B}_l(n,1-\lambda)\ketbra{l}\,,
\ee
where we have introduced the binomial distribution $\mathcal{B}_l(n,p)\coloneqq\binom{n}{l}p^{l}(1-p)^{n-l}$.
Since $\sigma_{\lambda,n}$ is diagonal in Fock basis, we can apply Theorem~\ref{lowcap}. Consequently, it holds that
\begin{equation}\label{forz}
C_{\text{ea}}\left(\Phi_{\lambda,\sigma_{\lambda,n}},N\right)\ge z(\lambda,N,n)\,,
\end{equation}
where
\bb\label{def_z}
z(\lambda,N,n)&\coloneqq g(N)+ H\left(\left\{\sum_{i=0}^{n}\mathcal{B}_i(n,1-\lambda)P_l(N,i,1-\lambda)\right\}_{l}\right)- \\&-H\left(\left\{\sum_{i=0}^{n}\mathcal{B}_i(n,1-\lambda)P_l(N,i,\lambda)\right\}_{l}\right)-H\left(\left\{\mathcal{B}_i(n,1-\lambda)\right\}_{i}\right)\,.
\ee
From Fig.~\ref{figure_example} one can note that for $N=0.5$ and for all the values of $n$ depicted it holds that:
$$C_{\text{ea}}\left(\Phi_{\lambda,\sigma_{\lambda,n}},N\right)\ge z(\lambda,N,n)\ge C_{\text{ea}}\left(\Phi_{\lambda,\ketbrasub{0}},N\right)\,$$
for $\lambda$ sufficiently small. As a result, for these parameter choices this basic version of the noise attenuation protocol improves the performance of the entanglement-assisted communication. 
\begin{figure}[t]
	\centering
	 {\label{figur:1}\includegraphics[width=0.5\linewidth]{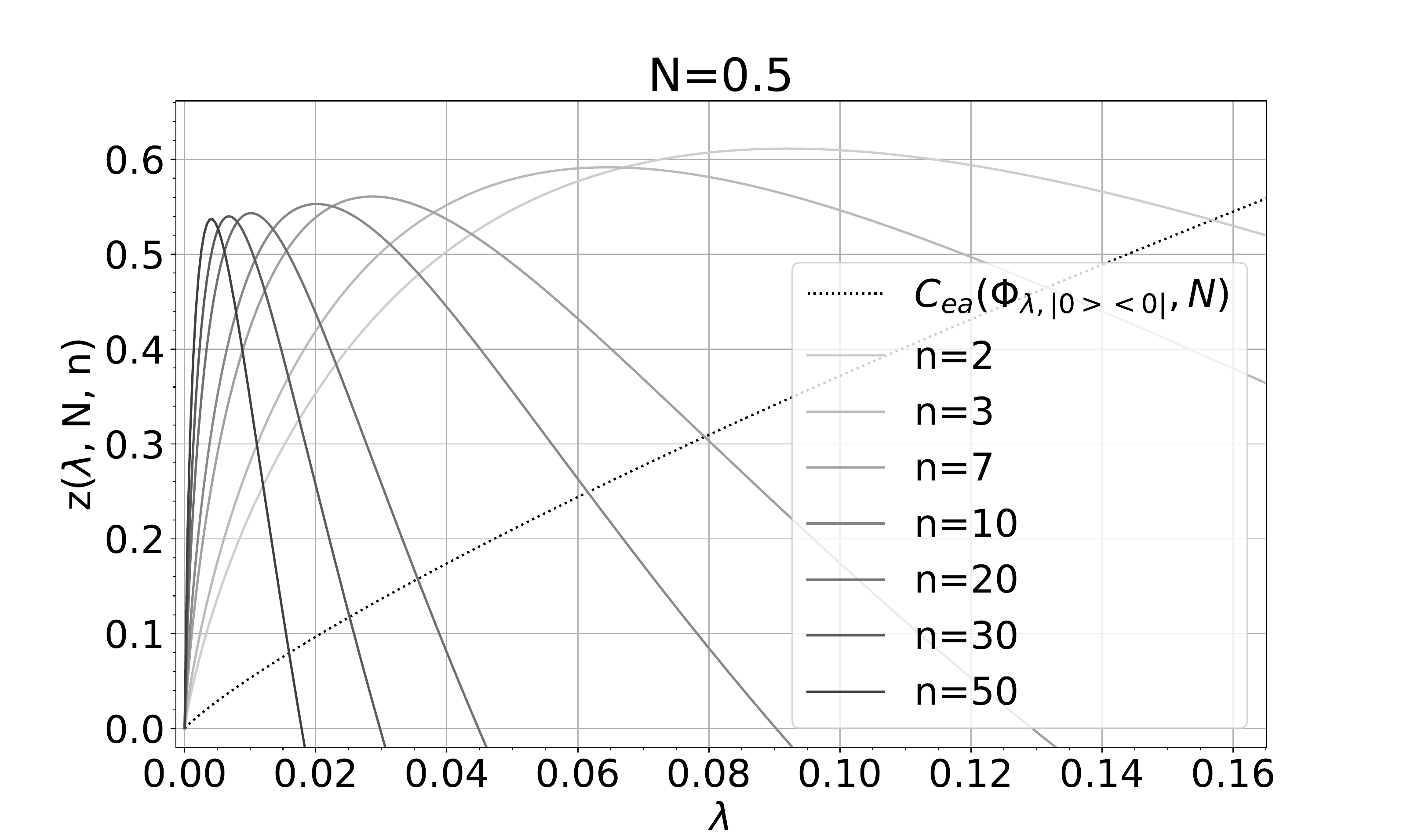}}
	 {\label{figur:2}\includegraphics[width=0.5\linewidth]{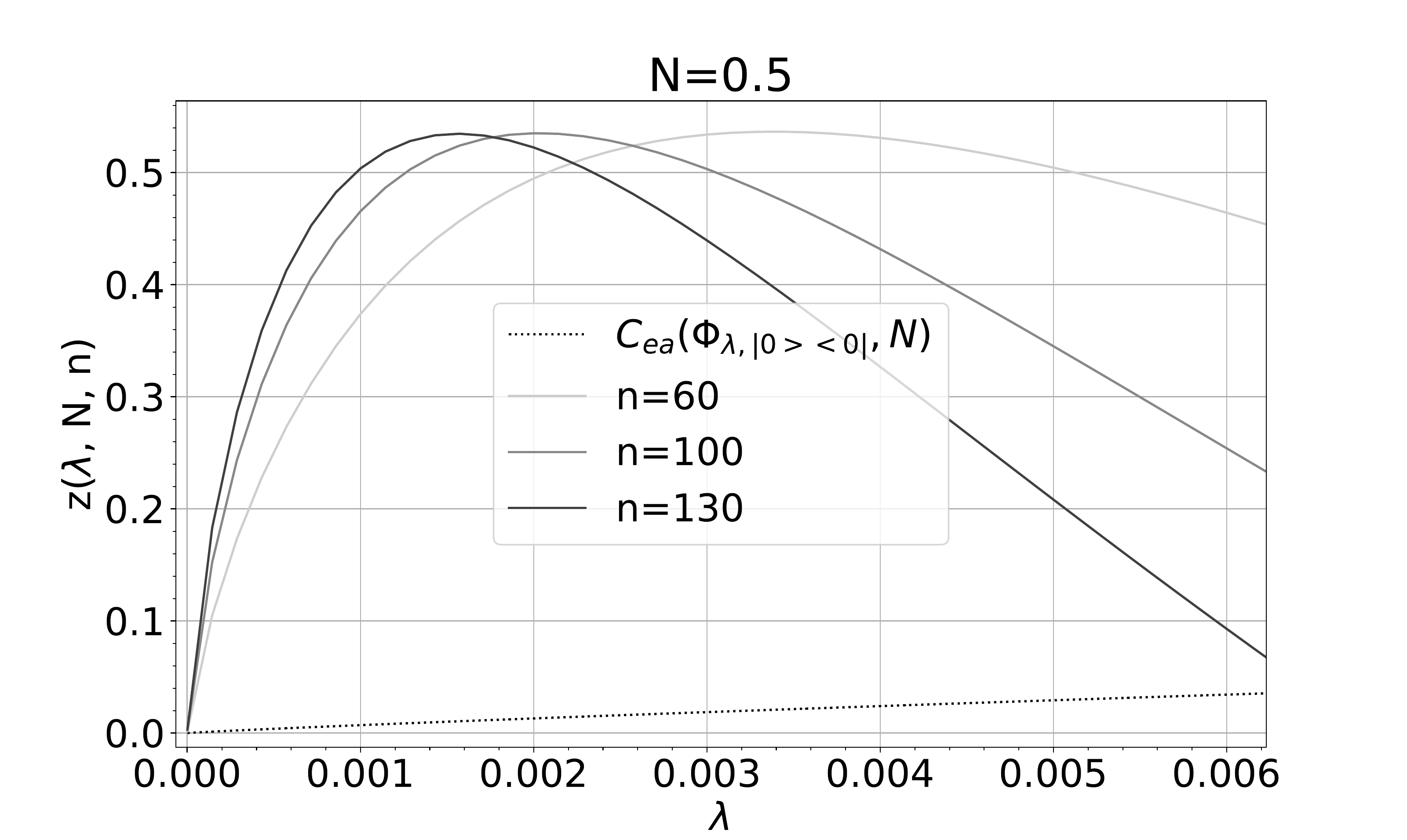}}
	\caption{The function $z(\lambda,N,n)$ plotted with respect to the variable $\lambda$ for $N=0.5$ and several values of $n$, where $z(\lambda,N,n)$ is defined in~\eqref{def_z}. The dotted line is $C_{\text{ea}}\left(\Phi_{\lambda,\ketbrasub{0}},0.5\right)$ as a function of $\lambda$ (see~\eqref{pureloss}).}
	\label{figure_example}
\end{figure}
Moreover, Fig.~\ref{figure_example} suggests also that for all $\lambda>0$ sufficiently small if $n$ is sufficiently large the quantity $C_{\text{ea}}\left(\Phi_{\lambda,\sigma_{\lambda,n}},N\right)$ is larger than a positive constant (independent of $\lambda$). This is quite interesting since the energy-constrained entanglement-assisted capacity of the original channel tends to $0$ when $\lambda$ approaches $0$. Therefore, this application of noise attenuation protocol improves significantly the performance of entanglement-assisted communication for $\lambda>0$ small. For instance, for $\lambda=0.002$ and $N=0.5$ one can numerically verify (by using~\eqref{forz} and~\eqref{pureloss}) that
$$\frac{C_{\text{ea}}\left(\Phi_{\lambda,\sigma_{\lambda,n=100}},N\right)}{C_{\text{ea}}\left(\Phi_{\lambda,\ketbrasub{0}},N\right)}\gtrsim 41\text{ .} $$

\section{Only one trigger signal}\label{Appendix_one_trigger}
In Theorem~\ref{theorem_lambda_small} we have discussed how to achieve an environment state close to $\ketbra{n_\lambda}$ for $\lambda>0$ sufficiently small, where $n_\lambda\in\N$ with $1/\lambda-1\le n_\lambda\le 1/\lambda$. This is done by sending just two trigger signals initialised in the state $\ket{n_\lambda,\lambda}_{S_1S_2}=U_{\frac{1}{1+\lambda}}^{(S_1S_2)}\ket{0}_{S_1}\ket{n_\lambda}_{S_2}$. We recall that achieving the environment state $\ketbra{n_\lambda}$ is important since the channel $\Phi_{\lambda,\ketbrasub{n_\lambda}}$ has strictly positive quantum capacity for $\lambda\in(0,1/2)$.

Note that for $\lambda=0$ and for all stationary environment states $\sigma_0$, after the interaction with a trigger signal initialised in the $n$-th Fock state, the environment state becomes
\begin{equation}
\tilde{\Phi}_{\lambda=0,\sigma_0}^{\text{wc}}\left(\ketbra{n} \right)=\Tr_{S_1}\left[U_{\lambda=0}^{(S_1E)}\ketbra{n}_{S_1}\otimes\sigma_0\left( {U_{\lambda=0}^{(S_1E)}}\right)^\dagger \right]=\ketbra{n}_{E}\,.
\end{equation}
Furthermore, for $\lambda>0$ sufficiently small we still have
\begin{equation}
\tilde{\Phi}_{\lambda,\sigma_0}^{\text{wc}}\left(\ketbra{n} \right)=\Tr_{S_1}\left[U_{\lambda}^{(S_1E)}\ketbra{n}_{S_1}\otimes\sigma_0\left( {U_{\lambda}^{(S_1E)}}\right)^\dagger \right]\sim\ketbra{n}_{E}\,.
\end{equation}
Therefore, a natural question may arise: \emph{instead of sending two signals in $\ket{n_\lambda,\lambda}_{S_1S_2} $, why do not we send just one trigger signal in $\ket{n_\lambda}_{S_1}$?} Below, we answer this question. 

After the interaction with the trigger signal $\ket{n_\lambda}_{S_1}$, the environment achieves the state $\tilde{\Phi}_{\lambda,\sigma_0}^{\text{wc}}\left(\ketbra{n_\lambda}\right)$. Since the state $\ketbra{n_\lambda}_{S_1}$ depends on $\lambda$, it may be that $\tilde{\Phi}_{\lambda,\sigma_0}^{\text{wc}}\left(\ketbra{n_\lambda} \right)$ and $\ketbra{n_\lambda}$ are not close in the limit $\lambda\rightarrow0^+$. We can show that this is indeed the case if the stationary environment state $\sigma_0$ is a thermal state $\tau_\nu$, as in the usual scheme of an optical fibre.

First, notice that~\eqref{weak_formula},~\eqref{scambio}, and the fact that $\Phi_{\lambda,\ketbrasub{n_\lambda}}\left(\tau_\nu \right)$ is diagonal in Fock basis (as guaranteed by~\eqref{defP}) imply that
\begin{equation}
    \tilde{\Phi}_{\lambda,\tau_\nu}^{\text{wc}}\left(\ketbra{n_\lambda} \right)=\mathcal{V}\circ\Phi_{1-\lambda,\tau_\nu}\left(\ketbra{n_\lambda} \right)=\mathcal{V}\circ\Phi_{\lambda,\ketbrasub{n_\lambda}}\left(\tau_\nu \right)=\Phi_{\lambda,\ketbrasub{n_\lambda}}\left(\tau_\nu \right)\,.
\end{equation}
Second, by exploiting the results and the notations used in the proof of Theorem~\ref{congl0}, it holds that:
\bb
\lim\limits_{\lambda\rightarrow0^+}\left\|\tilde{\Phi}_{\lambda,\tau_\nu}^{\text{wc}}\left(\ketbra{n_\lambda} \right)-\ketbra{n_\lambda}\right\|_1&=\lim\limits_{\lambda\rightarrow0^+}\left\|\Phi_{\lambda,\ketbrasub{n_\lambda}}(\tau_\nu)-\ketbra{n_\lambda}\right\|_1 \\&=\lim\limits_{\lambda\rightarrow0^+}\left\|T_{-n_\lambda}\Phi_{\lambda,\ketbrasub{n_\lambda}}(\tau_\nu)T_{-n_\lambda}^\dagger-\ketbra{0}\right\|_1 \\&=\lim\limits_{n\rightarrow +\infty}\left\|T_{-n}\Phi_{\frac{1}{n},\ketbra{n}}(\tau_\nu)T_{-n}^\dagger-\ketbra{0}\right\|_1 \\&=\left\|\sum_{k=-\infty}^\infty q_k(\nu,1)\ketbra{k}-\ketbra{0}\right\|_1=2\left[1-q_0(\nu,1)\right] \\&=2\left[1-e^{-(2\nu+1)}I_{0}\left(2\sqrt{\nu(\nu+1)}\right)\right]\ne 0\,,\label{dist_weak_fock}
\ee
where the probability distribution $\{q_k(N,c)\}_{k\in\mathbb{Z}}$ is expressed in~\eqref{prinf}.

\eqref{dist_weak_fock} implies that the output environment state $\sigma_{\lambda,\nu}\coloneqq\tilde{\Phi}_{\lambda,\tau_\nu}^{\text{wc}}\left(\ketbra{n_\lambda} \right)$ is not close to $\ketbra{n_\lambda}$ even in the limit $\lambda\rightarrow0^+$. Hence, $\Phi_{\lambda,\sigma_{\lambda,\nu}}$ is not close to $\Phi_{\lambda,\ketbrasub{n_\lambda}}$ in energy-constrained diamond norm for $\lambda\rightarrow0^+$. Consequently, we can not apply Lemma~\ref{continuityBound} to conclude that the channel $\Phi_{\lambda,\sigma_{\lambda,\nu}}$ has strictly positive quantum capacity for $\lambda>0$ sufficiently small. This is the reason why the idea of sending just one trigger signal does not work. 

\end{document}